\documentclass[11pt, a4paper]{imsart}

\usepackage[margin=2.5cm]{geometry}
\RequirePackage{amsthm,amsmath,amsfonts,amssymb}
\RequirePackage[numbers]{natbib}
\usepackage{amsthm}
\usepackage[boxed, linesnumbered, noend, noline]{algorithm2e}
\usepackage{aliascnt}
\usepackage{hyperref}

\usepackage{cleveref}
\let\oldtheorem\newtheorem
\RenewDocumentCommand{\newtheorem}{s m o m O{}}{%
\IfBooleanTF{#1}%
{\oldtheorem{#2}{#4}}%
{\IfNoValueTF{#3}{\oldtheorem{#2}{#4}[#5]}%
{\newaliascnt{#2}{#3}%
\oldtheorem{#2}[#2]{#4}%
\aliascntresetthe{#2}}}}

\startlocaldefs
\theoremstyle{plain}
\newtheorem{axiom}{Axiom}
\newtheorem{claim}[axiom]{Claim}
\newtheorem{theorem}{Theorem}[section]
\newtheorem{lemma}[theorem]{Lemma}
\newtheorem{corollary}[theorem]{Corollary}

\theoremstyle{remark}
\newtheorem{definition}[theorem]{Definition}

\newtheorem{remark}[theorem]{Remark}

\input{insbox}

\usepackage{pgfplots}
\usepackage{tikz}
\usepackage{graphicx}
\usepackage{subcaption}
\usetikzlibrary{shapes,decorations,arrows,calc,arrows.meta,fit,positioning}
\usepgfplotslibrary{fillbetween}
\usepackage{graphicx}
\usetikzlibrary{decorations.pathreplacing, matrix}
\usepackage{tabularx}
\usepackage[figuresright]{rotating}
\usepackage{makecell}
\usepackage{mathrsfs}

\SetKwInput{KwData}{Input}
\SetKwInput{KwResult}{Output}

\crefname{claim}{claim}{claims}
\crefname{lemma}{lemma}{lemmas}
\crefname{axiom}{axiom}{axioms}
\crefname{algorithm}{algorithm}{algorithms}

\usepackage{tikz}
\usetikzlibrary{shapes,decorations,arrows,calc,arrows.meta,fit,positioning}
\usetikzlibrary{shadows,shadings,shapes.symbols, patterns}
\tikzset{
    double color fill/.code 2 args={
        \pgfdeclareverticalshading[%
            tikz@axis@top,tikz@axis@middle,tikz@axis@bottom%
        ]{diagonalfill}{100bp}{%
            color(0bp)=(tikz@axis@bottom);
            color(50bp)=(tikz@axis@bottom);
            color(50bp)=(tikz@axis@middle);
            color(50bp)=(tikz@axis@top);
            color(100bp)=(tikz@axis@top)
        }
        \tikzset{shade, left color=#1, right color=#2, shading=diagonalfill}
    }
}

\usepackage{thmtools}
\usepackage{thm-restate}

\newcommand\SPIV{{\tt SPIV}}
\newcommand\SPEX{{\tt SPEX}}

\newcommand\COMP{{\tt COMP}}
\newcommand\SCIENT{{\tt SCIENT}}

\newcommand{\vX}{\vec X}

\renewcommand{\epsilon}{\eps}

\newcommand\vY{\vec Y}

\newcommand\vm{\vec m}
\newcommand\vp{\vec p}

\def\vGamma{\vec{\Gamma}}

\newcommand\vW{\vec W}
\newcommand\vS{\vec S}

\renewcommand{\vec}[1]{\boldsymbol{#1}}

\newcommand\KL[2]{D_{\mathrm{KL}}\bc{{{#1}\|{#2}}}}

\newcommand\SIGMA{\vec\sigma}

\newcommand\fE{\mathfrak{E}}

\newcommand\cA{\mathcal{A}}

\newcommand\cF{\mathcal{F}}
\newcommand\G{\mathcal{G}}
\newcommand\cE{\mathcal{E}}

\newcommand\cN{\mathcal{N}}

\newcommand\cT{\mathcal{T}}

\newcommand\cM{\mathcal{M}}

\def\cE{{\mathcal E}}

\newcommand\vZ{\vec Z}

\newcommand\vk{\vec k}

\newcommand\eul{\mathrm{e}}
\newcommand\eps{\varepsilon}

\newcommand\NN{\mathbb{N}}

\newcommand\Erw{\mathbb{E}}
\newcommand{\vecone}{\vec{1}}

\newcommand{\Po}{{\rm Po}}
\newcommand{\Bin}{{\rm Bin}}
\newcommand{\Mult}{{\rm Mult}}

\newcommand\bc[1]{\left({#1}\right)}

\newcommand\cbc[1]{\left\{{#1}\right\}}
\newcommand\bcfr[2]{\bc{\frac{#1}{#2}}}

\newcommand\brk[1]{\left\lbrack{#1}\right\rbrack}

\newcommand\norm[1]{\left\|{#1}\right\|}
\newcommand\abs[1]{\left|{#1}\right|}

\newcommand\RR{\mathbb{R}}

\newcommand{\Whp}{W.h.p.}
\newcommand{\whp}{w.h.p.}

\newcommand\pr{\mathbb{P}} 
\renewcommand\Pr{\pr}

\newcommand{\Eig[1]}{\mathrm{Eig}^{\ast} (#1)}

\DeclareMathOperator*{\argmin}{arg\,min}

\newcommand{\ceil}[1]{\left\lceil#1\right\rceil}

 \def\G{{\vec G}}

\def\pr{{\mathbb P}}

\newcommand\SPOT{{\tt SPOT}}

\newcommand{\remove}[1]{}

\newcommand{\one}{V_1}

\newcommand\minf{m_{\mathrm{sc}}}
\newcommand\cinf{c_{\mathrm{sc}}}
\newcommand\mseed{m_{\mathrm{seed}}}
\newcommand\cseed{c_{\mathrm{seed}}}

\pgfplotsset{compat=1.14}

\usepackage{float}

\usepackage[colorinlistoftodos,prependcaption,textsize=small]{todonotes}
\usepackage{xcolor}

\newcommand{\invisible}[1]{}
\newcommand{\sss}{\scriptscriptstyle}
\endlocaldefs

\begin{document}

\begin{frontmatter}
\title{Algorithms for Threshold Group Testing}

\begin{aug}

\author[B]{\fnms{Amin}~\snm{Coja-Oghlan}\ead[label=e2a]{amin.coja-oghlan@tu-dortmund.de}\orcid{0000-0003-1331-9697}}
\author[A]{\fnms{Remco}~\snm{van der Hofstad}\ead[label=e1a]{r.w.v.d.hofstad@tue.nl}\orcid{0000-0003-1331-9697}}
\author[B]{\fnms{Lena}~\snm{Krieg}\ead[label=e2b]{lena.krieg@tu-dortmund.de}\orcid{0000-0002-5302-6955}}
\author[A]{\fnms{Noela}~\snm{Müller}\ead[label=e1b]{n.s.muller@tue.nl}\orcid{0009-0004-8182-311X}}
\author[A]{\fnms{Connor}~\snm{Riddlesden}\ead[label=e1c]{c.d.riddlesden@tue.nl}\orcid{0009-0003-4516-7349}}
\and
\author[B]{\fnms{Olga}~\snm{Scheftelowitsch}\ead[label=e2c]{olga.scheftelowitsch@tu-dortmund.de}}


\address[B]{TU Dortmund, Faculty of Computer Science\printead[presep={,\ }]{e2a,e2b,e2c}}
\address[A]{Eindhoven University of Technology, Department of Mathematics and Computer Science\printead[presep={,\ }]{e1a,e1b,e1c}}
\end{aug}

\begin{abstract}
We study the Threshold Group Testing (TGT) problem without a gap in the noiseless, non-adaptive setting, where the goal is to exactly recover a sparse binary vector from pooled test outcomes using as few tests as possible. In TGT, a test applied to a subset of items returns a positive outcome if the number of defective items in the subset reaches a prescribed threshold, and a negative outcome otherwise. Under the assumption of an analytic condition,  TGT has been shown to undergo a sharp information-theoretic phase transition for exact recovery on the class of constant-column test designs. In this paper, we develop an efficient inference algorithm that achieves exact recovery with high probability using the minimum number of non-adaptive tests that are needed for the constant-column design, thereby matching the information-theoretic threshold of a natural benchmark test design. Our approach is based on a spatially coupled test design and admits a significantly simpler analysis than existing algorithms for related group testing problems. In particular, unlike previous methods for binary group testing, our algorithm does not rely on the analysis of intricate weighted sums. This leads to a more straightforward proof technique, while still allowing near-optimal performance guarantees.
\end{abstract}

\begin{keyword}[class=MSC]
\kwd[Primary ]{68W20} 
\kwd{62B10} 
\kwd[; Secondary ]{68P30} 
\end{keyword}

\begin{keyword}
\kwd{Threshold Group Testing}
\kwd{Statistical Inference}
\end{keyword}

\end{frontmatter}


\tableofcontents
\section{Introduction}

Consider a collection of $n$ items, among which exactly $k$ are defective, where $n$ and $k$ are assumed to be large. The goal is to identify all defective items through a testing procedure. Dorfman \cite{dorfman_1943} in 1943 proposed to pool, or group, multiple items together to reduce the overall number of tests that have to be conducted to identify all defective items. He worked under the assumption that a test returns a positive outcome if at least one of its items is defective, and a negative outcome otherwise. This original setting is called \emph{classical or binary group testing}. 
Since the work of Dorfman, a multitude of different versions of group testing have emerged, such as quantitive group testing, where the tests reveal the exact number of defectives participating in it, or noisy group testing, where the tests undergo some noise channel before displaying their result (see \cite{AJS_book} and the references therein).

In 2006, Damaschke \citep{damaschke2006threshold} introduced a further natural generalisation of the classical framework, known as {\em threshold group testing} (TGT). Here, the test outcome depends on a prescribed threshold parameter $t$: a test is positive if at least $t$ items in the pool are defective, and negative otherwise.

In the present article, we study threshold group testing with a constant threshold $t$ and in the sparse regime where $k=n^\theta$ for $\theta \in (0,1)$. We propose a conceptually simple and flexible algorithm that, with high probability as $n \to \infty$, efficiently identifies all defective items while employing a number of tests that we conjecture to be asymptotically minimal.

\subsection{Main Result} \label{SSec:MainResult}

Let $n$ denote the total number of items, which are denoted by $x_1,\dots,x_n$. Each item $x_i$ is associated with a binary label $\SIGMA(x_i)\in\{0,1\}$\footnote{Throughout the paper, we use boldface notation to denote random quantities.}, and we write $\SIGMA=(\SIGMA(x_1),\dots,\SIGMA(x_n))$ for the resulting label vector. 
Items with label $1$ are referred to as \emph{defective}, and we also call $\SIGMA$ the \emph{ground truth}. Additionally we  denote $V_1$ as the set of defective items and $V_0$ as the set of non-defective items.

We focus on a sparse setting where $k$ is known. 
More precisely, we assume that 
$$k=\lfloor n^\theta\rfloor \quad\text{for some fixed }\theta\in(0,1),$$
and that $\SIGMA$ is drawn uniformly at random from all binary vectors of length $n$ with Hamming weight $k$. This can, e.g.,  be achieved by performing a random permutation of the items.

A \emph{pooling scheme} on the items $x_1,\dots,x_n$ consists of a collection of tests $a_1,\dots,a_m$.
Each test $a_j$ corresponds to a multiset $\partial a_j$ of items.
We represent such a scheme by a bipartite multi-graph $\G$, whose first vertex set corresponds to the items and whose second vertex set corresponds to the tests (see Figure~\ref{fig:example}). 

Each pool $a_j$ produces a binary outcome $\hat\SIGMA_j$. Given a fixed threshold $t$, this outcome equals $1$ if at least $t$ of the items contained in $a_j$ are defective (counted by multiplicities), and equals $0$ otherwise. Collecting these outcomes over all pools yields the measurement vector $\hat\SIGMA_\G=(\hat\SIGMA_1,\dots,\hat\SIGMA_m)$.

\begin{figure}[ht]
\centering
\begin{tikzpicture}[scale=0.65]
\node[circle, draw, minimum width=0.66cm] (x0) at (0, 0) {$1$};
\node[circle, draw, minimum width=0.66cm] (x1) at (2,0) {$1$};
\node[circle, draw, minimum width=0.66cm] (x2) at (4, 0) {$0$};
\node[circle, draw, minimum width=0.66cm] (x3) at (6, 0) {$0$};
\node[circle, draw, minimum width=0.66cm] (x4) at (8, 0) {$1$}; 
\node[circle, draw, minimum width=0.66cm] (x5) at (10, 0) {$0$};
\node[circle, draw, minimum width=0.66cm] (x6) at (12, 0) {$0$};

\node[rectangle, draw, minimum width=0.5cm, minimum height=0.5cm] (a1) at (0, -2.5) {$0$};
\node[rectangle, draw, minimum width=0.5cm, minimum height=0.5cm] (a2) at (3,-2.5) {$1$};
\node[rectangle, draw, minimum width=0.5cm, minimum height=0.5cm] (a3) at (6, -2.5) {$1$};
\node[rectangle, draw, minimum width=0.5cm, minimum height=0.5cm] (a4) at (9, -2.5) {$0$};
\node[rectangle, draw, minimum width=0.5cm, minimum height=0.5cm] (a5) at (12, -2.5) {$0$};

\path[draw] (x0) -- (a1);
\path[draw] (x0) -- (a2);
\path[draw] (x0) -- (a3);
\path[draw] (x3) -- (a1);
\path[draw] (x1) -- (a3);
\path[draw] (x2) -- (a1);
\path[draw] (x2) -- (a3);
\path[draw] (x2) -- (a2);
\path[draw] (x3) -- (a3);
\path[draw] (x3) -- (a4);
\path[draw] (x3) -- (a5);
\path[draw] (x4) -- (a2);
\path[draw] (x4) -- (a5);
\path[draw] (x4) -- (a4);
\path[draw] (x5) -- (a2);
\path[draw] (x5)-- (a4);
\path[draw] (x6) -- (a3);
\path[draw] (x6) -- (a5);
\path[draw] (x6) -- (a4);
\end{tikzpicture}
\caption{Graphical representation of a pooling scheme for $t=2$: Here, the $n=7$ items are represented by circles, while the $m=5$ pools are represented by rectangles.
The label of an item represents its actual label, while the label of a pool represents the output of its measurement.}
\label{fig:example}
\end{figure}

For non-adaptive binary group testing ($t=1$), the information theoretic lower bound has been established in \cite{aco_2019} as $$m_{\mathrm{inf}}(n,\theta,1)= \max\cbc{\frac{\theta}{(1-\theta)\log^22}, \frac{1}{\log 2}} n^\theta \log \frac{n}{k}.$$ Additionally there are efficient algorithms for binary group testing that use the information-theoretically minimal number of tests.
By adding each item $t$ times to each test it participates in, we see that threshold group testing with a constant threshold $t$ can be reduced to binary group testing. This implies that for any $t$, there are a test design and an efficient algorithm that achieve exact recovery on $(1+\eps)m_{\mathrm{inf}}(n,\theta,1)$ tests. This reduction is, however, only possible when defectives are counted by multiplicities. When this is not the case, TGT may require more tests.

Our main result provides a generalisation of \cite{coja_spiv} for $t>1$ by providing an efficient  algorithm that recovers the ground truth correctly for any $m$ slightly above $\minf(n,\theta, t)$, where $\minf(n,\theta, t)$ is defined as follows: 
For $d>0$, let
\begin{align}\label{eq:cs_inf}
    c_1(d) = \frac{1}{H(\Pr(\Po(d)\leq t-1))} \quad \text{and} \quad c_2(d, \theta) = \frac{\theta}{1-\theta} \bcfr{1}{-d\log(1-\Pr(\Po(d)=t-1))} \, .
\end{align}   
Here $H$ denotes the binary entropy function, while $\Po(d)$ is a Poisson-distributed random variable with mean $d$. 
Using these two functions, we define 
\begin{align}\label{eq:m_inf}
\minf(n,\theta, t)&= \cinf(\theta,t) \, k \log\bc{n/k},
\end{align}
where 
\begin{align} \label{eq:cinf}
    \cinf(\theta, t) = \inf_d \max \{c_1(d), c_2(d,\theta)\}.
\end{align}

\begin{figure}[H]
    \centering
    \includegraphics[width=1\linewidth]{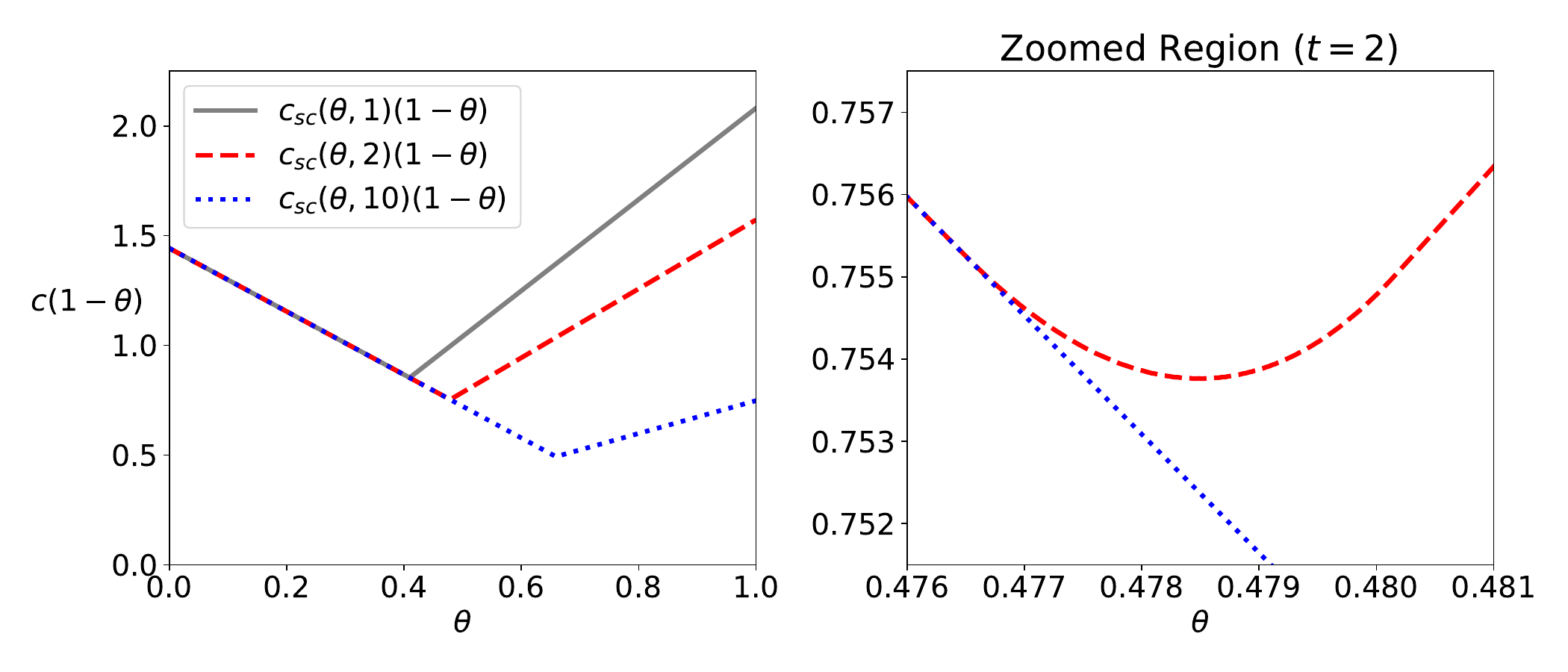}
    \caption{The constant $\cinf$ of visualised for $t=1, t=2, t=10$.}
    \label{fig:thresh}
\end{figure}

\begin{restatable}[]{theorem}{thm_alg_ex}
\label{thm_alg_ex}
	For any $0<\theta<1$, $\eps>0$ and $t\in \NN^+$ fixed, there exists $n_0=n_0(\theta,t,\eps)$ such that, for every $n>n_0$, there exist a randomised test design $\G$ with $m\leq(1+\eps)\minf(n,\theta,t)$ tests and a deterministic polynomial-time inference algorithm $\SPOT$ for which
	\begin{align}\label{eq_alg_ex}
	\pr\brk{\SPOT(\G,\hat{\SIGMA}_\G)=\SIGMA}>1-\eps.
	\end{align}
\end{restatable}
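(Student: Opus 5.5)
We follow the spatially coupled strategy of \SPIV\ \cite{coja_spiv}, adapted to threshold $t$. Fix $d$ attaining (up to $\eps$) the infimum in \eqref{eq:cinf}, and let $c=\cinf(\theta,t)+\eps$ and $m=c\,k\log(n/k)$.

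\textbf{Test design.} Split the items into $\ell$ compartments $V[1],\dots,V[\ell]$ of equal size. Split the tests into $\ell+s-1$ compartments $F[1],\dots,F[\ell+s-1]$ arranged on a line, where $s\ll\ell$ are large constants depending on $\eps$. Each item in $V[i]$ joins $\Delta=cd\log(n/k)\cdot(1+o(1))$ tests, chosen uniformly from $F[i],\dots,F[i+s-1]$ (without multiplicity). Then each test contains about $\Po(d)$ defectives. In addition, a small number $\eps m$ of seed tests, with a stronger design, let us identify the items of the first $s$ compartments directly. We use $t$ times a constant-column design of higher density; any design for which a simple threshold/DD-type decoder succeeds with $O(k\log n/\ell)$ tests suffices.

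\textbf{Decoding.} Proceed compartment by compartment. Suppose all items of $V[1],\dots,V[i-1]$ are already correctly classified. For $x\in V[i]$, look at the tests of $x$ in $F[i],\dots,F[i+s-1]$. In each such test subtract the known defectives from earlier compartments. The key statistic is the number $W_x$ of tests in which $x$ is \emph{pivotal-consistent}: the test is positive while the other items, excluding those from compartments not yet decoded, show exactly $t-1$ defectives in expectation. Since later compartments are unknown, we instead use the count of positive tests of $x$ weighted by window position. We then threshold $W_x$ at a level between the typical values for defectives and non-defectives. A final cleanup step compares each item's tests against $\Pr(\Po(d)=t-1)$-type local statistics.

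\textbf{Analysis.} We compute, via Chernoff bounds on binomials, the distributions of $W_x$ for $x\in V_1$ and $x\in V_0$ given correct earlier decisions.
\begin{itemize}
\item A non-defective is misclassified only if many of its tests have $\ge t$ defectives when its neighbours in the window number $t-1$. The number of non-defectives is $n$, so we need a large-deviation probability of $o(1/n)$ relative to $n/k$. This is exactly where $c>c_2(d,\theta)$ enters: the exponent $-cd\log(1-\Pr(\Po(d)=t-1))\cdot\log(n/k)>\frac{\theta}{1-\theta}\log(n/k)\approx \log k$.
\item The condition $c>c_1(d)$ guarantees, via the entropy count, that test outcomes are informative enough for the seed and cleanup stages.
\item Union bounds over compartments give $o(1)$ total error, and induction over $i$ propagates correctness.
\end{itemize}

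\textbf{Main obstacle.} The hardest step will be the large-deviation estimate for non-defectives: the threshold must separate the two hypotheses at exactly the rate $c_2$, while the unknown later compartments add noise. Controlling these dependencies across overlapping windows, without the weighted sums of \SPIV, is what requires care. First we will try to show that, conditioned on the earlier compartments, the test loads are concentrated and nearly independent Poisson variables.
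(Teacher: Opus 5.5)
\textbf{There is a genuine gap.} It starts with the fact that you attach the two constraints to the wrong mechanisms.

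The exponent $-cd\log(1-\pr(\Po(d)=t-1))\log(n/k)$ in your first bullet is minus the log-probability that a \emph{defective} item lies in no test containing exactly $t-1$ other defectives. Requiring $(1-P_{t-1}(d))^{\Delta}=o(1/k)$, for a union bound over $V_1$, is precisely $c>c_2(d,\theta)$. In the paper this is what feeds the cleanup: every defective has at least $\sqrt\Delta$ pivotal tests (\Cref{Lem_distphixstar}). It says nothing about non-defectives.

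Conversely, $c>c_1(d)$ is what controls non-defectives in the compartment-wise phase. The seed uses $o(m)$ tests and needs neither constraint. The paper uses the full vector of counts $\vW^{r,\pm}_{x,j}$, split by window position $j$ \emph{and} by the number $r$ of already-identified defectives in the test. A non-defective then passes all $2ts$ acceptance conditions with probability $\exp(-(1+o(1))\Delta\cM_s)$, where $\cM_s=\frac1s\sum_j\KL{q_{1,j}}{q_{0,j}}$. The antiderivative identity of \Cref{claim:antideriv} shows $\cM_s\to H(\pr(\Po(d^*)\ge t))/d^*$. This yields a false-positive probability of $(k/n)^{1+\eps}$ exactly when $c\ge(1+\eps)c_1(d^*)$.

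Your statistic, positive tests weighted only by window position, discards the $r$-split. Nothing in the proposal shows it attains this entropy rate; only a log-likelihood weighting over all $(j,r,\pm)$ classes, or the paper's coordinatewise comparison, would. Your ``main obstacle'' (separating at rate $c_2$) is therefore aimed at the wrong quantity.

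Second, at a budget of $(1+\eps)\minf$ the compartment-wise phase cannot be exact. So ``induction over $i$ propagates correctness'' and ``union bounds give $o(1)$ total error'' both fail. A false-positive rate of $(k/n)^{1+\eps}$ gives about $k(k/n)^{\eps}=o(k)$ mistakes, but not zero. A union bound over the $n$ non-defectives would need $\Delta\cM_s>\log n$, i.e.\ $c>c_1(d)/(1-\theta)\ge 1/((1-\theta)\log 2)$. This is strictly more than $\cinf$, at least for small $\theta$.

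Two ingredients are missing:
\begin{enumerate}
\item[(i)] The phase must run on the decoder's own estimates $\tau$, with slack in the acceptance windows ($2\zeta$ for non-defectives, $\zeta/2$ for defectives). It also needs an expansion property of $\G_{\mathrm{sc}}$ (\Cref{lemma_endgame_misclassified}): only $|T|/(8\log\log n)$ items have their counts perturbed by a small set $T$ of earlier mistakes. Together these keep the total error at $o(k)$.
\item[(ii)] An iterative cleanup is needed, with threshold $\log^{1/4}n$ on positive tests containing exactly $t-1$ other estimated defectives. In it, every still-misclassified item must share at least $\log^{1/4}n$ tests with the previous error set; for defectives this uses the $\sqrt\Delta$ pivotal tests, i.e.\ $c>c_2$. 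The same expansion property then shrinks the error set by a factor $3$ per round, over $\lceil\log n\rceil$ rounds.
\end{enumerate}
Your one-line cleanup step contains neither the pivotal-test bound nor this contraction argument. As written, the proof does not reach exact recovery with $(1+\eps)\minf$ tests.
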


Somewhat surprisingly, \Cref{thm_alg_ex}  and \Cref{fig:thresh} show that there exist values of $\theta$ for which threshold group testing admits exact efficient recovery \emph{with strictly fewer tests than are required in the binary case}:  
The grey line in \Cref{fig:thresh} represents $t=1$. For $t=1$, it is also known  \cite{coja_spiv} that $\minf(n,\theta,1)$ is the information-theoretically optimal number of tests.
For larger values of $t$ and $\theta$, the red (dashed, $t=2$) and blue (dotted, $t=10$) lines fall below the grey line, showing that efficient inference on  a lower number of tests is possible. This complements the results of \cite{vdH2026TGTinfo}, where it was shown that conditionally on an analytic identity, and on the slightly different class of constant-column test designs, TGT undergoes a sharp information-theoretic phase transition for exact recovery at $\minf(n,\theta,t)$.

A further distinction is that the optimal value of $d$ is independent of $\theta$ for $t=1$, and we see a sharp edge between high $\theta$ and low $\theta$ regime. In contrast to this, for $t\geq 2$, the optimal $d(\theta)$ is {\em different} for different $\theta$, leading to a smooth curve instead of a sharp edge, as seen in \Cref{fig:thresh} (right side), even though the curve looks sharp in the figure for the full range $\theta\in[0,1]$.

The \emph{Sp}atially-Coupled \emph{O}utlier \emph{T}esting (\SPOT) algorithm from  \Cref{thm_alg_ex} operates on a test (or factor graph) design called \emph{spatial coupling}, which has successfully been used for efficient recovery in other instances of group testing. 
The design, which we will formally introduce in \Cref{SSec:SpatCoup}, will create a ring-like chain of dependencies. Given this test design, \SPOT\ proceeds in three phases: First, using a relatively large number of tests on a small proportion of the items (the \emph{seed}), it produces an estimate of the seed labels that is already correct \whp\ Next, using the estimates labels of the seed as a kick-start, it generates an estimate of the labels of the remaining items that may contain up to $o(k)$ errors. Finally, it goes through a \emph{cleaning phase} during which all previous misspecifications are removed. 

At a first look,  \SPOT\ 
proceeds in a similar vein to the information-theoretically optimal algorithm \SPIV\ for classical binary group testing, as it is based on a spatially coupled test design, an approximate recovery and a cleaning phase. However, previous spatially coupled algorithms for group testing, such as \SPIV\ and \SPEX\ for binary and \SCIENT\ for quantitative group testing, rely on forming {\em weighted sums} of test outcomes as a score function in the second stage to obtain an approximate solution \cite{coja_spiv, coja2025noisy, hahnklimroth_QGT}. This procedure simulates one step of a message passing algorithm called {\em belief propagation}, where the messages from the items to the tests consist of weighted sums. This algorithm is generally only known to work on problems where the underlying structure is a tree, so that additional calculations are needed to justify its use, as well as a calculation of the optimal weights discussed above.

The main novelty of our approach is that it avoids such weighted sums altogether. Instead, we analyse the distributions of individual test outcomes, conditioned on the test outcome and the number of items already classified in preceding compartments. 
This perspective significantly simplifies the analysis and enables a transferral of the overall strategy to more complex settings.

\paragraph{\bf Applications of Threshold Group Testing.}
Threshold group testing provides a modelling framework for problems where individual contributions are weak and only detectable once a threshold is reached. In healthcare, for example, conditions such as a stroke are assessed using a decision threshold to determine when to take action \cite{gage2001validation,malioutov2017learning}. TGT also captures a class of graph discovery problems in which the goal is to uncover a hidden graph, as shown by Chen and Fu \cite{chen2009nonadaptive}. Each test corresponds to selecting a subset of edges, and the only feedback is whether it contains at least $t$ edges from the hidden graph.
In communication systems, such pooling designs have been used to detect jamming attacks in multi-radio wireless sensor networks \cite{shin2009reactive}, as well as in multiple-packet reception settings, where successful detection depends on the total signal strength rather than individual transmission \cite{chan2012carrier,georgiadis1982collision,ghez1989optimal}.

\subsection{Previous work}
This subsection provides a short overview of the literature that is closest to our regime and of algorithmic tools that we use.

\paragraph{Deterministic recovery.} 
The worst-case version of the threshold group testing problem, where the goal is to correctly identify the ground truth with probability exactly equal to one, has been studied since the work of Damaschke \cite{damaschke2006threshold}. 
For this problem, De Marco et al. \cite{demarco2020subquadratic} proved an upper bound on the number of pools of $O((k+t^2)\log n)$ in the non-adaptive setting, where $t$ is a constant threshold and $k$ is in the sparse regime\footnote{Throughout this work, asymptotic notation is interpreted in the limit as $n \rightarrow \infty$. Specifically, $o(1)$ denotes any term tending to zero, while $\omega(1)$ denotes any term diverging to $\infty$. We write $f(n) = O(g(n))$ to denote that there exists a constant $b > 0$ and an integer $n_0 \in \mathbb{N}$ such that $f(n) \leq b \cdot g(n)$ for all $n \geq n_0$. Similarly, $f(n) = \Omega(g(n))$ indicates that there exists $b > 0$ and $n_0 \in \mathbb{N}$ such that $f(n) \geq b \cdot g(n)$ for all $n \geq n_0$. Finally, $f(n) = \Theta(g(n))$ indicates that $f(n) = O(g(n))$ and $f(n) = \Omega(g(n))$ simultaneously hold.}.
Most recently, Bui et al.~\cite{bui2024efficient} provided the first efficient algorithms for non-adaptive and adaptive threshold group testing with general thresholds. Their results establish explicit upper bounds on the number of tests required for deterministic exact recovery in the worst-case setting, with polynomial-time decoding. In particular, for the non-adaptive case, they obtain bounds of order $O\!\left(k (k-t)^2 \log(n/k)\right)$ tests for general thresholds, with improved guarantees in special cases such as $t=2$ or when allowing multiple stages. While these results provide upper bounds on the number of pools required to infer the ground truth, they focus on the zero-error setting rather than recovery \whp 

\paragraph{Recovery \whp} Recovery \whp\ refers to a setting where the probability of errors vanishes as $n\rightarrow \infty$. When analysing recovery \whp, we assume that the number of defective items is known and that the true label vector is chosen uniformly among all vectors of this Hamming weight. 
As $n$ tends to infinity, the probability of exact recovery given the set number of tests should tend to one.
Key results include the work of Chan et al. and Reisizadeh et al.\ \cite{chan2013stochastic, reisizadeh2018sub}. Chan et al.\ propose a two-round procedure using $O(k \log n)$ pools, but with large constants involving an error term denoted by $\epsilon$ in terms of which the number of pools needed grows like $16e^2 k \log n + O\left( \log (1/\epsilon) k  \right)$. Reisizadeh et al.\ provide a single-round, non-adaptive algorithm using $O(k \sqrt{t} \log^3 n)$ pools, but it requires a large look-up matrix containing $O(t \log n) \times \tbinom{n}{t}$ elements.

Recent work by van der Hofstad et al. \cite{vdH2026TGTinfo} has established that for $t=2$, and $t\geq 3$ under an additional analytic assumption, there exists a sharp information-theoretic threshold for threshold group testing, at $m_{\mathrm{inf}}$
for the constant column test design, where each item joins exactly $\Delta$ tests that are chosen uniformly at random. 
The information-theoretic result \cite{vdH2026TGTinfo} yields a pooling scheme on $(1+\eps) \minf(n,\theta, t)$ such that \whp{} there is exactly one defective set that matches the test results. Therefore, a naive algorithm to find the ground truth would be to go through all $\tbinom{n}{k}$ such vectors and see which one explains the measurements, thus producing an algorithm that has time-complexity exponential in $k \log n$. 

To obtain a polynomial-time algorithm, different approaches have been proposed, alas at the expense of a greater number of measurements. As noted, the algorithms by Chan et al.\ and Reisizadeh et al.\ \cite{chan2013stochastic, reisizadeh2018sub} rely either on multiple stages, large constants, or poly-logarithmic factors exceeding the sharp threshold $\minf(n,\theta, t)$.

In this paper, we present the polynomial-time algorithm \SPOT\ that \whp\ recovers the ground truth with the optimal number of tests $\minf(n,\theta,t)$ for the constant column design. 
However, since the lower bound has been proven only for the constant column design,
it remains an open question whether our algorithm indeed is optimal with respect to arbitrary test designs.

Shortly after the appearance of the present article, Tran, McMorrow and Scarlett \cite{selectable_2026} introduced a more general variant of TGT, termed group testing with selectable thresholds. In their model, each test $a$ can be assigned its own threshold $t_a$, while TGT is recovered by choosing $t_a=t$ for all tests. As in the present work, \cite{selectable_2026} studies recovery \whp\ using non-adaptive test designs. 
The authors consider both bounded and unbounded maximum thresholds, and propose several algorithms that first identify an almost complete set of defectives using standard group testing queries and then recover the remaining defectives through additional tests with different thresholds. When restricted to the case $t_a=t$ for all tests, their converse Theorem 5 establishes an information-theoretic lower bound for a broader class of test designs than the one of \cite{vdH2026TGTinfo}. For sufficiently large $\theta$, this bound matches the lower bound of \cite{vdH2026TGTinfo}. While the results of \cite{selectable_2026} do not fully  resolve the question whether the lower bound from \cite{vdH2026TGTinfo} is tight for arbitrary test designs, they provide further evidence towards this claim.


\subsection{Spatial Coupling} \label{SSec:SpatCoup}

Spatial Coupling is a technique used in coding theory, particularly LDPC codes \cite{Kudekar_2010_2,Kudekar_2011,kudekar_2013} and has previously been successfully used in group testing \cite{coja_spiv,coja2025noisy}. 
Recall that we have $n$ items $x_1, \dots, x_n$ and let us denote the set of these items as $V = \{x_1, \dots, x_n\}$. 
Our aim is to build a pooling scheme which provides a spatial dependence between items and pools that aids our algorithm. 

\subsubsection{Parameters}
Let $d^*$ be the unique minimiser of \eqref{eq:cinf} (see \cite[Lemma 2.4]{vdH2026TGTinfo}) and $d'$ be such that $\pr\bc{\Po(d')\leq t-1} = 1/2$. For positive constants $c, c' > 0$ we choose
\begin{align}\label{Eq:Param}
m = c k \log(n/k), \quad \quad \quad \quad \quad \quad \Delta = c d^* \log(n/k),
\end{align}
as well as
\begin{align}\label{Eq:Param_seed}
\mseed = c' k \log n, \quad \quad \quad \quad \quad \quad \Delta' = c' d' \log n.
\end{align}
The parameters $c,c'$ control the numbers of pools in the spatially coupled test design and a so-called \emph{seed} segment, respectively, while $d^*, d'$ calibrate the item degrees accordingly. 

\subsubsection{Construction of the Pooling Scheme}
The following description of the spatially coupled test design follows the description of \cite{coja_spiv}. To begin, we partition the items in $V$ into 
\begin{align}\label{eq:ell}
    \ell = \ceil{\sqrt{\log n }}
\end{align} 
\emph{compartments} $V[1], \ldots, V[\ell] \subset V$ of (almost) equal size $|V[i]| \in \{\lfloor n/\ell \rfloor, \lceil n/\ell \rceil\}$. The partitioning of items into the compartments induces a partitioning of defective items as well, which we will denote $V_1[1], \ldots, V_1[\ell]$.
Similarly, for a given number of tests $m$ divisible by $\ell$, we take $m$ pools and partition them into $\ell$ compartments such that each of the compartments contains exactly $m/\ell$ pools. 
The $\ell$ pool compartments are denoted by $F[1], \ldots, F[\ell]$. 
The compartments introduce a natural order to stacks of items and tests, and moreover are arranged in an overlying ring structure, so that compartments $V[\ell]$ and $F[\ell]$ are followed by $V[1]$ and $F[1]$. 

Further, let \begin{align*}
    s = \lceil \log\log n \rceil
\end{align*}
 and $\Delta = \Theta(\log n)$ be as in \eqref{Eq:Param} such that $\Delta$ is divisible by $s$. 
Then each item $x\in V[i]$ independently joins $\Delta/s$ uniformly chosen pools from each of the test compartments $F[i-j-1]$ for $1\leq j \leq s$, where pools are chosen with replacement. Therefore, $s$ is also called the {\em sliding window}. 

Finally, we add an additional test compartment $F[0] = \{a'_1,\ldots, a'_{\mseed}\}$ of size $\mseed=o(m)$, whose tests are only connected to the first $s$ compartments. 
Each item in $V[1], \dots, V[s]$ then independently joins $\Delta'= \Theta(\log n)$ tests of $F[0]$ that are chosen uniformly at random and with replacement.

In the following, we denote the \emph{random bipartite graph} arising from this construction by $\vec G_{\mathrm{sc}}$.  
Locally, $\vec G_{\mathrm{sc}}$ looks almost like a random constant column graph, while globally, it possesses a ring structure with interlocking edges between compartments, as visualised in \Cref{Fig_spatial_coupling_idea}. Note that $\vec G_{\mathrm{sc}}$ may contain multi-edges as items sample tests with replacement.

Turning to inference algorithms on $\vec G_{\mathrm{sc}}$, the first $s$ compartments will form the seed $V_{\mathrm{seed}}=V[1] \cup \cdots \cup V[s]$, and be inferred using the tests from $F[0]$ only. 
After obtaining the seed labels, the partitioning of items and pools will allow us to successively infer the labels of compartments $V[i]$ for $i \in \{s+1, \ldots, \ell\}$, using the information of previous compartments along the way.

\begin{figure}[ht!]
\begin{tikzpicture}[scale=0.85]

\foreach \i in {1,...,12}
{
        \def\lab{x_\i};
        \node[circle,draw=black!60!green,fill=green!30,minimum size=1] (\lab) at (0.4*\i,0) {};
}
\foreach \i in {13,...,24}
{
        \def\lab{x_\i};
        \node[circle,draw=black, color=blue,minimum size=1, fill=blue!30] (\lab) at (0.4*\i,0) {};
}
\foreach \i in {25,...,36}
{
        \def\lab{x_\i};
        \node[circle,draw=black!60!green, fill=green!30,minimum size=1] (\lab) at (0.4*\i,0) {};
}
\foreach \i in {0,...,9}
{
        \def\x{4*\i};
        \draw[dashed] (0.4*\x+0.2,0.5) -- (0.4*\x+0.2,-2.5);
}
\foreach \i in {1,...,9}
{
        \def\labx{c_\i};
        \def\labaone{done_\i};
        \def\labatwo{dtwo_\i};
        \pgfmathsetmacro{\xcoord}{0.4*(\i+1.5)+3*0.4*(\i-1))};

        \coordinate (\labx) at (\xcoord,-0.3);
        \coordinate (\labaone) at (\xcoord-0.25,-1.4);
        \coordinate (\labatwo) at (\xcoord+0.25,-1.4);
}

\foreach \j in {2,3,4,5,6,7,8,9,10}{
    \pgfmathsetmacro{\ione}{2*\j-1};
    \pgfmathsetmacro{\itwo}{2*\j};
    \def\laba{a_\ione};
    \def\labb{a_\itwo};
    \pgfmathsetmacro{\xcoordone}{0.4*4*(\j-2)+1.4};
    \pgfmathsetmacro{\xcoordtwo}{0.4*4*(\j-2)+1.0};
    \pgfmathsetmacro{\xcoordthree}{0.4*4*(\j-2)+0.6};
    \node[rectangle, minimum size=8,draw=black!60!green] (\laba) at (\xcoordone,-1.7){};
    \node[rectangle, minimum size=8,draw=black!60!green] (\labb) at (\xcoordtwo,-1.7){};
    \node[rectangle, minimum size=8,draw=black!60!green] (\laba) at (\xcoordthree,-1.7){};

}

\foreach \i in {4,5,6}
{
    \filldraw[fill=blue!20] (c_\i) -- (done_\i) -- (dtwo_\i) -- cycle;
}

\foreach \i in {4,5,6}
{
    \pgfmathsetmacro{\x}{\i+1};
    \filldraw[fill=blue!40] (c_\i) -- (done_\x) -- (dtwo_\x) -- cycle;
}
\foreach \i in {4,5,6}
{
    \pgfmathsetmacro{\x}{\i+2};
    \filldraw[fill=blue!60] (c_\i) -- (done_\x) -- (dtwo_\x) -- cycle;
}

\foreach \i in {1,2,3,7,8,9}
{
    \filldraw[fill=black!60!green] (c_\i) -- (done_\i) -- (dtwo_\i) -- cycle;
}

\foreach \i in {1,2,3,7,8}
{
    \pgfmathsetmacro{\x}{\i+1};
    \filldraw[fill=black!40!green] (c_\i) -- (done_\x) -- (dtwo_\x) -- cycle;
}
\foreach \i in {1,2,3,7}
{
    \pgfmathsetmacro{\x}{\i+2};
    \filldraw[fill=black!20!green] (c_\i) -- (done_\x) -- (dtwo_\x) -- cycle;
}

\filldraw[black!40!green] (0.2,-0.7) -- (1.2,-1.4)--(0.7,-1.4)--(0.2,-0.9)--cycle;
\filldraw[black!20!green] (0.2,-1.1) -- (1.2,-1.4)--(0.7,-1.4)--(0.2,-1.2)--cycle;
\filldraw[black!20!green] (0.2,-0.5) -- (2.8,-1.4)--(2.35,-1.4)--(0.2,-0.55)--cycle;
\filldraw[black!40!green] (13.8,-0.3) -- (14.6,-0.7)--(14.6,-0.9)--cycle;
\filldraw[black!20!green] (13.8,-0.3) -- (14.6,-0.5)--(14.6,-0.55)--cycle;
\filldraw[black!20!green] (12.2,-0.3) -- (14.6,-1.1)--(14.6,-1.2)--cycle;



\filldraw[black!10!orange, draw=black] (9.8,0.3)--(7.4,0.3) -- (5.0,0.3) -- (6.0, 1.6)--(8.8,1.6)--cycle;

\node (A)[text=black!60!green] at (1,3) {$V[7]$};
\node (B)[text=black!60!green] at (2.63,3) {$V[8]$};
\node (C)[text=black!60!green] at (4.26,3) {$V[9]$};
\node[text=blue!80] (D) at (5.87,3) {$V[1]$};
\node[text=blue!80] (E) at (7.35,3) {$V[2]$};
\node[text=orange!80] (Z) at (7.4,2.4) {{\tiny\textcolor{black!20!orange}{$F[0]$}}};
\node[text=blue!80]  (F) at (9.05,3) {$V[3]$};
\node (G)[text=black!60!green] at (10.65,3) {$V[4]$};
\node (H)[text=black!60!green] at (12.28,3) {$V[5]$};
\node (I)[text=black!60!green] at (13.91,3) {$V[6]$};

\node (J)[] at (1,-2.3) {{\tiny\textcolor{black!60!green}{$F[7]$}}};
\node (K)[] at (2.63,-2.3) {{\tiny\textcolor{black!60!green}{$F[8]$}}};
\node (L)[] at (4.25,-2.3) {{\tiny\textcolor{black!60!green}{$F[9]$}}};
\node (M)[] at (5.83,-2.3) {{\tiny\textcolor{black!60!green}{$F[1]$}}};
\node (N)[] at (7.4,-2.3) {{\tiny\textcolor{black!60!green}{$F[2]$}}};
\node (O)[] at (8.97,-2.3) {{\tiny\textcolor{black!60!green}{$F[3]$}}};
\node (P)[] at (10.6,-2.3) {{\tiny\textcolor{black!60!green}{$F[4]$}}};
\node (Q)[] at (12.22,-2.3) {{\tiny\textcolor{black!60!green}{$F[5]$}}};
\node (R)[] at (13.84,-2.3) {{\tiny\textcolor{black!60!green}{$F[6]$}}};
\node at (-0.2,-1){$\cdots$};
\node at (14.9,-1){$\cdots$};

\foreach \i in {1,...,7}
{
    \node[rectangle, minimum size=8,draw=black!20!orange] at (5.76+0.42*\i,1.9){};
}

\draw[->] (15.2, -1) -- (15.45, -1) -- (15.45, -2.9) -- (-0.85, -2.9) -- (-0.85, -1) -- (-0.55, -1);
\end{tikzpicture}
\caption{Schematic representation of the pooling scheme with $n = 36$ items, $\ell = 9$ compartments, $m=27$ measurements and a sliding window of size $s = 3$. 
The number of (blue) seed items, whose label needs to be determined first, is $12$ in this example. 
The additional orange pools on top of the graph represent the $\mseed$ tests required to identify the seed prior.
}
\label{Fig_spatial_coupling_idea}
\end{figure}

\subsubsection{Degrees and Numbers of Defectives per Compartment}
The spatially coupled test design is different from the classical constant column design, where every item chooses its $\Delta$ neighbours uniformly among all tests. However, the local structure of $\vec G_{\mathrm{sc}}$ resembles the constant columns design, so that many of its properties carry over.

For $x_i \in V$, let $\partial x_i$ be the multiset of tests $a_j$ that include $x_i$. Thus, $\abs{\partial x_i} = \Delta' + \Delta$ for $x_i \in V_{\mathrm{seed}}$ and $\abs{\partial x_i} = \Delta$ else. 
Similarly, for each test $a_j, j \in [m]$, $\partial a_j$ denotes the multiset of items included in $a_j$, and we set $\vGamma_j = \abs{\partial a_j}$. For $a_j' \in F[0]$, $\partial a_j'$ denotes the multiset of items included in $a_j'$, and we set $\vGamma_j' = |\partial a_j'|$.
Then, for $j \in [m]$,
$$\vGamma_j \sim \sum_{i=1}^s \Bin(n\Delta/\ell s, \ell/m) =: \sum_{i=1}^s \vGamma_j^{(i)} \qquad \text{and} \qquad \sum_{j=1}^m \vGamma_j = n \Delta.$$  
Analogously, for $j \in [\mseed]$,
$$\vGamma_j' \sim  \Bin\bc{|V_{\mathrm{seed}}|\Delta', \frac{1}{\mseed}} \qquad \text{and} \qquad \sum_{j=1}^{\mseed} \vGamma_j' = \abs{V_{\mathrm{seed}}} \Delta'.$$
Finally, we denote the vector of test degrees for the seed tests by $\vGamma_{\mathrm{seed}} = \bc{\vGamma_j'}_{j \in [\mseed]},$ and the vector of test degrees for the bulk tests by $\vGamma = \bc{\vGamma_j}_{j \in [m]}$.

We define $\vGamma_{\min} = \min_{j \in [m]} \vGamma_j$ and $\vGamma_{\max} = \max_{j \in [m]} \vGamma_j$ to capture the range of test degrees. 
Finally, the analysis at times will rely on contributions per individual compartment, thus define $$\vGamma_{\min}^{\mathrm{comp}} = \min_{j \in [m], i \in [s]} \vGamma_j^{(i)} \qquad \text{and} \qquad \vGamma_{\max}^{\mathrm{comp}} = \max_{j \in [m], i \in [s]} \vGamma_j^{(i)}.$$ 
The Chernoff bound for the Binomial distribution (see \Cref{lem_chernoff_2}) then immediately implies the following.

\begin{lemma}[{\cite[Proposition 4.1]{coja_spiv}},{\cite[Lemma II.4]{aco_2019}}] \label{Lemma_GammaMinMax}
    Assume $m = ck \log(n/k)$ and $\Delta = c d \log(n/k)$ for constant $c,d>0$. With probability at least $1-o(n^{-2})$, 
    \begin{enumerate}
        \item the defective item count per compartment satisfies 
        \begin{equation} \label{eqG1}
            \frac{k}{\ell} - \sqrt{\frac{k}{\ell}} \log n \leq \min_{i \in [\ell]} \abs{V_1[i]}  \leq \max_{i \in [\ell]} \abs{V_1[i]} \leq \frac{k}{\ell} + \sqrt{\frac{k}{\ell}} \log n;
        \end{equation}
        \item the test degrees in a single compartment satisfy
        \begin{align}\label{eq_comp_minmax_G}
        \frac{\Delta n}{m s}-\sqrt{\frac{\Delta n}{ms}}\log n \leq \vGamma_{\min}^{\mathrm{comp}} \leq \vGamma_{\max}^{\mathrm{comp}} \leq \frac{\Delta n}{ms}+\sqrt{\frac{\Delta n}{ms}}\log n;
        \end{align}
        \item the overall test degrees satisfy
        \begin{align} \label{eq_minmax_G}
        \frac{\Delta n}{m}-\sqrt{\frac{\Delta n}{m}}\log n \leq \vGamma_{\min} \leq \vGamma_{\max} \leq \frac{\Delta n}{m}+\sqrt{\frac{\Delta n}{m}}\log n. 
        \end{align}
    \end{enumerate}
\end{lemma}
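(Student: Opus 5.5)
Each of the three parts reduces to a tail bound for a hypergeometric or binomial variable. Each variable has mean of polynomial order in $n$. We apply the Chernoff bound (\Cref{lem_chernoff_2}) with a deviation of order $\sqrt{\mu}\log n$, which yields a failure probability of $\exp(-\Omega(\log^2 n))$. We then take a union bound over at most $\ell$ or $ms$ events, which is polynomially many. Since $\exp(-\Omega(\log^2 n))\cdot \mathrm{poly}(n)=o(n^{-2})$, this gives the claim.

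For part (1), $\SIGMA$ is uniform among vectors of weight $k$, so $|V_1[i]|$ is hypergeometric. It counts the defectives among $|V[i]|\in\{\lfloor n/\ell\rfloor,\lceil n/\ell\rceil\}$ items drawn from $n$ items containing $k$ defectives. Its mean is $k|V[i]|/n = k/\ell + O(k/n)$. The hypergeometric distribution satisfies the same Chernoff tail bounds as the binomial with the same mean (Hoeffding's comparison). Setting $\mu=k/\ell$ and $t=\sqrt{\mu}\log n$, the bound $\exp(-t^2/(2(\mu+t/3)))$ applies. Because $\mu=n^{\theta+o(1)}\gg \log^2 n$, we have $t=o(\mu)$, so the exponent is $\Omega(\log^2 n)$. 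The $O(k/n)=o(1)$ shift in the mean is absorbed after slightly adjusting the constant in front of $\log n$; more carefully, we use deviation $(1-o(1))\sqrt\mu\log n$. A union bound over the $\ell=O(\sqrt{\log n})$ compartments finishes this part.

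For part (2), fix a test $a_j\in F[h]$ and a window offset $i\in[s]$. Then $\vGamma_j^{(i)}$ counts the edges from the single item compartment $V[h+i+1]$ into $a_j$. Each of its roughly $n/\ell$ items makes $\Delta/s$ independent uniform choices among the $m/\ell$ pools of $F[h]$. So $\vGamma_j^{(i)}\sim\Bin(n\Delta/(\ell s),\ell/m)$, which has mean $\mu=\Delta n/(ms)$, again up to a negligible rounding correction from $|V[i]|$. Here $\mu=\Theta(n\log n/(k\log(n/k)\log\log n))=n^{1-\theta+o(1)}$, which is far larger than $\log^2 n$. The Chernoff bound therefore gives failure probability $\exp(-\Omega(\log^2 n))$ for each pair $(j,i)$. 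We union bound over the $ms=O(k\log^2 n)$ pairs.

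For part (3), $\vGamma_j=\sum_{i}\vGamma_j^{(i)}$ is a sum of independent binomials with the same success probability $\ell/m$. Hence it is itself $\Bin(n\Delta/\ell,\ell/m)$, with mean $\Delta n/m$. The same Chernoff argument and a union bound over $m$ tests apply.

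The only potential obstacle is bookkeeping. The compartment sizes are rounded, so the means are only approximately $\Delta n/(ms)$ or $k/\ell$. That relative error is $O(\ell/n)$, which is negligible against the $\sqrt\mu\log n$ window. The independence claims also need care, but items choose tests independently, so for a fixed test the edge indicators are independent across items and across an item's choices. No joint independence across tests is needed because we use a union bound.
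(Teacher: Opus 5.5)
Your proposal is correct and follows the same route as the paper, which simply invokes Chernoff bounds at deviation $\sqrt{\mu}\log n$, giving failure probability $\exp(-\Omega(\log^2 n))$, and then a union bound over polynomially many compartments and tests (deferring details to the cited references). Your treatment of part~1 as a hypergeometric variable, and your handling of the rounding of compartment sizes, just fill in details the paper leaves implicit.
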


\begin{remark}\label{rem_seed_deg}
    Analogous concentration bounds as in \Cref{Lemma_GammaMinMax}, item 3., hold for $\vGamma'_{\min} = \min_{j \in [\mseed]} \vGamma'_j$ and $\vGamma'_{\max} = \max_{j \in [\mseed]} \vGamma'_j$. 
\end{remark}

In the following, we let $\cE$ denote the event that all points of \Cref{Lemma_GammaMinMax} are satisfied.

\subsection{Diagnosing the Seed (Basic Thresholding)} 
\newcommand{\bth}{{\rm{BTH}}}

The first step of \SPOT\ diagnoses every item of $V_{\mathrm{seed}}$ using only the $\mseed = o(\minf)$ tests in $F[0]$ by performing a basic thresholding (BTH) algorithm on every item. 
The score function in \bth\ counts the number of positive tests each item $x$ partakes in, and is defined as 
\[ 
\cT_x = \sum_{a \in \partial x \cap F[0] } \hat{\SIGMA}_a.
\] 
Thus, \bth\ can be seen as a generalised version of a simple algorithm for binary group testing ($t=1$), \COMP\ \cite{AJS_book}, that diagnoses any item that is included in at least one negative test as non-defective and all other items as defective. 
Since every seed item participates in $\Delta'$ tests from $F[0]$, we choose the following (parametrised) threshold for classification of an item as defective: 
 $$T(d) = \Delta' \bc{1- P_{\sss \leq t-1}(d) + \alpha^{\mathrm{*}}P_{t-1}(d) }, \label{Eq:Thresh}$$
 where
\begin{align} \label{def_thr_para}
    \alpha^{\mathrm{*}} &= \frac{1}{1+\sqrt{\theta}}, \qquad
   P_{t-1}(d)=\pr\bc{\Po(d)=t-1} 
   \qquad \mbox{and} \qquad P_{\sss \leq t-1}(d)=\pr\bc{\Po(d)\leq t-1}.
\end{align}
Further, let $d'$ be defined by 
\begin{align}\label{def_dprime}
  d'=\argmin_d \abs{P_{\sss \leq t-1}(d) - \frac 1 2 }.  
\end{align}
Then, based on the threshold $T(d')$, \bth\ will classify as follows: 

\begin{algorithm}[H]
    \KwData{$n, G=(V,F), \hat{\SIGMA}$}
        \For{$x \in V$}{
            $\tau_x = \vecone  \cbc{\cT_x>T(d')}$ 
        }   
    \Return $\tau_x$.\\
\caption{Basic thresholding \bth, used in \SPOT, phase 1.}
\label{Algo_Naive}
\end{algorithm}

The algorithm exploits the expected differences between a test including a defective as opposed to a non-defective item. 
Taking into account that the exact number of defective items in the seed is random and unknown, the following proposition shows that \bth\ indeed returns the correct states for all items in the seed \whp:
\begin{restatable}[]{proposition}{thmnaivethresholding}
    \label{prop_naive_thresholding}
   Let $\theta \in (0,1)$. For $N \in \NN$, let $\sigma \in \{0,1\}^{N}$ be a vector of unknown Hamming weight $K' \in \brk{K-\sqrt{K}\log N, K + \sqrt{K}\log N}$, where $K = N^{\theta}\bc{\frac{s}{\ell}}^{1-\theta}$. For $\delta >0$, let $M = (1+\delta)\mseed$ with 
    \begin{align}\label{eq:mseed}
           \mseed = \cseed  K \log \frac{N}{K},  \qquad\qquad \cseed = \frac{2(1+\sqrt{\theta})}{1-\sqrt{\theta}} \cdot \frac{1}{d' P_{t-1}(d')^2}. 
   \end{align} 
   Finally, let $\vec G$ be the constant column design on items $\{x_1, \ldots, x_N\}$ and $M$ pools, where every item chooses  $\Delta' = c_{\mathrm{seed}} d' \log \frac{N}{K}$ of the tests uniformly at random without replacement. Then \whp{} as $N \to \infty$, the algorithm $\bth$ (\Cref{Algo_Naive}) on input $N, \vec G$ and $\hat \sigma_{\vec G}$ recovers $\sigma$.
\end{restatable}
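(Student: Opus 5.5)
Fix an item $x$ and condition on its $\Delta'$ tests. I will work with the exact structure of the constant column design and compare the score $\cT_x$ with the threshold $T(d')$. Each test $a\ni x$ contains, besides $x$, a number of other defectives $\vec D_a$. The test is positive iff $\vec D_a\ge t-1$ when $x$ is defective, and iff $\vec D_a\ge t$ when $x$ is healthy. The mean number of other defectives in a test is $K'\Delta'/M\approx d'$, since $M\approx \cseed K\log(N/K)$, $\Delta'=\cseed d'\log(N/K)$ and $K'=K(1+o(1))$. So a positive test has probability $q_1\approx 1-P_{\le t-2}(d')=1-P_{\le t-1}(d')+P_{t-1}(d')$ for a defective $x$, and $q_0\approx 1-P_{\le t-1}(d')$ for a healthy $x$. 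The threshold $T(d')=\Delta'(q_0+\alpha^*P_{t-1}(d'))$ lies strictly between $\Delta' q_0$ and $\Delta' q_1$, at fraction $\alpha^*$ of the gap measured from the healthy side. Because $K'$ is unknown but lies in $K\pm\sqrt K\log N$, the relative error in $d'$ is $O(\log N/\sqrt K)=o(1)$. This shifts $\Delta'q_i$ only by $o(\Delta')$, and that perturbation is absorbed by the slack $\delta$.

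Step 1 is to establish approximate independence. The $\Delta'$ tests of $x$ are distinct (chosen without replacement). The counts $(\vec D_a)_{a\in\partial x}$ are determined by how the other $K'-1$ (or $K'$) defectives distribute their edges. I would either Poissonise, or argue directly that the vector of counts is dominated by, and dominates, independent $\Bin(K'\Delta',1/M)$ counts up to $o(1)$ multiplicative errors in the relevant large-deviation probabilities; this follows the standard argument in \cite{aco_2019}. Then $\cT_x$ is close in distribution to $\Bin(\Delta',q_1)$ or $\Bin(\Delta',q_0)$.

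Step 2 is a large-deviation estimate via the Chernoff bound (\Cref{lem_chernoff_2}) in its KL form. Let $p=P_{t-1}(d')$, which is small relative to the scale of the KL terms only in the sense that we use the local approximation. We have
\[
\Pr\bc{\cT_x\le T(d')\mid x\in V_1}\le \exp\bc{-\Delta'\,\KL{q_0+\alpha^* p}{q_1}},
\]
and symmetrically $\Pr(\cT_x> T(d')\mid x\in V_0)\le\exp(-\Delta'\KL{q_0+\alpha^*p}{q_0})$. Both means sit near $1/2$ because $P_{\le t-1}(d')\approx 1/2$. A lower bound of the form $\KL{a}{b}\ge 2(a-b)^2$ (Pinsker) gives exponents at least $2\Delta'(1-\alpha^*)^2p^2$ and $2\Delta'(\alpha^*)^2p^2$.

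Step 3 is the union bound. It suffices that $2\Delta'(1-\alpha^*)^2p^2\ge(1+\Omega(\delta))\log K$, since there are $K'$ defectives, and that $2\Delta'(\alpha^*)^2p^2\ge(1+\Omega(\delta))\log N$, since there are $N$ healthy items. Plugging in $\Delta'=(1+\delta)\cseed d'\log(N/K)$ and $\log(N/K)=(1-\theta)\log N(1+o(1))$ (since $s/\ell$ is only polylogarithmic), the choice $\alpha^*=1/(1+\sqrt\theta)$ balances $\theta\log N/(1-\alpha^*)^2$ against $\log N/(\alpha^*)^2$. Both equal $(1+\sqrt\theta)^2\log N$, which yields exactly the constant $\cseed=\frac{2(1+\sqrt\theta)}{1-\sqrt\theta}\cdot\frac1{d'p^2}$ up to a factor I expect to be matched by the Pinsker constant. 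I would verify this arithmetic carefully.

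The main obstacle is Step 1: rigorously controlling the dependence among the test counts of a single item, and the randomness of $K'$, tightly enough that the Chernoff exponent is lost only by a $1+o(1)$ factor. I would first try conditioning on the degree concentration of \Cref{rem_seed_deg} and using a negative-association argument for balls-into-bins counts, which makes the Chernoff upper-tail bounds valid directly.
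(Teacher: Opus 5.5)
Your proposal is correct and follows essentially the same route as the paper. Both compare per-test positive probabilities $\approx 1-P_{\sss\le t-1}(d')$ (healthy) versus $1-P_{\sss\le t-1}(d')+P_{t-1}(d')$ (defective), use negative association of the test indicators to justify a Chernoff bound, and finish with a union bound in which $\alpha^*=1/(1+\sqrt\theta)$ balances the $\log N$ and $\log K$ requirements.

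The only difference is that you use the KL/Pinsker (Hoeffding) form where the paper uses the multiplicative bound of \Cref{lem_chernoff}. Your form gives exponents of at least $4(1+\delta)\log N$ and $4(1+\delta)\theta\log N$, since $\log(N/K)\ge(1-\theta)\log N$. So the stated $\cseed$ is not matched exactly but exceeded by a factor of four, whereas the paper's bound leaves only a factor of about two, and the arithmetic you planned to verify goes through.
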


We will later apply \Cref{prop_naive_thresholding} to $\vec G = (V[0], F[0])$. Thus, $N = sn/\ell$ and \whp, the number of defective items in the seed is concentrated around $K=(sk/\ell)^{\theta} = N^{\theta} (s/\ell)^{1-\theta}$. \Cref{prop_naive_thresholding} then ensures that $\bth$ diagnoses all items in $V[0]$ correctly \whp\ using only $o(\minf)$ tests, since $\abs{V_1[0]} = o(k)$. 
The proof of \Cref{prop_naive_thresholding} is provided in \Cref{SSec:NaiveThersholding}.

\subsection{Approximate Recovery} 

After the first phase, \whp\ the correct labels of all items in the seed are known. The second phase of \SPOT\ diagnoses the remaining items compartment by compartment, using information of previously diagnosed compartments.
Suppose we wish to determine the status of an item $x$ in compartment $i>s$, then we already have an approximation of the status of items in compartment 1 to $i-1$. 
Now consider the different types of tests we can observe for item $x$. 
While the tests are in different compartments and can be positive or negative, the number of defective items from previous compartments can also differ.

For any $i\in\{1,\dots,l\}$, let $F_{\geq t}[i]$ be the tests in compartment $i$ that have at least $t$ edges connecting them to a defective item (i.e., positive tests), while $F_{< t}[i]$ are the tests with less that $t$ edges connecting to defective items (i.e., negative tests). Given an estimate $\tau$, for all $j \in \{1,\dots,s\}, \; r \in \{0,\dots,t-1\}$ and every item $x\in V[i]$, let $\vW^{r, +}_{x,j}(\tau)$ and $\vW^{r, -}_{x,j}(\tau)$ be the numbers of positive/negative neighbours of $x$ in compartment $F[i+j-1]$ that contain exactly $r$ previously identified as defective items according to $\tau$, i.e.,

	\begin{align} \label{eq_w}
    \begin{split}
    		\vW^{r, +}_{x,j}(\tau) &= \abs{ \cbc{ a \in F_{\geq t}[i+j-1] \cap \partial x \; : \abs{ \cbc{ v\in \partial a \cap \bigcup_{h=1}^{s-1} V[i-h] \; :  \tau_v=1 } } = r \;   } }; \text{\,and} \\
		\vW^{r, -}_{x,j}(\tau) &= \abs{ \cbc{ a \in F_{< t}[i+j-1] \cap \partial x  \; : \, \abs{ \cbc{ v\in \partial  a \cap \bigcup_{h=1}^{s-1} V[i-h] \; :  \tau_v=1 } } = r \;     } }. 
     \end{split}
	\end{align}
In \eqref{eq_w}, we count multi-edges separately, i.e., each copy of an edge to a defective item contributes 1 in the sum over $\partial  a$. 

Since the vast majority of items is non-defective, the main classification task is to avoid misclassifying non-defective items. 
In order to do so, we will benchmark all $\vW^{r, +}_{x,j}(\tau)$ and $\vW^{r, -}_{x,j}(\tau)$-counts against the asymptotic expected number $ W_{1,j}^{r,\pm}$  of the corresponding counts for defective items.  
Our algorithm will then \emph{only} tentatively classify $x$ as defective if, for each type of test, the number of tests of that type in which $x$ participates matches the expected number for a defective item up to a certain deviation that we will tune appropriately. 
That is, \SPOT\ will only tentatively classify an item as defective if, given all the knowledge at the time of classification, $x$ looks like a defective item.

We next describe how we choose which items we initially classify as defective given $\vW^{r, +}_{x,j}(\tau)$ and $\vW^{r, -}_{x,j}(\tau)$. For this, and for each $r \in \{0,\dots, t-1\}, j \in \cbc{1,\dots,s}$, let
    	\begin{align}
		W_{0,j}^{r,+} = q_{0,j}^{r,+} \frac{\Delta}{s}, \qquad
		W_{0,j}^{r,-} = q_{0,j}^{r,-} \frac{\Delta}{s}, \qquad
        W_{1,j}^{r,+} = q_{1,j}^{r,+} \frac{\Delta}{s}, \qquad
		W_{1,j}^{r,-} = q_{1,j}^{r,-} \frac{\Delta}{s};
	\end{align}
    where
		\begin{align} 
			q_{0,j}^{r,+} &= \Pr\bc{\Po\bc{\frac{d^*(s-j)}{s}} = r} \Pr\bc{\Po\bc{\frac{d^*j}{s}} \geq t-r},   \label{eq_def_q0p}\\
			q_{0,j}^{r,-} &= \Pr\bc{\Po\bc{\frac{d^*(s-j)}{s}} = r} \Pr\bc{\Po\bc{\frac{d^*j}{s}} < t-r}, \label{eq_def_q0m} \\
			q_{1,j}^{r,+} &=  \Pr\bc{\Po\bc{\frac{d^*(s-j)}{s}} = r} \Pr\bc{\Po\bc{\frac{d^*j}{s}} \geq t-r-1}, \text{\, and} \label{eq_def_q1p} \\
			q_{1,j}^{r,-} &=  \Pr\bc{\Po\bc{\frac{d^*(s-j)}{s}} = r} \Pr\bc{\Po\bc{\frac{d^*j}{s}} < t-r-1}. \label{eq_def_q1m}
		\end{align}
        
 The idea is that the numbers $W^{r, \pm}_{0/1,j}(\tau)$ approximate the asymptotic expected values of $\vW^{r, \pm}_{x,j}(\tau)$ if $\SIGMA_x = 0$ or $\SIGMA_x = 1$, respectively, which can be understood as follows: Roughly, the quantities $q_{0/1,j}^{r,+/-}$ approximate the probabilities that a given neighbour of $x \in V[i]$ in compartment $F[i+j-1]$ is $+$ or $-$ (i.e., positive or negative), and has exactly $r$ defective neighbours in compartments $V[i-(s-j)],\ldots, V[i-1]$, if $x$ is $0/1$. 
 
 As an example, consider $q_{0,j}^{r,+}$, which approximates the probability that a given neighbour $a \in F[i+j-1]$ of $x$ has a positive outcome and has exactly $r$ defective neighbours in compartments $V[i-(s-j)],\ldots, V[i-1]$, if $x$ is $0$. In this case, the first factor in \eqref{eq_def_q0p} approximates the probability that $a$ has exactly $r$ defective neighbours in compartments $V[i-(s-j)],\ldots, V[i-1]$, while the second factor approximates the probability that $a$ has at least $t-r$ defective neighbours in compartments $V[i],\ldots,V[i+j-1]$, so that the test result would be 1. The Poisson distributions arise from the observation that, conditionally on the exact numbers of defective items in compartments $V[i-(s-j)],\ldots, V[i+j-1]$, as well as the numbers of neighbours of $a$ in these compartments, the numbers of defective neighbours of $a$ in compartments $V[i-(s-j)],\ldots, V[i+j-1]$ follow independent Binomial distributions with many trials and a small success probability, which suggests that a Poisson approximation should hold. The parameters can be found by observing that $a$ has approximately $\Delta n/(ms) = dn/(ks)$ neighbours in a single compartment, each of which is defective with probability approximately $k/n$. While not constituting an exact proof, this motivates the above expressions. An important ingredient in our proof is to make this intuition precise.

Finally, let
\begin{align}
    \zeta = \frac{1}{\log \log \log n}.
\end{align}
We now formalise the decision rule in the initial phase of our algorithm through the following steps of the algorithm:

\IncMargin{1em}
	\begin{algorithm}[H]
		 \KwData{$n$, $\vec G_{\mathrm{sc}}$, $\hat{\SIGMA}$}
        \setcounter{AlgoLine}{1}
            $\tau' = \bth( \G=(V_{\mathrm{seed}}, F[0]), {\hat{\vec\sigma}}_{F[0]})$\\
            $\tau_x = \tau'_x$ for all $x \in V_{\mathrm{seed}}$\\
		\For{$i=s+1, \dots, \ell $}{
			\For{$x \in V[i]$}{
				$\tau_x = 0$ \tcp*[h]{classify as non-defective} \\
				\If{For all $j\in [s]$, $r\in \{0,\dots, t-1\}$:  $\vW_{x,j}^{r,+}(\tau) \geq (1 - \zeta) W_{1,j}^{r,+} \mbox{ and }\vW_{x,j}^{r,-}(\tau) \leq (1 + \zeta) W_{1,j}^{r,-}$ }{
							$\tau_x = 1$ \tcp*[h]{classify as defective}}
		}}
		\Return $\tau$
		\caption{\SPOT, phase 2.}
	\end{algorithm}
\DecMargin{1em}

After obtaining the correct labels for the seed, the second phase of \SPOT\ diagnoses all other items. 
The following proposition shows that the second phase of \SPOT\ returns an approximate solution with at most $o(k)$ errors \whp{}, so that it classifies {\em most} items correctly:

\begin{restatable}[]{proposition}{propapprox} \label{prop_step1}
	Suppose that $m \geq (1+\eps)\minf$.  
Then \whp\ the output $\tau$ of phase 2 of \SPOT\ satisfies
	$$\sum_{x\in V}\vecone\cbc{\tau_x\neq\SIGMA_x}\leq k\exp\bc{-\Omega\bcfr{\log n}{(\log\log n)^{2t+3}}}.$$
\end{restatable}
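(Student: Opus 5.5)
I would prove \Cref{prop_step1} by induction over the compartments $i=s+1,\dots,\ell$. The inductive hypothesis is that the previously diagnosed compartments $V[i-s+1],\dots,V[i-1]$ each contain at most $\frac{k}{\ell}\exp(-\Omega(\log n/(\log\log n)^{2t+3}))$ misclassified items. The base case is supplied by \Cref{prop_naive_thresholding}, applied to the seed $(V_{\mathrm{seed}},F[0])$: \whp\ it has no errors at all. We work throughout on the event $\cE$ of \Cref{Lemma_GammaMinMax}.

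\textbf{Step 1: the idealised counts.} First analyse the counts $\vW^{r,\pm}_{x,j}(\SIGMA)$, computed with the true labels instead of $\tau$. Fix $x\in V[i]$ and a neighbour $a\in F[i+j-1]$. Condition on the defective counts $|V_1[h]|$ (concentrated by \eqref{eqG1}) and on the per-compartment degrees of $a$ (concentrated by \eqref{eq_comp_minmax_G}). Then:
\begin{itemize}
\item the number of defective neighbours of $a$ in $V[i-(s-j)],\dots,V[i-1]$ is a sum of hypergeometric/binomial variables with about $\frac{d^*n(s-j)}{ks}$ trials and success probability about $k/n$;
\item the same holds for $V[i],\dots,V[i+j-1]$, excluding $x$ itself, with $j$ in place of $s-j$.
\end{itemize}
A Poisson approximation, with total variation error $O(k/n)+O(\log n\,/\sqrt{k/\ell})$, shows that the type of each neighbour has probability $q^{r,\pm}_{\SIGMA_x,j}+o(\zeta)$.

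The $\Delta/s$ neighbours of $x$ in $F[i+j-1]$ are nearly independent, because distinct tests share few items. I would make this precise either by exposing the edges of $x$ last, or by a Poissonisation/negative-dependence argument. The Chernoff bound (\Cref{lem_chernoff_2}) then gives
$$\pr\bc{\vW^{r,\pm}_{x,j}(\SIGMA)\ \text{deviates from}\ W^{r,\pm}_{\SIGMA_x,j}\ \text{by more than } \tfrac{\zeta}{2}W^{r,\pm}_{\SIGMA_x,j}}\le \exp\bc{-\Omega(\zeta^2\Delta/s)}.$$
Since every $q^{r,\pm}_{1,j}$ is bounded below by a constant depending only on $s$ and $t$ (possibly polynomially small in $1/s$), a union over $j\le s$ and $r<t$ shows that a defective $x$ fails the test with probability $\exp(-\Omega(\log n/(\log\log n)^{2t+3}))$. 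The exponent $2t+3$ should come exactly from the factors $\zeta^2$, $1/s$, and the smallest Poisson masses of order $(1/s)^{t}$. Summing over defectives in $V[i]$ and applying Markov's inequality bounds the false negatives.

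\textbf{Step 2: false positives, the main obstacle.} For non-defective $x$, we need a large-deviation bound: the probability that the empirical type vector of $x$ looks defective-like should be at most
$$\exp\bc{-(1+\Omega(\eps))\,\Delta\cdot\tfrac{1}{s}\textstyle\sum_j D_j},$$
where $D_j$ is the KL divergence between the $0$- and $1$-type distributions in compartment $j$. Multiplying by $n/\ell$ items must give $o(k/\ell)$. This requires
$$\frac{\Delta}{s}\sum_{j}D_j \ge (1+\Omega(\eps))\log\frac{n}{k}.$$
I would compute the exact limit of $\frac1s\sum_j D_j$ as $s\to\infty$, as a Riemann integral in which the partial information of the positive/negative split accumulates. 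I would then verify that it equals $-d^*\log(1-P_{t-1}(d^*))$, so that $c\ge(1+\eps)c_2(d^*,\theta)$ finishes the argument via the multinomial Sanov bound. Because the test only requires one-sided inequalities, the false-positive probability is actually bounded by $\prod_j\pr(\text{all }(r,\pm)\text{ counts lie in the defective window})$. For each $j$ this is a multinomial tail, estimated by Chernoff with the optimal tilt. Checking that this tilt lands on the defective distribution, and that the integral identity holds, is the hardest computation.

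\textbf{Step 3: propagation of errors.} Finally, replace $\SIGMA$ by $\tau$. Each misclassified item in the previous $s-1$ compartments has $\Delta$ edges, so at most $\Delta\cdot(\#\text{errors})$ tests have a wrong $r$-value. Call an item $x$ \emph{damaged} if more than $\zeta W/4$ of its tests are among these wrong tests. By double counting, the number of damaged items is at most
$$\frac{s\Delta^2\cdot\#\text{errors}}{\zeta\Delta\,\cdot\,\text{const}}\ll \frac{k}{\ell}\exp(-\Omega(\cdot)),$$
and here the $\Delta$ factor is absorbed because $\Delta=O(\log n)$ while the error bound decays faster than any polylog. Undamaged items are classified exactly as in the idealised analysis with the window $\zeta$ halved. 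Hence the error bound reproduces itself without growth across the $\ell=\sqrt{\log n}$ steps, and a union bound over compartments finishes the proof.
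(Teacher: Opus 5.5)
Your overall architecture matches the paper's: induction over compartments, idealised counts $\vW^{r,\pm}_{x,j}(\SIGMA)$ with slack, Chernoff for defectives, and a multinomial Sanov bound for non-defectives whose optimal tilt lands on $q_{1,j}$. Two steps, however, fail as written.

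\textbf{Step 3 is missing the key ingredient.} Your double count implicitly treats each test as containing $O(s\Delta)$ items. In fact a test in $F[h+j-1]$ has about $\Delta n/(ms)=d^*n/(ks)$ neighbours in $V[h]$, which is polynomially large. Counting pairs (item of $V[h]$, corrupted test) therefore only gives
\[
\#\{\text{damaged items}\}\le\frac{\Delta\cdot\#\text{errors}\cdot\vGamma^{\mathrm{comp}}_{\max}}{\zeta W^{r,\pm}_{1,j}/4}\approx\#\text{errors}\cdot\frac{n}{k}\cdot\frac{s^{t}}{\zeta}.
\]
The error bound $k\exp(-\log n/(\log\log n)^{2t+3})$ decays more slowly than any power of $n$, so the factor $n^{1-\theta}$ cannot be absorbed, and the bound is worse than trivial. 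No deterministic counting can rule out the corrupted tests concentrating on a few items.

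What you need is the expansion property of the random design, \Cref{lemma_endgame_misclassified}. It says that \whp, simultaneously for all $T$ with $|T|\le\exp(-\log^{\alpha}n)k$, at most $|T|/(8\log\log n)$ items meet $T$ in at least $\log^{\beta}n$ tests. Uniformity over $T$ is essential, because the error set is itself a function of $\G_{\mathrm{sc}}$. Since $\zeta W^{r,\pm}_{1,j}=\Omega(\zeta\Delta/s^{t+1})\ge\log^{\beta}n$, this gives $|M_3[h]|\le|T|/(8\log\log n)$, which is what closes the induction.

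\textbf{Step 2 aims at a false identity and invokes the wrong constraint.} By the antiderivative identity of \Cref{claim:antideriv},
\[
\lim_{s\to\infty}\frac1s\sum_{j=1}^s\KL{q_{1,j}}{q_{0,j}}=\int_0^1\sum_{r=0}^{t-1}P_r(d^*(1-z))\,\KL{P_{\geq t-r-1}(d^*z)}{P_{\geq t-r}(d^*z)}\,dz=\frac{H(\Pr(\Po(d^*)\geq t))}{d^*},
\]
not $-d^*\log(1-P_{t-1}(d^*))$. For instance, with $t=1$ and $d^*=\log 2$ the two values are $1$ and $\log^2 2$.

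Consequently the exponent is $(1+o(1))\,c\,H(\Pr(\Po(d^*)\le t-1))\log(n/k)$, and what finishes the argument is $c\ge(1+\eps)c_1(d^*)$. False positives are a union over $n$ items against a budget of $k$, so the required exponent is $\log(n/k)$, with no $\theta/(1-\theta)$ factor. With your claimed value and $c\ge c_2(d^*,\theta)$, you would only get an exponent $d^*\frac{\theta}{1-\theta}\log(n/k)$, which need not exceed $\log(n/k)$.

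The constant $c_2$ enters elsewhere: in \Cref{Lem_distphixstar}, a union bound over the $k$ defectives ensuring each has many pivotal tests in the cleaning phase. It plays no role in phase~2.
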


The proof of \Cref{prop_step1}, given in \Cref{Sec:Approx}, proceeds in two steps. 
First, we slightly overestimate the number of misclassified defective and non-defective items under the assumption that the algorithm has exact knowledge of previous compartments $\vW_{x,j}^{r,\pm}(\SIGMA)$, rather than the approximations $\vW_{x,j}^{r,\pm}(\tau)$ arising from its own estimation $\tau$, in \Cref{lem_nondef} and \Cref{lem_def}, respectively. The bound on the error probability of misclassified non-defective items (\Cref{lem_nondef}) substantially relies on the fact that we are able to compute a specific, non-trivial integral of Poisson probabilities explicitly, see \Cref{claim:antideriv}.
In the second step in \Cref{lemma_endgame_misclassified}, we bound the number of items for which these quantities differ, using a graph expansion argument.

\subsection{Cleaning Phase}

\Cref{prop_step1} states that, after the second stage of \SPOT, we have achieved an initial approximation of $\SIGMA$ with $o(k)$ misclassified items. 
The next and final step of \SPOT\ `cleans up' any mistake that the second step made, by locally repairing any inconsistencies.

Let $x\in V$ be an item and assume that the second neighbourhood of $x$ is decoded correctly. 
If $x$ is included in at least one test that contains exactly $t-1$ other defective items, we know the state of $x$ precisely, since the test outcome then depends solely on the state of $x$. We then say that there is a {\em pivotal} text for $x$.
In \Cref{Lem_distphixstar} in \Cref{Sec:Clean}, we show that the condition 
$m>(1+\eps)\minf$ from \cref{eq:m_inf} ensures that each item is contained in at least $\Omega(\sqrt \Delta)$ of such pivotal tests. With this notion of pivotal tests, we can use the same approach as phase 3 of the algorithm \SPIV\ for binary group testing in \cite{coja_spiv}.

\IncMargin{1em}
\begin{algorithm}[H]
  \KwData{$\tau$ from \SPOT, phase~2}
  \setcounter{AlgoLine}{10}
  Let $\tau^{(1)}=\tau$\;
  \For{ $i = 1,\dots, \lceil\log n\rceil$} {
  For all $x\in V[s+1]\cup\cdots\cup V[\ell]$ calculate\\
   $\displaystyle \qquad S_x(\tau^{(i)})=\sum_{a\in\partial x:\hat\SIGMA_a=1 \;}\vecone\cbc{\sum_{y\in\partial a\setminus\cbc x}\tau^{(i)}_y =  t-1}$\;
  Let 
  $\displaystyle\tau_x^{(i+1)}=
  	\begin{cases}
  	\tau_x^{(i)}&\mbox{ if }x\in V[1]\cup\cdots\cup V[s],\\
  	\vecone\cbc{S_x \bc{\tau^{(i)}}>\log^{1/4}n}&\mbox{ otherwise. }
  	\end{cases}$\;
  	}
  \KwRet{$\tau^{(\lceil\log n\rceil)}$}
\caption{\SPOT, phase~3.}\label{SC_algorithm}
\end{algorithm}
\DecMargin{1em}

In each cleaning iteration, all items that are contained in at least $\log^{1/4} n$ positive pivotal tests (according to the current approximation of $\SIGMA$) are diagnosed as defective. We will argue that both for non-defective and defective items $x$ that are misclassified at stage $j+1$, there are $\log^{1/4}n$ positive pivotal tests in its neighbourhood that contain at least one item that is misclassified at stage $j$. 
However, the same expansion property of $\G_{\mathrm{sc}}$ that was used in the last phase (\Cref{lemma_endgame_misclassified}) implies that only few items can have a very dense second neighbourhood of misclassified items, if the initial set is not too big. 
The following proposition indeed shows that the number of misclassified items is reduced by at least a constant factor in each iteration:

\begin{restatable}[]{proposition}{prop_endgame}
\label{prop_endgame}
Suppose that $(1+\eps)\minf\leq m=O(n^\theta\log n)$ and let $\tau=\tau^{(1)}$ be the output of phase 2 of \SPOT.
\Whp\ for all $1\leq j\leq\lceil\log n\rceil$,
\begin{align*}
\sum_{x\in V}\vecone\{\tau^{(j+1)}_x\neq \SIGMA_x\}\leq\frac13\sum_{x\in V}\vecone\{\tau^{(j)}_x\neq \SIGMA_x\}.
\end{align*}
As a result, \whp\ $\tau^{(\lceil\log n\rceil)} = \SIGMA$.
\end{restatable}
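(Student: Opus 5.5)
The argument follows the template of phase~3 of \SPIV\ in \cite{coja_spiv}: a deterministic consequence of two high-probability graph properties, which are then shown to hold. The first property is a \emph{pivotal-test property} (P1). \Whp\ every item $x\in V[s+1]\cup\dots\cup V[\ell]$ with $\SIGMA_x=1$ lies in at least $2\log^{1/4}n$ tests $a$ with $\sum_{y\in\partial a\setminus\{x\}}\SIGMA_y=t-1$; this is exactly what \Cref{Lem_distphixstar} provides ($\Omega(\sqrt\Delta)\gg\log^{1/4}n$ pivotal tests). Conversely, every non-defective $x$ lies in fewer than $\log^{1/4}n/2$ tests $a$ with $\hat\SIGMA_a=1$ and $\sum_{y\in\partial a\setminus\{x\}}\SIGMA_y=t-1$. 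Such a test is positive only if $\partial a\setminus\{x\}$ contains $t$ defectives counted with multiplicity while only $t-1$ are counted as distinct, or via a multi-edge. In fact, for non-defective $x$ and exact $\SIGMA$ the count $S_x(\SIGMA)$ is $0$ unless multi-edges occur. A first-moment bound shows that \whp\ no item lies in more than $O(1)$ tests containing a repeated defective, which gives the second half of (P1).

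The second property is an \emph{expansion property} (P2), the same one used in \Cref{lemma_endgame_misclassified}. \Whp, for every set $Y\subset V$ with $|Y|\le k\exp(-\Omega(\log n/(\log\log n)^{2t+3}))$, the number of items $x$ having at least $\log^{1/4}n/2$ neighbouring tests that each contain an item of $Y\setminus\{x\}$ is at most $|Y|/3$. I would prove this by a union bound over $Y$ of size $y$ and candidate sets $X$ of size $y/3$. The probability that $X$ has $|X|\log^{1/4}n/2$ edges into the test-neighbourhood $\partial Y$, which has size at most $\Delta y$, is at most $\binom{\Delta|X|}{|X|\log^{1/4}n/2}\bigl(\Delta y\ell/m\bigr)^{|X|\log^{1/4}n/2}$, since each edge lands in a given test compartment with probability $\ell s/m$ per slot. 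Because $\Delta y\ell/m\le y\,\mathrm{polylog}(n)/k$ is tiny for $y$ in this range, this factor beats $\binom{n}{y}\binom{n}{y/3}\le\exp(O(y\log n))$, using that $\log^{1/4}n$ times the exponent gain $\log(k/y)\ge\log n/(\log\log n)^{2t+3}$ exceeds $\log n$. This counting step is the main obstacle. The margin is only polylogarithmic, so the bookkeeping with the ring structure, multiplicities and the choice of $\log^{1/4}n$ must be done carefully. Should the margin fail, I would fall back on splitting into small $y$ (where $\log(n/y)$ is large) and an intermediate range handled with the stronger bound from \Cref{prop_step1}.

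With (P1) and (P2) in hand, the induction runs as follows. Let $Y_j=\{x:\tau^{(j)}_x\ne\SIGMA_x\}$, where $|Y_1|$ is small by \Cref{prop_step1} and seed items are correct by \Cref{prop_naive_thresholding}. If $x\in Y_{j+1}$ is defective, then $S_x(\tau^{(j)})\le\log^{1/4}n$, so at least $\log^{1/4}n$ of its pivotal tests from (P1) have their count altered, and each of these contains an item of $Y_j\setminus\{x\}$. If $x\in Y_{j+1}$ is non-defective, then $S_x(\tau^{(j)})>\log^{1/4}n$ while only fewer than $\log^{1/4}n/2$ of these tests are counted under $\SIGMA$, so more than $\log^{1/4}n/2$ positive tests at $x$ contain an item of $Y_j\setminus\{x\}$. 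In both cases (P2) applies, giving $|Y_{j+1}|\le|Y_j|/3$, and in particular the size hypothesis is preserved inductively. After $\lceil\log n\rceil$ rounds, $|Y|\le k3^{-\log n+1}<1$, hence $\tau^{(\lceil\log n\rceil)}=\SIGMA$.
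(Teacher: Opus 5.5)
Your proposal is correct and is essentially the paper's argument. You condition on the pivotal-test event of \Cref{Lem_distphixstar}, use the same two-case analysis to show that every item misclassified in round $j+1$ lies in $\Omega(\log^{1/4}n)$ tests containing another item misclassified in round $j$, and finish with the expansion property, which is exactly \Cref{lemma_endgame_misclassified} (with $\beta$ slightly below $1/4$ and $\alpha$ close to $1$), so your union-bound re-derivation of (P2) is not needed. Likewise the multi-edge first-moment bound is superfluous: test outcomes and $S_x$ both count defectives with multiplicity, so $S_x(\SIGMA)=0$ for every non-defective $x$, and the paper can use the full threshold $\log^{1/4}n$ in that case.
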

Obviously, \Cref{prop_endgame} completes the proof of \Cref{thm_alg_ex}. The proof of \Cref{prop_endgame} is provided in \Cref{Sec:Clean}.

\section{Proof of \Cref{prop_naive_thresholding} (Basic Thresholding)} \label{SSec:NaiveThersholding}
Recall that the \bth\ Algorithm (\Cref{Algo_Naive}) proceeds by counting the number of positive tests for each item and classifies it as defective if and only if the count exceeds a fixed threshold. We employ a constant column design where each item joins $\Delta' = c_{\mathrm{seed}} d' \log \frac{N}{K}$ tests independently with replacement. 

Recall the shorthands $P_{t-1}(d), P_{\sss \leq t-1}(d)$ from \cref{def_thr_para}. 
Intuitively, the quantities $P_{t-1}(d')$ and $ P_{\sss \leq t-1}(d')$ represent the asymptotic probabilities that in the set-up of Proposition~\ref{prop_naive_thresholding}, a test has exactly $t-1$ or at most $t-1$ defective items, respectively.

The following lemma compares the conditional probabilities of a positive test result based on the true state of one of its items $x$.

\begin{lemma} \label{Lem:probT}
    For all $K' \in [K-\sqrt{K}\log N, K + \sqrt{K}\log N]$, where $K = N^{\theta}(s/\ell)^{1-\theta}$, and any test $a \in \partial x$, with probability $1-o(N^{-2})$, the conditional probabilities are
    \begin{align*}
         \Pr(\hat{\SIGMA}_a' = 1 \mid  \partial x, \SIGMA_x=0, \vec \Gamma_a') &= 1- P_{\sss \leq t-1}(d')+o(1), \\
         \Pr(\hat{\SIGMA}_a' = 1 \mid  \partial x, \SIGMA_x=1, \vec \Gamma_a') &= 1-( P_{\sss \leq t-1}(d') - P_{t-1}(d'))+o(1).
    \end{align*}
\end{lemma}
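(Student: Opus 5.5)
The plan is to condition on the degree $\vec\Gamma_a'$ of the test $a\in\partial x$ and on the event of \Cref{rem_seed_deg} (the analogue of \Cref{Lemma_GammaMinMax} for the seed). That event holds with probability $1-o(N^{-2})$ and gives $\vec\Gamma_a' = \frac{N\Delta'}{M}(1+o(1))$. Since $\Delta' = \cseed d'\log(N/K)$ and $M=(1+\delta)\cseed K\log(N/K)$, we get $\vec\Gamma_a'\cdot K'/N = d'/(1+\delta)+o(1)$. Here I expect the calibration to be meant with $d'$ absorbing the factor $(1+\delta)$, or equivalently $M$ being read as the number of tests used in the definition of $d'$. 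In either case the quantity that enters is $\lambda:=\vec\Gamma_a' K'/N\to d'$, and I would simply track it as $d'+o(1)$. This uses $K'=K(1+o(1))$, which holds because $\sqrt{K}\log N=o(K)$ as $K$ is a positive power of $N$ (note $s/\ell=N^{o(1)}$).

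Next, fix $x$ and its slot in $a$. The remaining $\vec\Gamma_a'-1$ item slots of $a$, ignoring the (at most $O(1)$ w.h.p.) further copies of $x$, are filled by items other than $x$. Conditionally on $\partial x$ and $\vec\Gamma_a'$, each slot is occupied by an essentially uniform item among the $N-1$ others, because every item chooses its tests uniformly and independently. So the number $\vec D$ of defective slots other than $x$ is distributed as a sum of $\vec\Gamma_a'-1$ indicators with success probability $(K'-\SIGMA_x)/(N-1)+o(1/\vec\Gamma_a')$. The weak dependence from sampling without replacement within one item's test choices only affects items appearing twice in $a$, which has probability $O(\vec\Gamma_a'^2\Delta'/M^2)=o(1)$. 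I would then compare with $\Bin(\vec\Gamma_a'-1,K'/N)$ and invoke the Poisson approximation (Le Cam: total variation at most $\vec\Gamma_a'(K'/N)^2=o(1)$), together with continuity of Poisson probabilities in the mean, to get $\vec D\to\Po(d')$ in total variation.

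Finally, the outcome is read off. If $\SIGMA_x=0$, then $a$ is positive iff $\vec D\ge t$, which has probability $1-P_{\sss\le t-1}(d')+o(1)$. If $\SIGMA_x=1$, then $a$ is positive iff $\vec D\ge t-1$, which has probability $1-P_{\sss\le t-1}(d')+P_{t-1}(d')+o(1)$. Multiplicity of $x$ in $a$ only adds defective counts with $o(1)$ probability.

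The main obstacle is making the conditional independence rigorous: conditioning on $\partial x$ fixes some of $x$'s edges but leaves the other items' choices independent, which is why conditioning on $\vec\Gamma_a'$ turns $\partial a\setminus\{x\}$ into an exchangeable sample. I would handle this by a balls-into-bins description, in which the edges of the other items are uniformly placed given the degree sequence. A union bound over tests then yields the $1-o(N^{-2})$ statement from the degree concentration.
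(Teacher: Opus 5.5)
Your proposal is correct and is essentially the paper's argument. Given $\vGamma_a'$, the remaining half-edges of $a$ are a uniform sample of the item half-edges; the paper writes the resulting hypergeometric probabilities explicitly and applies Stirling under the degree concentration of \Cref{rem_seed_deg}, whereas you pass through $\Bin(\vGamma_a'-1,K'/N)$ and Le Cam. Your observation that the parameters as literally stated give Poisson mean $d'/(1+\delta)$ rather than $d'$ is a fair point that the paper's proof (``using the choices of our parameters'') glosses over.

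One side remark is wrong, though harmless. Under sampling with replacement, repeated items in $a$ are not rare when $\theta<1/2$: their expected number is of order $N\Delta'^2/M^2=\Theta(N/K^2)$, and your own bound $O(\vGamma_a'^2\Delta'/M^2)$ also diverges there. This does not matter, because the half-edge description makes the defective count with multiplicity exactly hypergeometric, at total-variation distance $O(\vGamma_a'/(N\Delta'))$ from the binomial.
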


\begin{proof}
Let $a \in \partial x$. Since given $\vGamma_a$, the $\vGamma_a$ half-edges of $a$ choose available item half-edges to items uniformly at random without replacement,
\begin{align}
    \Pr(\hat{\SIGMA}_a' = 1 \mid \partial x, \SIGMA_x=0, \vec \Gamma_a') = 1 - \sum_{r=0}^{t-1}\frac{\binom{\Delta'K'}{r}\binom{(N-K')\Delta'-1}{\vGamma_a'-1-r}}{\binom{N\Delta'-1}{\vGamma_a'-1}}.
\end{align}
Similarly,
\begin{align}
    \Pr(\hat{\SIGMA}_a' = 1 \mid \partial x, \SIGMA_x=1, \vec \Gamma_a') = 1 - \sum_{r=0}^{t-2}\frac{\binom{\Delta'K'-1}{r}\binom{(N-K')\Delta'-1}{\vGamma_a'-1-r}}{\binom{N\Delta'-1}{\vGamma_a'-1}}.
\end{align}
Recall that by Remark \ref{rem_seed_deg}, with probability $1-o(N^{-2})$, 
\begin{align*}
     \frac{\Delta' N}{m_{\mathrm{seed}}}-\sqrt{\frac{\Delta' N}{m_{\mathrm{seed}}}}\log N \leq \vGamma'_{\min} \leq \vGamma'_{\max} \leq \frac{\Delta' N}{m_{\mathrm{seed}}}+\sqrt{\frac{\Delta' N}{m_{\mathrm{seed}}}}\log N.
\end{align*}
On this event, using the choices of our parameters and Stirling's formula, 
\begin{align}
    \Pr(\hat{\SIGMA}_a' = 1 \mid \partial x, \SIGMA_x=0, \vec \Gamma_a') = 1 - \sum_{i=0}^{t-1} \frac{(d')^i e^{-d'}}{i!} + o(1) = 1 - P_{\sss \leq t-1}(d') + o(1)
\end{align}
and 
\begin{align}
    \Pr(\hat{\SIGMA}_a' = 1 \mid \partial x, \SIGMA_x=1, \vec \Gamma_a') = 1 - \sum_{i=0}^{t-2} \frac{(d')^i e^{-d'}}{i!} + o(1) = 1 - P_{\sss \leq 2-1}(d') + o(1).
\end{align}
\end{proof}

\Cref{Lem:probT} and the linearity of expectation immediately give the following:

\begin{corollary} \label{Lem:expT_replacement}
    The conditional expectations of the scoring function $\mathcal{T}_x = \sum_{a \in \partial x} \hat{\SIGMA}_{a}'$ satisfy
    \begin{align*} 
        \mathbb{E}[\mathcal{T}_x \mid \partial x, \SIGMA_x=0] &= \Delta'(1-P_{\sss \leq t-1}(d')) + o(\Delta'), \\ 
        \mathbb{E}[\mathcal{T}_x \mid \partial x, \SIGMA_x=1] &= \Delta'(1-P_{\sss \leq t-1}(d') + P_{t-1}(d')) + o(\Delta').
    \end{align*}
\end{corollary}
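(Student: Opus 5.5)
The plan is to write $\mathcal{T}_x=\sum_{a\in\partial x}\hat{\SIGMA}_a'$ as a sum of indicators, one per test in the multiset $\partial x$, with multi-edges counted separately. There are exactly $\Delta'$ of them. Linearity of conditional expectation then gives
\[
\Erw[\mathcal{T}_x\mid \partial x,\SIGMA_x=\sigma]=\sum_{a\in\partial x}\Pr\bc{\hat{\SIGMA}_a'=1\mid \partial x,\SIGMA_x=\sigma},\qquad \sigma\in\{0,1\}.
\]
Each summand must then be replaced by the corresponding limit from \Cref{Lem:probT}. That lemma conditions additionally on $\vGamma_a'$, so the first step is the tower property: average the lemma's conditional probability over $\vGamma_a'$, given $\partial x$ and $\SIGMA_x$.

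The estimate of \Cref{Lem:probT} holds only on the degree-concentration event of \Remark{rem_seed_deg}, which has probability $1-o(N^{-2})$. On that event the conditional probability is $1-P_{\sss \leq t-1}(d')+o(1)$ when $\SIGMA_x=0$, and $1-P_{\sss \leq t-1}(d')+P_{t-1}(d')+o(1)$ when $\SIGMA_x=1$. The second form uses the identity $1-(P_{\sss\le t-1}-P_{t-1})=1-P_{\sss\le t-2}$. Off the event, bound the probability trivially by $1$. The resulting error is $o(N^{-2})$ conditionally on $\partial x$, except on a set of $\partial x$ of tiny probability. The $o(1)$ terms are uniform over all degrees in the concentration window, because the Stirling/Poisson approximation in the proof of \Cref{Lem:probT} depends only on $\vGamma_a'=(1+o(1))\Delta'N/\mseed$ and $K'=(1+o(1))K$.

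Summing $\Delta'$ such terms gives $\Delta'(\cdot)+\Delta'\,o(1)+\Delta'\,o(N^{-2})=\Delta'(\cdot)+o(\Delta')$, which are the two claimed formulas.

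The only real obstacle is uniformity: the $o(1)$ errors must be uniform over $a\in\partial x$, so that their sum is $o(\Delta')$. This should follow from the uniform-in-degree statement above. I would also remark that conditioning on the bad event can be absorbed, with the corollary understood as holding \whp\ over $\partial x$.
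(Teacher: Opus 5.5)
Your proposal is correct and follows the paper's route: the paper derives the corollary in one line from \Cref{Lem:probT} and linearity of expectation over the $\Delta'$ tests in $\partial x$. Your extra bookkeeping simply makes explicit what the paper leaves implicit in the word ``immediately''. That bookkeeping consists of averaging over $\vGamma_a'$, bounding the probability of the exceptional degree event trivially, and noting that the $o(1)$ terms are uniform over the concentration window, so that their sum is $o(\Delta')$.
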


\begin{proof}[Proof of \Cref{prop_naive_thresholding}]
Let $\alpha \in (0,1)$. We first consider the auxiliary threshold 
\[ T_{\alpha} = \Delta' \bc{1-P_{\sss \leq t-1}(d') + \alpha P_{t-1}(d') }. \]

While the test outcomes $\hat{\SIGMA}_a'$ for $a' \in \partial x$ are not strictly independent (since tests compete for a fixed number of defective edges), given $\partial x$, they are negatively associated indicator random variables. By Theorem \ref{ChernNeg}, sums of negatively associated random variables obey identical Chernoff bounds as independent random variables.

Let $\eps_1=\tfrac{\alpha P_{t-1}(d')}{1-P_{\sss \leq t-1}(d')}$ and $\eps_2 = \frac{(1-\alpha)P_{t-1}(d')}{1-P_{\sss \leq t-1}(d')+P_{t-1}(d')}$. 
Since then $(1 + \eps_1 +o(1)) \Erw \brk{\cT_x \mid \partial x, \SIGMA_x=0} = T_{\alpha}$ and $(1 - \eps_2 +o(1)) \Erw\brk{\cT_x \mid \partial x, \SIGMA_x=1} = T_{\alpha}$, and using \Cref{lem_chernoff}, we find 
\begin{align*}
    & \Pr\bc{ \cT_x >  T_{\alpha} \mid \SIGMA_x=0} \leq \exp\bc{-\frac{\eps_1^2}{2 + 2\eps_1/3} \Erw\brk{\cT_x \mid \partial x, \SIGMA_x=0}+o(\Delta') } \qquad \text{and}  \\
    & \Pr\bc{\cT_x \leq  T_{\alpha}\mid \SIGMA_x=1} \leq \exp\bc{-\frac{\eps_2^2}{2} \Erw\brk{\cT_x \mid \partial x, \SIGMA_x=1}+o(\Delta') }.
\end{align*}
We next  make use of our choice of $d'$, according to which $1-P_{\sss \leq t-1}(d') = 1/2$, so that $\eps_1 = 2\alpha P_{t-1}(d')$ and $\eps_2 = \frac{2(1-\alpha)P_{t-1}(d')}{1+2P_{t-1}(d')}$. Recalling that $\Delta' = (1+\delta)d'c(1-\theta')\log N$, the expected numbers of
 false diagnoses of defective or non-defective items can be upper bounded as
\begin{align*}
    \Erw\brk{\sum_{x\in V_0}  \vecone\{\cT_x >  T_{\alpha}\} } \leq& N \exp\bc{-\bc{\frac{(\alpha  P_{t-1}(d'))^2}{1 + 2\alpha P_{t-1}(d') /3} (1+\delta)d'c(1-\theta')+o(1)}\log N  },  \\
   \Erw\left[\sum_{x\in V_1}  \vecone\{\cT_x \leq  T_{\alpha}\} \right] \leq& K' \exp\bc{-\bc{\frac{((1-\alpha) P_{t-1}(d')) ^2}{1+2 P_{t-1}(d')} (1+\delta)d'c(1-\theta')  +o(1)} \log N}.
\end{align*} 

In order to show that \whp, there are no misclassified defective and non-defective items, it is sufficient to show that both terms on the r.h.s. are $o(1)$. The choices of $\alpha=\alpha^*$ as in \cref{def_thr_para} and $c=c_{\mathrm{seed}}$ as in \cref{eq:mseed}
ensure that 
\begin{align}\label{eq_seed_aq}
    (\alpha^*)^2 c_{\mathrm{seed}} = \frac{2}{1-\theta'} \cdot \frac{1}{d'P_{t-1}(d')^2} \qquad \text{and} \qquad  (1-\alpha^*)^2 c_{\mathrm{seed}} = \frac{2\theta'}{1-\theta'} \cdot \frac{1}{d'P_{t-1}(d')^2}.
\end{align}
Substituting these into the above two exponents translates into
\begin{align*}
    \Erw\brk{\sum_{x\in V_0}  \vecone\{\cT_x >  T_{\alpha}\} } \leq& \exp\bc{\bc{1-\frac{2}{1 + 2\alpha P_{t-1}(d') /3} (1+\delta)+o(1)}\log N  },  \\
   \Erw\left[\sum_{x\in V_1}  \vecone\{\cT_x \leq  T_{\alpha}\} \right] \leq& \exp\bc{\bc{1-\frac{2}{1+2 P_{t-1}(d')} (1+\delta) +o(1)} \theta' \log N}.
\end{align*}
Finally, since $P_{t-1}(d') \leq P_{\sss \leq t-1}(d') = \frac{1}{2}$ and $\alpha^* \leq 1$, both $1 + 2\alpha P_{t-1}(d') /3\leq 2$ and $1+2 P_{t-1}(d') \leq 2$. 
Therefore, for $T=T_{\alpha^*}$,  with $m = (1+\delta) \frac{2(1+\sqrt{\theta})}{1-\sqrt{\theta}} \cdot \frac{1}{d' P_{t-1}(d')^2} K \log \frac{N}{K} $ tests, the \bth\ algorithm recovers the ground truth correctly \whp
\end{proof}

\section{Proof Approximate Recovery (Proof of \Cref{prop_step1})} \label{Sec:Approx}
We split the proof of \Cref{prop_step1} into two main parts.
First, \Cref{lem_nondef} provides a \whp\ upper bound on the number of misclassified non-defective items, if the algorithm had always access to the true labels $\SIGMA$. To counterbalance the assumption of perfect knowledge, we look at slightly larger deviations from the approximate means of the statistics of defective items.  
\Cref{lem_def} provides a similar \whp\ bound on the number of misclassified defective items. 
The proofs can be found in Section \ref{sec:lem_nondef} and \ref{sec:lem_def}, respectively.

\begin{restatable}[]{lemma}{lemnondef}
     \label{lem_nondef}
        Suppose that $m \geq (1+\eps) \minf$. \Whp, there exists a $\delta=\delta(\eps)>0$ such that
        \begin{align}\label{eq_lem_nondef}
             \sum_{s < i \leq \ell} \sum_{x\in V_0[i]} \vecone\cbc{ \bigcap_{1 \leq j \leq s} \bigcap_{0\leq r \leq t-1} \cbc{\vW_{x,j}^{r,+}(\SIGMA) \geq (1-2\zeta) W_{i,j}^{r,+}, \vW_{x,j}^{r,-}(\SIGMA) \leq (1+2\zeta) W_{i,j}^{r,-}  }} \leq k^{1-\delta}.
        \end{align}
    \end{restatable}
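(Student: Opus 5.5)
We bound the expected number of non-defective items $x\in V_0[i]$, $i>s$, that pass all $2st$ tests in \eqref{eq_lem_nondef}, and then apply Markov's inequality. Since there are at most $n$ such items, it suffices to show that for a single $x$ the passing probability is at most $n^{-1}k^{1-2\delta}$. Equivalently, we need a large deviation exponent exceeding $(1-(1-2\delta)\theta)\log n=\bc{1+\Omega(\delta)}\log(n/k)$.

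\textbf{Conditioning and local independence.} We condition on $\SIGMA$, on the event $\cE$ of \Cref{Lemma_GammaMinMax}, and on the multiset $\partial x$. The $\Delta/s$ neighbours of $x$ in each compartment $F[i+j-1]$ are distinct \whp\ up to $O(1)$ repeats. The remaining half-edges of each such test $a$ choose items uniformly, so the number of defectives of $a$ in $V[i-(s-j)],\dots,V[i-1]$ and in $V[i],\dots,V[i+j-1]$ are, up to $1+o(1)$ factors in the point probabilities, independent Binomials. Their means are $d^*(s-j)/s$ and $d^*j/s$, so the Poisson approximation applies. Because $x$ is non-defective, each neighbour in compartment $j$ is independently of type $(r,+)$ or $(r,-)$ with probabilities $q_{0,j}^{r,\pm}(1+o(1))$. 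Across the $\Delta$ neighbours these types are negatively associated, via the usual balls-into-bins argument on a fixed number of defective half-edges. Hence the vector $(\vW_{x,j}^{r,\pm})_{r}$ is dominated, for the relevant monotone events, by independent multinomials with $\Delta/s$ trials per $j$.

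\textbf{Large deviations.} For a fixed $j$, the event requires the empirical type distribution to be close to $q_{1,j}^{r,\pm}$ while the true law is $q_{0,j}^{r,\pm}$. By Sanov/Chernoff (\Cref{lem_chernoff} together with a multinomial counting bound), the probability is at most $\exp\bc{-(1-O(\zeta))\frac{\Delta}{s}\,\KL{q_{1,j}}{q_{0,j}}}$, with the $(1-O(\zeta))$ loss coming from the $2\zeta$ slack. The tilt is one-sided, since positives are only required to be at least and negatives at most a given value, and the rate is monotone in the right direction. Therefore the KL divergence is indeed the exponent; I would verify this with the Lagrangian/tilt parameter $\lambda_r=\log(q_1/q_0)$. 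Summing over $j$ gives the exponent
$$\Delta\cdot\frac1s\sum_{j=1}^s \KL{q_{1,j}}{q_{0,j}}\;\longrightarrow\;\Delta\int_0^1 D(u)\,du,$$
a Riemann sum with $s\to\infty$, where $D(u)$ is the divergence computed with $u=j/s$.

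\textbf{Key computation (main obstacle).} The hard step is evaluating this integral in closed form and showing that
$$c\,d^*\int_0^1 D(u)\,du \;\ge\; (1+\Omega(\eps))\bc{1+\theta/(1-\theta)}\cdot(1-\theta)\ \text{(in units of }\log(n/k))$$
whenever $c\ge(1+\eps)\cinf$. I expect to show that $d^*\int_0^1 D(u)\,du=-d^*\log(1-P_{t-1}(d^*))$, the quantity in $c_2$. Then $c\ge(1+\eps)c_2(d^*,\theta)$ gives the exponent $(1+\eps)\frac{\theta}{1-\theta}\log(n/k)$. This is not yet $\log n$, so the bound must be combined with the factor $|V_0|\le n$ via $\log n=\frac{1}{1-\theta}\log(n/k)$; checking that the constants match, including the additional contribution from the entropy constraint $c_1$, is exactly where I expect care is needed. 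To compute the integral, note that the positive-type terms give $\sum_r p_r(u)\bar F_{t-r-1}\log(\bar F_{t-r-1}/\bar F_{t-r})$ and similarly for the negative terms. I would seek an antiderivative in $u$ by writing the Poisson mixture terms as derivatives in $u$ of $\Pr(\Po(d^*)\le t-1)$-type expressions, using $\frac{d}{du}\Pr(\Po(d^*u)\ge m)=d^*\Pr(\Po(d^*u)=m-1)$. The telescoping should yield a boundary evaluation at $u=0,1$ involving $\log(1-P_{t-1}(d^*))$. Finally, the $o(1)$ Poisson errors, the $O(\zeta)$ slack and the Riemann-sum error are all absorbed into $\delta=\delta(\eps)$, and Markov's inequality yields \eqref{eq_lem_nondef}.
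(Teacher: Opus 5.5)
There is a genuine gap, and it is in the step you flag as the main obstacle. Your reduction is sound. The exponent for a single $x\in V_0[i]$ is $(1+o(1))\Delta\cM_s$, with $\cM_s\to\int_0^1D(u)\,\mathrm{d}u$ and $D(u)=\sum_{r=0}^{t-1}P_r(d^*(1-u))\,\KL{P_{\sss\geq t-r-1}(d^*u)}{P_{\sss\geq t-r}(d^*u)}$. The one-sided constraints do put the minimiser at $q_{1,j}$.

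However, the value you expect for the integral is false. Already for $t=1$ one has $D(u)=-\mathrm{e}^{-d(1-u)}\log(1-\mathrm{e}^{-du})$, whose integral over $[0,1]$ is $H(1-\mathrm{e}^{-d})/d$. At $d=\log 2$ this equals $1$, whereas $-\log(1-P_0(\log 2))=\log 2$.

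For general $t$, the integrand is the exact $z$-derivative of $\frac1d\sum_{r=0}^{t-1}P_r(d(1-z))H(P_{\sss\geq t-r}(dz))$ (\Cref{claim:antideriv}). This is the telescoping you were looking for: shift the index in the $P_{r-1}(d(1-z))$ terms and use $H(P_{\sss\geq 0})=0$. The antiderivative vanishes at $z=0$ and equals $H(P_{\sss\geq t}(d))/d$ at $z=1$. Hence $d^*\int_0^1D(u)\,\mathrm{d}u=H(P_{\sss\leq t-1}(d^*))=1/c_1(d^*)$, which is the \emph{entropy} quantity, not the one in $c_2$. The condition $c\geq(1+\eps)c_1(d^*)$ alone then gives $\Delta\cM_s\geq(1+\eps-o(1))\log(n/k)$. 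So $\Erw[\vZ]\leq n(k/n)^{1+\eps-o(1)}=k(k/n)^{\eps-o(1)}$, and Markov's inequality gives the lemma for any fixed $\delta<\eps(1-\theta)/\theta$.

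Your fallback of combining the bound with the entropy constraint cannot close the argument. Suppose your identity held. Then $c\geq(1+\eps)c_2(d^*,\theta)$ would only give the exponent $(1+\eps)\frac{\theta}{1-\theta}\log(n/k)=(1+\eps)\log k$, i.e.\ $\Erw[\vZ]\lesssim nk^{-1-\eps}$. This exceeds $k^{1-\delta}$ for every $\delta>0$ as soon as $\theta<1/(2+\eps)$, so Markov gives nothing. The false-positive exponent is the single number $\Delta\cM_s$, so there is no separate ``entropy contribution'' that could be added to it.

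The constraint $c_2$ plays no role in this lemma. It enters only in the cleaning phase (\Cref{Lem_distphixstar}), where $\Delta\cdot(-\log(1-P_{t-1}(d^*)))\geq(1+\eps)\log k$ makes the union bound over the $k$ defectives work.

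Apart from this, your outline agrees with the paper's \Cref{claim_pnondef}: independence over $j$ given $\vk$, the multinomial/Stirling bound with the harmless $(\Delta/s)^{s(2t+1)}=\mathrm{e}^{o(\Delta)}$ prefactor, absorbing the $2\zeta$ slack by continuity, and the Riemann sum. You also do not need a negative-association domination, which would be delicate for an event mixing upper and lower bounds. Conditionally on $\vk$, \Cref{claim_expextation} gives the multinomial law directly.
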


\begin{restatable}[]{lemma}{lemdef}
 \label{lem_def}
    Suppose that $m \geq (1+\eps) \minf$. \Whp~ 
    \begin{align} \label{eq_def_count}
       &   \sum_{s < i \leq \ell} \sum_{x\in V_1[i]}   \vecone
       \Big\{ \bigcup_{1 \leq j \leq s} \bigcup_{0\leq r \leq t-1} \Big\{ 
       \abs{\vW_{x,j}^{r,+}(\SIGMA) - W_{1,j}^{r,+}} > (\zeta/2) W_{1,j}^{r,+} \nonumber \\ & \qquad \qquad \qquad \qquad \qquad \qquad \qquad \quad\text{ or }\abs{\vW_{x,j}^{r,-}(\SIGMA) - W_{1,j}^{r,-}} >   (\zeta/2) W_{1,j}^{r,-}\Big\}\Big\} \nonumber \\
       &\leq   k\cdot \exp\bc{-\Omega\bc{\frac{\log n}{\bc{\log \log n}^{2t+3}}}}.
    \end{align}
\end{restatable}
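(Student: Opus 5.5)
We want to show that for most defective items $x\in V_1[i]$, all $2ts$ counts $\vW_{x,j}^{r,\pm}(\SIGMA)$ lie within a factor $1\pm\zeta/2$ of $W_{1,j}^{r,\pm}$. The plan is to bound the failure probability for a single fixed defective $x$ and sum over $x$. Markov's inequality then converts this into a \whp\ bound. We work on the event $\cE$ from \Cref{Lemma_GammaMinMax}, which holds with probability $1-o(n^{-2})$. We also condition on the compartment sizes $|V_1[h]|$ and on $\vGamma^{(h)}_a$ for all tests $a$.

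\textbf{Step 1: per-test probabilities.} Fix $x\in V_1[i]$, $j\in[s]$ and a neighbour $a\in\partial x\cap F[i+j-1]$. The test $a$ receives roughly $\Delta n/(ms)$ half-edges from each compartment $V[i-(s-j)],\dots,V[i+j-1]$. Apart from $x$'s own edge, these half-edges pick defectives essentially hypergeometrically with success probability $\approx k/n$.

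Following the Stirling computation of \Cref{Lem:probT} compartment by compartment, the number of defective neighbours of $a$ in the earlier $s-j$ compartments is asymptotically $\Po(d^*(s-j)/s)$. The number in the later $j$ compartments, excluding $x$, is asymptotically an independent $\Po(d^*j/s)$. Since $\sigma_x=1$ contributes one defective edge, the probability that $a$ is positive with exactly $r$ earlier defectives is $q_{1,j}^{r,+}(1+o(1))$; analogously we get $q_{1,j}^{r,-}$.

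The error term must be controlled quantitatively. The Binomial-to-Poisson and hypergeometric-to-binomial errors are $O(\log n/\sqrt{k/\ell}+1/\log n)$. The defect-count fluctuations from \eqref{eqG1} give a relative error $O(\sqrt{\ell/k}\log n)$. All of this is $o(\zeta)$ uniformly on $\cE$. So $\Erw[\vW_{x,j}^{r,\pm}(\SIGMA)]=(1+o(\zeta))W_{1,j}^{r,\pm}$.

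\textbf{Step 2: concentration.} Conditionally on $\partial x$ and the degrees, the indicators "$a$ is $\pm$ with exactly $r$ earlier defectives" for the $\Delta/s$ tests $a\in\partial x\cap F[i+j-1]$ are negatively associated. This is the same mechanism used in the proof of \Cref{prop_naive_thresholding}: tests compete for a fixed pool of defective half-edges. Multi-edges occur only with probability $O(\Delta^2 \ell/m)=o(1)$ per item, and we can absorb them into the error set. By \Cref{ChernNeg} and \Cref{lem_chernoff}, for each $(j,r,\pm)$,
\[
\Pr\bc{\abs{\vW_{x,j}^{r,\pm}(\SIGMA)-W_{1,j}^{r,\pm}}>(\zeta/2)W_{1,j}^{r,\pm}}\le 2\exp\bc{-\Omega(\zeta^2 W_{1,j}^{r,\pm})}.
\]

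\textbf{Step 3 (main obstacle): lower-bounding $W_{1,j}^{r,\pm}=q_{1,j}^{r,\pm}\Delta/s$.} When $j=1$ or $j=s$, one Poisson parameter is $d^*/s$ with $s=\lceil\log\log n\rceil$. Then probabilities such as $\Pr(\Po(d^*/s)\ge t-r-1)$ are only of order $s^{-(t-r-1)}$, and the factor $\Pr(\Po(d^*(s-j)/s)=r)$ can similarly be of order $s^{-r}$. The worst case gives $q\ge \Omega(s^{-(t-1)})$, hence $W\ge\Omega(\log n/s^{t})$. Combined with $\zeta^2=(\log\log\log n)^{-2}$, the exponent is at least $\Omega(\log n/(\log\log n)^{t+1})$. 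This is comfortably within the claimed $(\log\log n)^{2t+3}$, leaving slack for sloppier estimates.

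The degenerate case is $q_{1,j}^{r,-}=0$ when $r=t-1$. There $\vW^{t-1,-}_{x,j}=0$ deterministically, because a test containing a defective $x$ plus $t-1$ earlier defectives is positive. So that condition holds trivially and is excluded. One must also check that the $o(\zeta)$ relative error of Step 1 remains valid for these small $q$. This holds because the additive errors are polynomially small in $\log n$ with a higher power.

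\textbf{Conclusion.} A union bound over the $2ts$ triples $(j,r,\pm)$ and the $\lceil\log n\rceil$-scale constants gives per-item failure probability $\exp(-\Omega(\log n/(\log\log n)^{2t+3}))$. Summing over at most $k(1+o(1))$ defectives and applying Markov's inequality with an extra factor $\log\log n$ absorbed into the $\Omega$ yields \eqref{eq_def_count} \whp.
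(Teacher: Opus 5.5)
Your outline coincides with the paper's proof. The means come from the Poisson approximation of \Cref{claim_distpq}. Each of the $O(st)$ counts is then controlled by a Chernoff bound, using a lower bound on $W_{1,j}^{r,\pm}$ in the non-degenerate cases, and a union bound plus Markov's inequality finishes. The step that does not go through as written is Step~2.

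The closure property behind the proof of \Cref{prop_naive_thresholding} transfers negative association only to functions of disjoint blocks of coordinates that are \emph{all non-decreasing} (or all non-increasing). It applies there because $\hat\SIGMA_a$ is a monotone function of the defective count of $a$. Write $E_a$ and $L_a$ for the numbers of defective half-edges of $a$ coming from $V[i-(s-j)],\dots,V[i-1]$ and from $V[i],\dots,V[i+j-1]$ (excluding $x$'s own). Your indicators are then $\vecone\{E_a=r,\,L_a\ge t-r-1\}$ and $\vecone\{E_a=r,\,L_a<t-r-1\}$. The first is not monotone in $E_a$, and the second is in addition increasing in one count and decreasing in the other. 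Indicators of this kind need not be negatively associated: for two tests sharing a fixed total of $2r$ defective half-edges, $\vecone\{E_1=r\}=\vecone\{E_2=r\}$. So \Cref{ChernNeg} cannot be invoked.

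The paper avoids this by using the multinomial law of \Cref{claim_expextation}, under which each $\vW_{x,j}^{r,\pm}(\SIGMA)$ is binomial and the ordinary Chernoff bound applies. There are two other ways to repair your Step~2.
\begin{itemize}
\item Resample $\partial x$ independently of the rest of the design, as in the proof of \Cref{Lem_distphixstar}. Given all other edges, the $\Delta/s$ draws of $x$ from $F[i+j-1]$ are i.i.d.\ uniform. So $\vW_{x,j}(\SIGMA)$ is exactly multinomial with the empirical type frequencies of $F[i+j-1]$, apart from the multi-edge items you discard. It remains to show these frequencies are $(1+o(\zeta))q_{1,j}^{r,\pm}$, a concentration statement of the kind proved in \Cref{Lemma_m0_spatial}.
\item Split each indicator into monotone pieces, e.g.\ $\vecone\{E_a=r,L_a\ge u\}=\vecone\{E_a\ge r,L_a\ge u\}-\vecone\{E_a\ge r+1,L_a\ge u\}$. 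Apply the negatively associated Chernoff bound to each piece, using that every piece has mean $O(W_{1,j}^{r,\pm})$.
\end{itemize}

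Three smaller points. First, for $j=s$ the earlier Poisson parameter is $0$, not $d^*/s$; the value $d^*/s$ corresponds to $j=s-1$. Tests in $F[i+s-1]$ have no neighbours in $V[i-s+1],\dots,V[i-1]$, so $\vW_{x,s}^{r,\pm}(\SIGMA)=W_{1,s}^{r,\pm}=0$ for $1\le r\le t-1$. These indices must be discarded as trivially satisfied, just like your case $r=t-1$ with negative outcome. That case, incidentally, is overlooked by the paper's bound \eqref{eq_lower_W}.

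Second, on the remaining indices your bound $q_{1,j}^{r,\pm}=\Omega(s^{-(t-1)})$ is correct. Combined with the multiplicative Chernoff bound, it gives a better exponent than the paper's $\zeta^2\Delta/s^{2t+1}$.

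Third, a per-item multi-edge probability that is merely $o(1)$ could not be absorbed into $k\exp(-\Omega(\log n/(\log\log n)^{2t+3}))$. What you actually have is $O(\Delta^2\ell/m)=O(\log^{3/2}n/k)$, which is enough.
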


As mentioned, Lemmas~\ref{lem_nondef} and \ref{lem_def} are not exactly what we ultimately need, since our algorithm will misclassify items along the way. 
Since our algorithm uses an approximation $\tau$ of $\SIGMA$ throughout, we will now show that this lack of perfect information can be absorbed into the additional slack that the previous two lemmas have left ($2\zeta$ instead of $\zeta$ in \Cref{lem_nondef}, and $\zeta/2$ instead of $\zeta$ in \Cref{lem_def}).
To achieve this we will use the following expansion property:

\begin{lemma}[Lemma 3.3 of \cite{coja2025noisy}] 
	Let $0<\alpha,\beta<1$ be such that $\alpha+\beta>1$.
	Then \whp{} for any $T \subset V$ of size $|T|\leq\exp(-\log^\alpha n)k$, in $\G_{\mathrm{sc}}$,
	\begin{align*}
		\abs{\cbc{x\in V:\sum_{a\in\partial x\setminus F[0]}\vecone\cbc{T\cap\partial a\setminus \cbc x\neq\varnothing}\geq\log^{\beta}n}}\leq\frac{|T|}{8\log\log n}.
	\end{align*}
\label{lemma_endgame_misclassified}
\end{lemma}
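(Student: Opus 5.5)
We will prove \Cref{lemma_endgame_misclassified} with a first-moment bound over all pairs of a candidate bad set $T$ and an image set $Y$, in the spirit of \cite{coja_spiv,coja2025noisy}. Fix $t_0\le \exp(-\log^\alpha n)k$ and set $y=\lceil t_0/(8\log\log n)\rceil$. Suppose the conclusion fails for some $T$ with $|T|=t_0$. Then there is a set $Y\subset V$ with $|Y|=y$ such that every $x\in Y$ has at least $\log^\beta n$ tests $a\in\partial x\setminus F[0]$ with $\partial a\cap T\setminus\{x\}\neq\varnothing$. Summing over $Y$ produces at least $y\log^\beta n$ item--test incidences $(x,a)$ such that test $a$ also receives an edge from $T$.

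\textbf{Step 1 (test set).} The tests involved all lie in $\partial T$, and $|\partial T|\le t_0\Delta=O(t_0\log n)$. So the event implies that at least $y\log^\beta n$ of the $y\Delta$ edges leaving $Y$ land in a fixed set $R$ of at most $t_0\Delta$ tests. Each item joins pools uniformly within its window of $s$ compartments, each of size $m/\ell$. Hence, conditional on $T$ and its edges, each edge from $x\in Y\setminus T$ hits $R$ with probability at most
\[p\le \frac{t_0\Delta\ell}{m}\le \exp(-\log^\alpha n)\,O(\sqrt{\log n}\,\log n/\log(n/k)).\]
Items in $Y\cap T$ are handled by splitting $Y$ into $Y\setminus T$ and $Y\cap T$. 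For $x\in Y\cap T$ we only count tests that contain another element of $T$, and we treat these edges symmetrically in the same union bound; this costs at most a factor $2^{t_0}$. Edge choices of distinct items are independent, so the number of hits is dominated by $\Bin(y\Delta,p)$.

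\textbf{Step 2 (union bound).} The probability that some pair $(T,Y)$ is bad is at most
\[\binom{n}{t_0}\binom{n}{y}\binom{y\Delta}{y\log^\beta n}p^{y\log^\beta n}\le \exp\Big(t_0\log n+y\log n+y\log^\beta n\,\big(\log(e\Delta/\log^\beta n)+\log p\big)\Big).\]
Here $\log p\le -\log^\alpha n+O(\log\log n)$ and $\log(e\Delta/\log^\beta n)=O(\log\log n)$. So the exponent is at most
\[t_0\log n-y\log^{\alpha+\beta}n\,(1-o(1)).\]
Since $y\log^{\alpha+\beta}n\ge t_0\log^{\alpha+\beta}n/(8\log\log n)$ and $\alpha+\beta>1$, this dominates $t_0\log n$ by a factor $\log^{\alpha+\beta-1}n/\log\log n\to\infty$. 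The bound is therefore $\exp(-\omega(t_0\log n))$, and summing over $t_0\ge1$ gives $o(1)$.

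\textbf{Main obstacle.} The delicate point is the dependence between the edges of $T$, which define $R$, and the edges of $Y$, especially when $Y\cap T\neq\varnothing$. For the same reason, an item's own edges cannot be allowed to create the test $a$. The exclusion $\setminus\{x\}$ in the statement handles exactly this. A single test may receive several edges from $x$, but these only reduce the number of distinct tests, so the count of edges hitting $R$ still dominates. I would reveal $T$'s edges first, then the edges of $Y\setminus T$ independently. For $Y\cap T$, I would expose for each $x$ the edges of $T\setminus\{x\}$ first and bound uniformly over this. The extra combinatorial factors are $e^{O(t_0)}$ and are absorbed by the $\omega(t_0\log n)$ margin. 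The restriction $a\notin F[0]$ ensures that all compartment sizes are $m/\ell$, which is what makes the probability bound for $p$ valid.
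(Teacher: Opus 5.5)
\textbf{The overall strategy is right; the gap is the case $Y\cap T$, which you flag yourself.} The paper does not prove this lemma but imports it from \cite{coja2025noisy}. Your first-moment count over pairs $(T,Y)$ with $|Y|=\lceil |T|/(8\log\log n)\rceil$ is the natural route.

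For the part $Y\setminus T$ your argument is sound. Conditional on the edges of $T$, the at most $t_0\Delta$ tests of $\partial T\setminus F[0]$ are fixed, and the edges of each $x\notin T$ are independent of them. Each such edge hits this set with probability at most $p\le d^*\ell\exp(-\log^\alpha n)$. Your exponent comparison using $\alpha+\beta>1$ is also correct.

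For $Y\cap T$, exposing ``for each $x$ the edges of $T\setminus\{x\}$ first'' gives a bound $\binom{\Delta}{\log^\beta n}p^{\log^\beta n}$ only for one $x$ at a time. It does not justify multiplying these factors over all $x\in Y\cap T$, which your bound $p^{y\log^\beta n}$ requires. The conditioning used for $x$ contains the edges of $x'$ and vice versa. Moreover, the events are positively correlated, because a test containing two items $x,x'\in Y\cap T$ counts for both. For example, if the items of $Y\cap T$ pair up and each pair shares $\log^\beta n$ tests, the probability is of order $p^{|Y\cap T|\log^\beta n/2}$, which is the square root of your claimed bound. The factor $2^{t_0}$ you mention does not address this.

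\textbf{How to repair it.} Use a sequential exposure with a factor-$2$ charging:
\begin{itemize}
\item Expose the edges of $T\setminus Y$ first, then the items of $Y\cap T$ one by one, then the items of $Y\setminus T$.
\item Call an edge of an item of $Y$ a \emph{hit} if it lands in a test of $\partial(\cdot)\setminus F[0]$ of the items of $T$ exposed before it.
\end{itemize}
The target set never has more than $t_0\Delta$ tests. So, conditionally on the past, each edge is a hit with probability at most $p$, and the number of hits is dominated by $\Bin(y\Delta,p)$.

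Next, count the hits. Each $x\in Y\setminus T$ contributes at least $\log^\beta n$ hits. Now take a test $a$ that counts for $j_a\ge1$ items of $Y\cap T$; these are exactly the items of $Y\cap T$ in $a$.
\begin{itemize}
\item Every such item other than the first $T$-item exposed in $a$ has a hit in $a$.
\item If that first $T$-item lies in $Y\cap T$, then all other $T$-items in $a$ also lie in $Y\cap T$, since $T\setminus Y$ was exposed earlier. Hence $j_a\ge 2$ and $j_a-1\ge j_a/2$.
\end{itemize}
Distinct (item, test) pairs give distinct hit edges, so in total there are at least $y\log^\beta n/2$ hits.

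The resulting exponent is $t_0\log n+y\log n-\frac{y}{2}\log^{\alpha+\beta}n\,(1-o(1))$. This is still $-\omega(t_0\log n)$, because the margin $\log^{\alpha+\beta-1}n/\log\log n$ absorbs the lost factor $1/2$. The rest of your union bound then goes through unchanged.
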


This lemma states that for a small set of items, the number of items that share many tests with them is even smaller. 
This property limits the effect of error propagations in the spatial coupling scheme $\G_{\mathrm{sc}}$, and allows us to complete the proof of \Cref{prop_step1}:

\begin{proof}[Proof of \Cref{prop_step1}]
    Let $i \in [\ell]$ and $M[i]$ be the items in compartment $i$ that step 2 of \SPOT\ misclassifies. 
    We claim that \whp, for all $i \in [\ell]$,
    \begin{align}
    \label{eq_miscl_proc}
        \abs{M[i]} \leq k \exp\bc{-\frac{\log n}{(\log\log n)^{2t+3}}}.
    \end{align}
    This implies the claim as the total number of compartments is $\ell = \sqrt{\log n}$. 

    We show \eqref{eq_miscl_proc} by induction on $h$, the induction hypothesis being that \eqref{eq_miscl_proc} is valid for all $i\leq h$. 
    
For $i \in [s]$, \Cref{prop_naive_thresholding} shows that $\abs{M[i]}=0$ \whp, which initialises the induction hypothesis.

    To advance the induction hypothesis, suppose the claim is true for all $i$ with $1 \leq i\leq h-1$, and we prove that it also holds for $h \geq s+1$.  
    We trace potential misclassifications back to three causes: Let $M_1[h]$ denote the set of defectives in compartment $h$ which do not look like defective items even if no mistakes were made in previous compartments, $M_2[h]$ the set of non-defectives that look like defective items even if no mistakes were made in previous compartments, and $M_3[h]$ the set of items whose neighbourhoods already contain a lot of misclassified items. 
    More precisely, we define
    
{\footnotesize
    \begin{align*}
        M_1[h] &= \cbc{x\in V_1[h]  : \exists 0 \leq r \leq t-1, j \in [s]:   
        \abs{\vW_{x,j}^{r,+}(\SIGMA)-W_{1,j}^{r,+}} \geq \frac{\zeta W_{1,j}^{r,+}}{2} \text{ or } 
        \abs{\vW_{x,j}^{r,-}(\SIGMA)-W_{1,j}^{r,-}} \geq \frac{\zeta W_{1,j}^{r,-}}{2}} , \\
        M_2[h] &= \cbc{x\in V_0[h] : \forall 0 \leq r \leq t-1, j \in [s]: 
        \vW_{x,j}^{r,+}(\SIGMA) > (1-2\zeta)W_{1,j}^{r,+} \text{ and } \vW_{x,j}^{r,-}(\SIGMA) < (1+2\zeta)W_{1,j}^{r,-} }, \\
        M_3[h] &= \cbc{x\in V[h] : \exists 0 \leq r \leq t-1, j \in [s]: 
        \abs{\vW_{x,j}^{r,+}(\SIGMA)-\vW_{x,j}^{r,+}(\tau)} > \frac{\zeta W_{1,j}^{r,+}}{4} \text{ or } \abs{\vW_{x,j}^{r,-}(\SIGMA)-\vW_{x,j}^{r,-}(\tau)} > \frac{\zeta W_{1,j}^{r,-}}{4}}.
    \end{align*}
    }
     
Note that $M[h]\subseteq M_1[h]\cup M_2[h] \cup M_3[h]$. 
Observe that $\vW_{x,j}^{r,+}(\SIGMA)=\vW_{x,j}^{r,+}(\tau)$ for $j=s$ and all $0 \leq r \leq t-1$. 
Furthermore, for $j \in [s-1]$ and $0\leq r \leq t-1$, 
\begin{align} \label{eq_lower_W}
  \zeta W^{r,\pm}_{1,j} &\geq \zeta \frac{\Delta}{s}\pr\bc{\Po\bc{\frac{d^*(s-j)}{s}} = r}\min\cbc{\pr\bc{\Po\bc{\frac{d^*j}{s}}= t-r}, \pr\bc{\Po\bc{\frac{d^*j}{s}}= t-r-1} } \nonumber \\
  & = \zeta \frac{\Delta}{s}\eul^{-\frac{d^*(s-j)}{s}-\frac{d^*j}{s}} \frac{\bc{\frac{d^*(s-j)}{s}}^r}{r!}\min\cbc{ \frac{\bc{\frac{d^*j}{s}}^{t-r}}{(t-r)!}, \frac{\bc{\frac{d^*j}{s}}^{t-r-1}}{(t-r-1)!}} \nonumber \\
   &= \Omega\bc{\zeta \frac{\Delta}{s}\bc{\frac{j}{s}}^{t-r}\bc{1-\frac{j}{s}}^r } = \Omega\bc{\zeta \frac{\Delta}{s^{t+1}} } = \Omega\bc{\bc{\log n}^{5/8}}.
\end{align}
Thus, for each $x \in V[h]$, there exist $j \in [s-1]$ and $0\leq r \leq t-1$ such that $|\vW_{x,j}^{r,\pm}(\SIGMA)-\vW_{x,j}^{r,\pm}(\tau)| \geq \sqrt{\log n}$. As a consequence, $x$ has at least $\sqrt{\log n}$ neighbours that have non-empty intersection with the misclassified items from the previous compartments. 
Using the induction hypothesis and \Cref{lemma_endgame_misclassified}, we see that
$$\abs{M_3[h]} \leq \frac{M[h-s+1]\cup \dots \cup M[h-1]}{8\log\log n}\leq \frac{s k \exp\bc{-\frac{\log n}{\log\log^{2t+3} n}}}{8\log\log n} \leq k \exp\bc{-\frac{\log n}{\log\log^{2t+3} n}}/8 \;.$$
By \Cref{lem_nondef} and \Cref{lem_def}, we also know that
$$\abs{M_1[h]}+\abs{M_2[h]}\leq k\exp\bc{-\frac{\log n}{\log\log^{2t+3} n}}.$$
This concludes the induction step, and completes the proof.
\end{proof}

    \subsection{Asymptotics of $\Erw\brk{\vW_j}$}

    The first step to prove \Cref{lem_nondef} and \Cref{lem_def} is to justify our approximation of the expectation of $\vW_{x,j}^{r,\pm}$ of the number of different kinds of tests that an item in compartment $i$ can join. 
    To that end we first determine the true distribution of the vector $\vW_{x,j}(\SIGMA)\in \mathbb{N}_0^{2(t+1)},$ given by

    \begin{align} \label{def_wvec}
			\vW_{x,j}(\SIGMA) =
			 \begin{pmatrix}
				\vW_{x,j}^{0,-}(\SIGMA) \\
				\vdots \\
				\vW_{x,j}^{t-1,-}(\SIGMA) \\
				\vW_{x,j}^{0,+}(\SIGMA) \\
				\vdots \\
				\vW_{x,j}^{t-1,+}(\SIGMA) \\
				\vW_{x,j}^{\geq t,+}(\SIGMA)
			\end{pmatrix}.
		\end{align}
    Let
    \begin{align}\label{def_vk}
    \vec k = \bc{\vec k[i]}_{i=0}^{\ell} :=  \bc{\abs{V_1[i]}}_{i=1}^{\ell}
\end{align}
be the vector of the numbers of defective items per compartment. 
For all $i \in [\ell], j \in [s]$, let
\begin{align}
  V_1[i-(s-j);i-1] :=   V_1[i-(s-j)] \cup \ldots \cup V_1[i-1].
\end{align}

\begin{definition} \label{def_ps}
Fix $s<i\leq \ell$ and $j \in [s]$, and, for $a\leq b$, let 
    \begin{equation}
    \label{k[a;b]-def}
    \vec k[a;b]:=\sum_{u=a}^{b}\vec k[u].
    \end{equation}
For $r \geq 0$, set
    \begin{align}
        \vec p_{0,j,i}^{r,+} &= \Pr\bc{\Bin\bc{\vec k[i-(s-j);i-1]\frac{\Delta}{s}, \frac{\ell}{m}} = r~\Big\vert~ \vk}\nonumber\\
        &\qquad\times\Pr\bc{\Bin\bc{\vec k[i;i+j-1]\frac{\Delta}{s}, \frac{\ell}{m}} \geq t-r~\Big\vert~\vk},
        \label{p-0r+-def}\\
		\vec p_{0,j,i}^{r,-} &= \Pr\bc{\Bin\bc{\vec k[i-(s-j);i-1]\frac{\Delta}{s}, \frac{\ell}{m}} = r~\Big\vert~ \vk}\nonumber\\
        &\qquad\times\Pr\bc{\Bin\bc{\vec k[i;i+j-1]\frac{\Delta}{s}, \frac{\ell}{m}} < t-r~\Big\vert~ \vk},\\
        \vec p_{1,j,i}^{r,+} &= \Pr\bc{\Bin\bc{\vec k[i-(s-j);i-1]\frac{\Delta}{s}, \frac{\ell}{m}} = r~\Big\vert~ \vk}\nonumber \\
            &\qquad\times\Pr\bc{\Bin\bc{\vec k[i;i+j-1]\frac{\Delta}{s}-1, \frac{\ell}{m}} \geq t-r-1~\Big\vert~ \vk},  \\
		\vec p_{1,j,i}^{r,-} &= \Pr\bc{\Bin\bc{\vec k[i-(s-j);i-1]\frac{\Delta}{s}, \frac{\ell}{m}} = r~\Big\vert~ \vk}\nonumber \\
            &\qquad\times\Pr\bc{\Bin\bc{\vec k[i;i+j-1]\frac{\Delta}{s}-1, \frac{\ell}{m}} < t-r-1~\Big\vert~ \vk}, \label{p-1r--def}\\
        \vec p_{j,i}^{\geq t} &= \Pr\bc{\Bin\bc{\vec k[i-(s-j);i-1]\frac{\Delta}{s}, \frac{\ell}{m}} \geq t~\Big\vert~ \vk}.
        \label{p-geqt-def}
    \end{align}
 The random variable $\vec p_{b,j,i}^{r,\pm}$ describes the conditional probability that a given test in compartment $i+j-1$ that is connected to item $x$ in compartment $i$ with $\SIGMA_x=b$ has $r$ edges to defective items in compartments $i+j-s$ to $i-1$ and has a  positive or negative ($\pm$) outcome. We further define
    \begin{align*}
			\vec \vp_{1,j,i} &=(\vp_{1,j,i}^{0,-}, \dots, \vp_{1,j,i}^{t-1,-},\vp_{1,j,i}^{0,+}, \dots, \vp_{1,j,i}^{t-1,+}, \vp_{j,i}^{\geq t}), \quad	\vp_{0,j,i} = (\vp_{0,j,i}^{0,-}, \dots, \vp_{0,j,i}^{t-1,-},\vp_{0,j,i}^{0,+}, \dots, \vp_{0,j,i}^{t-1,+}, \vp_{j,i}^{\geq t}), \\
             \vp_{1,i} &= (\vp_{1,1,i}, \dots, \vp_{1,s,i}), \quad \text{and} \quad	\vp_{0,i} = (\vp_{0,1,i}, \dots, \vp_{0,s,i}).
		\end{align*}
\end{definition}

\begin{restatable}[Law of $\vW_{x,j}$]{claim}{claim_expextation}
\label{claim_expextation}
    Let $x \in V[i]$ for $s<i\leq \ell$. Conditionally on $\vec k$ and $\SIGMA_x = 0$, the random vectors $\vW_{x,j}$ in \eqref{def_wvec} satisfy
		\begin{align}
			\vW_{x,j}(\SIGMA)
        \sim \Mult\left(\Delta/s, \vp_{0,j,i}\right),
		\end{align}
         while, conditionally on $\vec k$ and $\SIGMA_x = 1$, 
		\begin{align}
			\vW_{x,j}(\SIGMA) 
        \sim \Mult\left(\Delta/s, \vp_{1,j,i}\right).
		\end{align}
\end{restatable}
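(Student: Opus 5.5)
The plan is to condition on $\vk$ (equivalently, on $\SIGMA$ up to relabelling within compartments) and on $\SIGMA_x=b$, and to compute the law of the type of each of the $\Delta/s$ edges that $x$ sends into $F[i+j-1]$. Here the type of an edge means the pair consisting of the outcome of the test it hits and the number $r$ of defective edges from $V[i-(s-j);i-1]$ into that test (or ``$\geq t$'').

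\textbf{Step 1 (neighbourhood of a test).} By construction, items of $V[u]$ join $F[u+h-1]$ for $h\in[s]$. Hence the compartments sending edges into $F[i+j-1]$ are exactly $V[i+j-s],\dots,V[i+j-1]$. These split into the ``past'' block $V[i-(s-j);i-1]$ and the ``present'' block $V[i;i+j-1]$, which contains $x$. Every item of these blocks makes $\Delta/s$ independent uniform choices among the $m/\ell$ tests of $F[i+j-1]$, independently of all other items and of $\SIGMA$. Consequently, for a fixed test $a\in F[i+j-1]$:
\begin{itemize}
\item the number of defective edges from the past block into $a$ is a sum of $\vk[i-(s-j);i-1]\,\Delta/s$ independent Bernoulli$(\ell/m)$ variables;
\item the number of defective edges from the present block into $a$ is a sum of $\vk[i;i+j-1]\,\Delta/s$ such Bernoullis;
\item the two counts are independent, since they are built from disjoint sets of edges.
\end{itemize}

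\textbf{Step 2 (one designated edge of $x$).} Fix one edge $e$ of $x$ into $F[i+j-1]$, and let $a$ be its endpoint. Conditionally on $a$, every other defective edge of both blocks still lands in $a$ independently with probability $\ell/m$. The number of such other edges is:
\begin{itemize}
\item in the past block, $\vk[i-(s-j);i-1]\Delta/s$;
\item in the present block, $\vk[i;i+j-1]\Delta/s$ if $b=0$, and $\vk[i;i+j-1]\Delta/s-1$ if $b=1$ (the edge $e$ itself is removed and contributes one defective to $a$).
\end{itemize}
The outcome of $a$ is positive iff past count plus present count plus $b$ is at least $t$. Splitting according to the past count $r\in\{0,\dots,t-1\}$ or $\geq t$ yields exactly the probabilities $\vp_{b,j,i}^{r,\pm}$ and $\vp_{j,i}^{\geq t}$ of \Cref{def_ps}. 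Note that when the past count is $\ge t$ the test is automatically positive, which is why a single coordinate $\vW^{\geq t,+}_{x,j}$ suffices. These probabilities do not depend on the chosen $a$, so the type of $e$ is independent of which test $e$ picked.

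\textbf{Step 3 (multinomial law).} The $\Delta/s$ edges of $x$ into $F[i+j-1]$ are exchangeable, and $\vW_{x,j}(\SIGMA)$ counts them by type, with multi-edges counted separately as the definition prescribes. The multinomial law $\Mult(\Delta/s,\vp_{b,j,i})$ then follows once the types of distinct edges of $x$ are shown to be i.i.d. with the marginal from Step 2. I would establish this by revealing the edges of $x$ one at a time. Conditionally on the types already revealed, the next edge picks a uniform test and its type is again determined by fresh Bernoulli$(\ell/m)$ indicators of the remaining edges. The law of the product claimed in \Cref{def_ps} is then the conditional law at each step.

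\textbf{Main obstacle.} I expect Step 3 to be the main difficulty. Two distinct edges of $x$ hitting different tests $a\neq a'$ compete for the same pool of defective half-edges, since one half-edge cannot land in both. They may also hit the same test, in which case their types coincide. So strict independence has to be argued carefully through the sequential revealing and the ``with replacement'' convention. This is the point where I would spend most care: I would try to show that the model's convention of sampling per edge makes the revealed types conditionally i.i.d. given $\vk$. Only if that fails would I fall back to the corresponding statement with an explicit error term.
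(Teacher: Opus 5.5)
Your Steps 1--2 are exactly the paper's proof. The paper computes, for a single neighbour $a\in F[i+j-1]$ of $x$, the probability that $a$ has $r$ past defective edges and a given outcome (discounting $x$'s own edge when $\SIGMA_x=1$). It then reads off $\Mult(\Delta/s,\vp_{b,j,i})$ directly from the fact that $x$ draws its $\Delta/s$ tests independently with replacement. The dependence you isolate in Step~3 is never addressed in the paper. Your planned resolution cannot succeed, because the claim is false as an exact identity.

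The revealing step breaks as you suspect. Suppose the test $a$ of one edge has been revealed, and $a$ received $r$ of the $N$ past defective edges. The next edge returns to $a$ with probability $\ell/m$, and then its type is forced. Otherwise it lands in some $a'\neq a$ whose past count is $\Bin(N-r,1/(M-1))$ with $M=m/\ell$, not a fresh $\Bin(N,1/M)$.

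More decisively, take $\SIGMA_x=0$. Let $\hat{\vec p}$ be the empirical distribution of test types over $F[i+j-1]$ in the graph with the edges of $x$ deleted. Since $x$ contributes no defective edges, conditionally on that graph the types of the $\Delta/s$ edges of $x$ are exactly i.i.d.\ with law $\hat{\vec p}$, and $\Erw\brk{\hat{\vec p}\mid\vk}=\vp_{0,j,i}$. Hence $\Pr\bc{\vW_{x,j}=w\mid\vk}=\binom{\Delta/s}{w}\Erw\brk{\prod_u\hat p_u^{w_u}\mid\vk}$ is a mixture of multinomials. Take $w$ supported on a single coordinate $u$. Since $\hat p_u$ is non-degenerate (e.g.\ the number of tests without past defective edges is random), strict Jensen makes this probability strictly larger than the multinomial value $\bc{\vp_{0,j,i}^{(u)}}^{\Delta/s}$.

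What does hold, and what the later arguments actually use, is the approximate version you name as a fallback. The conditioning above proves it. By the bounded-differences inequality (relocating one defective edge changes the types of at most two tests), $\hat{\vec p}=\vp_{0,j,i}\bc{1+O(n^{-\Omega(1)})}$ on every nonzero coordinate, off an event of probability $\exp(-n^{\Omega(1)})$. The nonzero coordinates are all $\Omega(s^{-t})$.

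As $\Delta/s=O(\log n)$, this gives, uniformly in $w$,
$\Pr\bc{\vW_{x,j}=w\mid\vk}=(1+o(1))\Pr\bc{\Mult(\Delta/s,\vp_{0,j,i})=w}+\exp(-n^{\Omega(1)})$.
That is all the Stirling/KL estimate in Claim~\ref{claim_pnondef} needs, since the additive term is negligible against $\exp(-O(\Delta))$.

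For $\SIGMA_x=1$ you must also handle collisions of $x$'s own edges in one test. These occur with probability $O((\Delta/s)^2\ell/m)$ and raise that test's defective count by more than one. Additive errors of this size are harmless in Lemma~\ref{lem_def}. This conditioning on the graph minus $\partial x$ is the same resampling device the paper uses in the proof of Lemma~\ref{Lem_distphixstar}.
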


\begin{proof}
    Item $x$ independently draws $\Delta/s$ balls from compartment $i+j-1$ with replacement. For $\SIGMA_x=b$, the (conditional) probability that any test connects to exactly $r$ defective items in compartments $i+j-s$  to $i-1$ and has a  positive or negative ($\pm$) outcome is equal to the probabilities $\vec p_{b,j,i}^{r,\pm}$ as introduced in Definition~\ref{def_ps}: The probability that any test connects to exactly $r$ defective items in compartments $i+j-s$  to $i-1$ is given by the binomial probability $\Pr\bc{\Bin\bc{\vec k[i-(s-j);i-1]\frac{\Delta}{s}, \frac{\ell}{m}} = r\mid \vk}$, where $ \vec k[i-(s-j);i-1]\Delta/s$ is the number of edges emanating from defective items, and $\ell/m$ is the probability that one of those edges connects to a particular test. 
    
    If $r\leq t-1$, then the probability that this test is positive or negative depends on the number of edges that connect it to defective items in compartments $i$ to $i+j-1$, excluding the edge connecting $x$ to the test. 
    If the item is defective, then the test is positive (negative, respectively) if at least (at most, respectively) $t-r-1$ of the $\vec k[i;i+j-1]\frac{\Delta}{s}-1$ other edges connect to defective items; if it is negative, then the test is positive (negative, respectively) if at least (at most, respectively) $t-r$ of the $\vec k[i;i+j-1]\frac{\Delta}{s}$ other edges connect to defective items.
\end{proof}

Recall the probabilities introduced in \eqref{eq_def_q0p}--\eqref{eq_def_q1m}, which will act as the limits of the conditional probabilities in \eqref{p-0r+-def}--\eqref{p-1r--def}. Moreover, for each $1\leq j \leq s$, let
 \begin{align*}
			q_j^{\geq t} = q_{0,j}^{\geq t,+} = q_{1,j}^{\geq t,+} = \Pr\bc{\Po\bc{\frac{d^*(s-j)}{s}} \geq t} \,,
		\end{align*}
which will act as the limit of $\vec p_{j,i}^{\geq t}$ in \eqref{p-geqt-def}.
To summarise the notation, we define 
		\begin{align*}
			q_{1,j} &= (q_{1,j}^{0,-}, \dots, q_{1,j}^{t-1,-},q_{1,j}^{0,+}, \dots, q_{1,j}^{t-1,+}, q_j^{\geq t}), \quad	q_{0,j} = (q_{0,j}^{0,-}, \dots, q_{0,j}^{t-1,-},q_{0,j}^{0,+}, \dots, q_{0,j}^{t-1,+}, q_j^{\geq t}). 
		\end{align*}
We next show that the conditional probabilities $\vec p$ in \eqref{p-0r+-def}--\eqref{p-geqt-def} are well approximated by their associated $q$:
\begin{claim}[Limits of conditional probabilities]\label{claim_distpq}
    Conditionally on $\cE$, for  all $1 \leq i \leq \ell, 0 \leq r < t, 1 < j \leq s,$
    \begin{align}
        \vec p_{0,j, i}^{r,+} &= q_{0,j}^{r,+} ( 1+ O(\log^3 n / \sqrt{k})),\\
        \vec p_{0,j, i}^{r,-} &= q_{0,j}^{r,-} ( 1+ O(\log^3 n / \sqrt{k})),  \\
        \vec p_{1,j, i}^{r,+} &= q_{1,j}^{r,+}  ( 1+ O(\log^3 n / \sqrt{k})),\\
        \vec p_{1,j, i}^{r,-} &= q_{1,j}^{r,-}  ( 1+ O(\log^3 n / \sqrt{k})),\\
        \vec p_{i,j}^{\geq t} &= q_{j}^{\geq t}  ( 1+ O(\log^3 n / \sqrt{k})).
    \end{align}
\end{claim}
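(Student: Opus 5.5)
The proof is a quantitative Binomial-to-Poisson approximation with \emph{relative} error control. Each quantity in \Cref{def_ps} is a product of at most two Binomial point or tail probabilities. I will show that each factor equals the corresponding Poisson factor in \eqref{eq_def_q0p}--\eqref{eq_def_q1m} up to a factor $1+O(\log^3 n/\sqrt k)$. Multiplying at most two such factors then gives the claim.

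\textbf{Step 1: the means.} Work on $\cE$ and fix $i,j$. Write $N_1=\vk[i-(s-j);i-1]\Delta/s$ and $N_2=\vk[i;i+j-1]\Delta/s$ (or $N_2-1$ in the defective case), with success probability $p=\ell/m$. By \eqref{eqG1}, $\vk[i-(s-j);i-1]=(s-j)\frac k\ell\bigl(1+O(\sqrt{\ell/k}\,\log n)\bigr)$. Using \eqref{Eq:Param}, $\Delta/m=d^*/k$, so
\begin{align*}
\lambda_1:=N_1p=\frac{d^*(s-j)}{s}\bigl(1+O(\sqrt{\ell}\log n/\sqrt k)\bigr),\qquad \lambda_2:=N_2p=\frac{d^*j}{s}\bigl(1+O(\sqrt{\ell}\log n/\sqrt k)\bigr).
\end{align*}
Removing one trial changes $\lambda_2$ only by $p=O(\ell/m)$, which is negligible. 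For $j=s$ we have $N_1=0$, and the first factor is trivially $\vecone\{r=0\}=\Pr(\Po(0)=r)$, so nothing needs to be proved there.

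\textbf{Step 2: point probabilities.} For $0\le v\le\log n$ I would use the standard expansion
\begin{align*}
\Pr(\Bin(N,p)=v)=\frac{\lambda^v}{v!}\eul^{-\lambda}\bigl(1+O(v^2/N+Np^2+vp)\bigr),\qquad \lambda=Np,
\end{align*}
obtained from $\binom Nv=\frac{N^v}{v!}(1+O(v^2/N))$ and $(1-p)^{N-v}=\eul^{-Np+O(Np^2+vp)}$. Here $N=\Theta(k\log n)$, $\lambda=O(1)$ and $p=\ell/m$, so the error is $O(\log^2 n/(k\log n))$, which is far below $\log^3 n/\sqrt k$. Next I replace $\lambda$ by its limit $\lambda_0\in\{d^*(s-j)/s,\,d^*j/s\}$. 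If $\lambda=\lambda_0(1+\eta)$, then $\eul^{-\lambda}\lambda^v=\eul^{-\lambda_0}\lambda_0^v\bigl(1+O(\eta(v+\lambda_0))\bigr)$. For $v\le t$ this gives relative error $O(\sqrt\ell\log n/\sqrt k)\subseteq O(\log^3 n/\sqrt k)$. Crucially, this estimate is multiplicative, so it stays valid even though $\lambda_0$ may be as small as $d^*/s$.

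\textbf{Step 3: tails, the main obstacle.} For the factors $\Pr(\Bin\ge u)$ and $\Pr(\Bin<u)$ with $u\le t$, the naive route via complements fails. It turns absolute errors into relative ones only after dividing by the tail, which can be as small as $\Theta((d^*/s)^{t})$ because $s=\lceil\log\log n\rceil$ makes some Poisson means small. I will avoid complements. For $\Pr(\Bin<u)$, I sum the at most $t$ point probabilities from Step 2; each has relative error $O(\log^3n/\sqrt k)$, hence so does their sum. For $\Pr(\Bin\ge u)$, I write it as $\sum_{v=u}^{\log n}\Pr(\Bin=v)+\Pr(\Bin>\log n)$. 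The first sum has a uniform relative error, since for $v\le\log n$ the error term $\eta(v+\lambda_0)$ is $O(\log^2 n/\sqrt k)$; this is exactly where the exponent $3$ leaves room. The remainder, and the matching Poisson remainder, are $n^{-\omega(1)}$ by Chernoff (\Cref{lem_chernoff_2}). This is negligible relative to the tail's lower bound $\Omega((d^*/s)^t)=\Omega((\log\log n)^{-t})$. The same argument handles $\vp^{\ge t}_{j,i}$ against $q_j^{\ge t}$.

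Finally, all error bounds are uniform in $i,j,r$ because they depend only on the bounds from $\cE$. Multiplying the two factors yields each of the five displayed estimates.
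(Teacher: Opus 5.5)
Your proof is correct. Steps 1--2 and your treatment of the lower tails match the paper's appendix: the paper also compares Binomial and Poisson point masses on $\cE$ with relative error $O(\log^2 n/\sqrt k)$, and sums at most $t$ of them for $\Pr(\Bin<u)$.

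You depart from the paper only on the upper tails $\Pr(\Bin\ge u)$. The paper uses exactly the complement computation you dismiss as naive. It writes $\Pr(\Bin\ge t')=1-\bigl(1+O(\log^2 n/\sqrt k)\bigr)P_{<t'}$ and divides by $P_{\ge t'}(d^*j/s)\ge \eul^{-d^*}(d^*/s)^{t'}/t'!$. This does not fail. The tail is only $\Omega((\log\log n)^{-t})$, so the loss is a factor $(\log\log n)^{t}=O(\log n)$. That factor is precisely where the third power of $\log n$ in the statement comes from.

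Your alternative is also valid and avoids dividing by a small quantity. You sum point masses for $u\le v\le\log n$ with a uniform relative error, and bound both the Binomial and the Poisson remainders by $n^{-\omega(1)}$. Compared with the paper, this buys a somewhat sharper error and no dependence on the size of the tail. It costs a point-mass expansion that must hold uniformly up to $v=\log n$, plus a separate Chernoff estimate. The paper's route is shorter because it only ever touches the finitely many point masses $v<t'$.

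Two small slips, neither affecting the conclusion:
\begin{itemize}
\item $\eta$ carries the factor $\sqrt{\ell}=\log^{1/4}n$. So $\eta v$ for $v\le\log n$ is $O(\log^{9/4}n/\sqrt k)$, not $O(\log^{2}n/\sqrt k)$.
\item $N$ lies between $k\sqrt{\log n}/s$ and $k\sqrt{\log n}$ in order of magnitude, rather than being $\Theta(k\log n)$.
\end{itemize}
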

The proof of \Cref{claim_distpq} consists of routine Poisson approximations, and is deferred to \Cref{Appx:disqp}.

    \subsection{Misclassified Non-Defective Items (Proof of \Cref{lem_nondef})}
\label{sec:lem_nondef}

In this section, we prove \Cref{lem_nondef}, and show that the probability that the statistics of a non-defective item closely match those of a defective one is small, under the assumption that we have {\em perfect} knowledge of the ground truth on the previous compartments. 
For this, we recall that the {\em Kullback-Leibler divergence} of \invisible{$p,q\in(0,1)$ is denoted by
\begin{align} \label{eq:kl}
    \KL{q}{p} = q \log \bc{\frac{q}{p}} + (1-q) \log \bc{\frac{1-q}{1-p}}.
\end{align}
For two vectors}two probability mass functions $p,q \in [0,1]^n$
\invisible{with $\norm{p}_1 = \norm{q}_1 = 1$, the Kullback-Leibler divergence}is defined as
\begin{align}
    \KL{q}{p} = \sum_{i=1}^n q_i \log \bc{\frac{q_i}{p_i}},
\end{align}
and we let
		\begin{align}\label{optpos}
			\cM_{s}  = \frac 1 s \sum_{j=1}^{s}\KL{q_{1,j} }{q_{0,j}}.
		\end{align} 

Below, we use the abbreviations $P_k(\mu)=\pr\bc{\Po(\mu)=k}, P_{\sss \leq k}(\mu)=\pr\bc{\Po(\mu)\leq k}$ and $P_{\sss > k}(\mu)=\pr\bc{\Po(\mu)>k}$.

\begin{claim}[Convergence of misspecification rate of non-defective items]\label{lem_calc}
Recall $\cM_{s}$ from \eqref{optpos}. Then
    \begin{align*}
       \liminf_{s\rightarrow \infty} \cM_{s} \geq 
       \int_0^1 \sum_{r=0}^{t-1} P_r(d^*(1-z)) \KL{P_{\sss\geq t-r-1}(d^*z) }{P_{\sss \geq t-r}(d^*z)} \, \mathrm{dz}.
    \invisible{\frac{H(\Pr\bc{\Po(d^*) \geq t }  ) }{d^*}}
    \end{align*}
\end{claim}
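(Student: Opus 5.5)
We expand each divergence $\KL{q_{1,j}}{q_{0,j}}$ coordinatewise, observe that $\cM_s$ is a Riemann sum in the variable $z=j/s$, and pass to the limit. Write $\mu_j=d^*(s-j)/s$ and $\nu_j=d^*j/s$. By \eqref{eq_def_q0p}--\eqref{eq_def_q1m}, each of the $2t$ non-trivial coordinates of $q_{b,j}$ factors as $P_r(\mu_j)$ times a tail probability in $\nu_j$. The last coordinate, $q_j^{\geq t}$, is identical in $q_{0,j}$ and $q_{1,j}$, so it contributes $q_j^{\geq t}\log 1=0$. For fixed $r$, the common factor $P_r(\mu_j)$ cancels inside the logarithms of the $\pm$ coordinates. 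Hence
\begin{align*}
\KL{q_{1,j}}{q_{0,j}}=\sum_{r=0}^{t-1}P_r(\mu_j)\Big[P_{\sss\geq t-r-1}(\nu_j)\log\frac{P_{\sss\geq t-r-1}(\nu_j)}{P_{\sss\geq t-r}(\nu_j)}+P_{\sss< t-r-1}(\nu_j)\log\frac{P_{\sss< t-r-1}(\nu_j)}{P_{\sss< t-r}(\nu_j)}\Big],
\end{align*}
and the bracket is exactly the binary divergence $\KL{P_{\sss\geq t-r-1}(\nu_j)}{P_{\sss\geq t-r}(\nu_j)}$. So $\cM_s=\frac1s\sum_{j=1}^s f(j/s)$ with
$$f(z)=\sum_{r=0}^{t-1}P_r(d^*(1-z))\,\KL{P_{\sss\geq t-r-1}(d^*z)}{P_{\sss\geq t-r}(d^*z)}.$$
One caveat on conventions: for $r=t-1$ we have $P_{\sss\geq 0}=1$, so that term is $-\log P_{\sss\geq 1}(d^*z)$, and the $0\log 0$ terms are set to $0$.

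The only non-trivial point is that $f$ is not bounded near $z=0$. For $r=t-1$ the term $P_{t-1}(d^*(1-z))\cdot(-\log(1-e^{-d^*z}))$ behaves like $\log(1/z)$. For smaller $r$, the divergence is at most of order $\log(1/z)$ as well, because $P_{\sss\geq t-r-1}(\nu)/P_{\sss\geq t-r}(\nu)\asymp 1/\nu$ and the first argument is at most $1$. Hence $f$ is continuous and nonnegative on $(0,1]$ and has an integrable logarithmic singularity at $0$. This is also why the claim only asks for a $\liminf$ with $\geq$.

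To finish, I would use Fatou's lemma, or a direct comparison, rather than full Riemann-sum convergence. Fix $\eta>0$. Since $f\geq 0$, we get $\cM_s\geq\frac1s\sum_{j:\,j/s\geq\eta}f(j/s)$. Because $f$ is continuous on the compact interval $[\eta,1]$, it is uniformly continuous there, so this truncated sum converges to $\int_\eta^1 f(z)\,\mathrm{d}z$ as $s\to\infty$. Therefore $\liminf_s\cM_s\geq\int_\eta^1 f$ for every $\eta>0$. Letting $\eta\downarrow0$ and using monotone convergence gives the stated bound, whether the integral is finite (which it is) or not.

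The step needing the most care is the algebraic identification of the bracket as a binary KL divergence. One must check that the negative-side arguments $P_{\sss<t-r-1}$ and $P_{\sss<t-r}$ are exactly the complements of the positive-side ones, and that the $\geq t$ coordinate drops out, so that nothing is lost.
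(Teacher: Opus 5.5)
Your proposal is correct and follows the same route as the paper. Both arguments drop the nonnegative terms with $j<\eta s$, use Riemann-sum convergence for the continuous, hence bounded, integrand on $[\eta,1]$, and then let $\eta\downarrow 0$. Your version also spells out two points the paper leaves implicit: the coordinatewise identification $\KL{q_{1,j}}{q_{0,j}}=f(j/s)$, in which the $\geq t$ coordinate cancels, and the logarithmic singularity at $z=0$ that makes the truncation necessary.
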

\begin{proof}
      Fix $\eps > 0$. Then, $\eps s\geq 1$ for $s$ large enough, and, since all summands are non-negative,
       we can neglect the first $\eps s$ terms, to obtain
    \begin{align}\label{eq_lower_M}
        \liminf_{s \to \infty} \cM_{s} \geq \liminf_{s \to \infty} \frac{1}{s} \sum_{j = \eps s}^s \KL{q_{1,j} }{q_{0,j}}.
    \end{align}
Thanks to the truncation, the r.h.s.\ of \eqref{eq_lower_M} can be seen as the Riemann sum of a bounded function over a bounded interval. We thus get
 \begin{align}
        \liminf_{s \to \infty} \cM_{s} \geq  \int_\eps^1 \sum_{r=0}^{t-1} P_r(d^*(1-z)) \KL{P_{\sss\geq t-r-1}(d^*z) }{P_{\sss \geq t-r}(d^*z)} \, \mathrm{dz}.
    \end{align}
Letting $\eps\searrow 0$, and using the fact that the integrand is non-negative, proves the claim.
\end{proof}

In order to identify the limiting integral in Claim \ref{lem_calc}, the following result is crucial:
\begin{claim}[Analysis of the integrand] \label{claim:antideriv}
For every $d$,
    \begin{equation}
       \sum_{r=0}^{t-1} P_r(d(1-z)) \KL{P_{\sss\geq t-r-1}(dz) }{P_{\sss \geq t-r}(dz)}= \frac{\partial}{\partial z}\;\frac{1}{d}\sum_{r=0}^{t-1} P_r(d(1-z)) H\bc{P_{\sss \geq t-r}(dz)}.
    \end{equation}
\end{claim}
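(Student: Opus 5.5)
The identity will be proved termwise by direct differentiation: the right-hand side is differentiated with the product rule, the $r$-sum is reindexed (summation by parts), and the result is matched against the left-hand side via the Bregman-divergence form of the binary Kullback--Leibler divergence. Throughout, fix $d>0$ and $z\in(0,1]$, so that all Poisson probabilities lie in $(0,1)$ where needed. Abbreviate $u=d(1-z)$ and $A_r(z)=P_{\sss\geq t-r}(dz)$ for $0\le r\le t$. Then $P_{\sss\geq t-r-1}(dz)=A_{r+1}(z)$ and $A_t\equiv P_{\sss\geq 0}=1$. The divergence on the left is that between the Bernoulli laws with means $A_{r+1}$ and $A_r$.

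First I will record two elementary derivative formulas for Poisson probabilities. For $k\ge1$,
\[
\frac{\partial}{\partial z}P_{\sss\geq k}(dz)=d\,P_{k-1}(dz),
\]
so $A_r'(z)=d\,P_{t-r-1}(dz)=d\,(A_{r+1}-A_r)$. Also, with the convention $P_{-1}\equiv0$,
\[
\frac{\partial}{\partial z}P_r(d(1-z))=d\,\bigl(P_r(u)-P_{r-1}(u)\bigr).
\]
I also use $H'(p)=\log\frac{1-p}{p}$.

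Next, the product rule gives
\[
\frac{\partial}{\partial z}\,\frac1d\sum_{r=0}^{t-1}P_r(u)H(A_r)=\sum_{r=0}^{t-1}\bigl(P_r(u)-P_{r-1}(u)\bigr)H(A_r)+\sum_{r=0}^{t-1}P_r(u)\,(A_{r+1}-A_r)\,H'(A_r).
\]
In the first sum I shift the index in the $P_{r-1}$ part. The boundary terms are $P_{-1}\equiv0$ and $P_{t-1}(u)H(A_t)=P_{t-1}(u)H(1)=0$. Hence the first sum equals $\sum_{r=0}^{t-1}P_r(u)\bigl(H(A_r)-H(A_{r+1})\bigr)$. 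The right-hand side therefore becomes
\[
\sum_{r=0}^{t-1}P_r(u)\Bigl[H(A_r)-H(A_{r+1})+(A_{r+1}-A_r)H'(A_r)\Bigr].
\]

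It then suffices to prove the pointwise identity $\KL{p}{q}=H(q)-H(p)+(p-q)H'(q)$ for $p\in[0,1]$ and $q\in(0,1)$. This identity says that the binary KL divergence is the Bregman divergence of $-H$. To verify it, expand $H(q)+(p-q)\log\frac{1-q}{q}$. This simplifies to $-p\log q-(1-p)\log(1-q)$. Subtracting $H(p)$ then gives exactly $p\log\frac pq+(1-p)\log\frac{1-p}{1-q}$. Applying this with $p=A_{r+1}$ and $q=A_r$ and summing against the weights $P_r(u)$ yields the claim.

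The only delicate point is the degenerate case. At $z=0$, or when $A_{r+1}=1$ (i.e. $r=t-1$), the convention $0\log0=0$ is needed. This is harmless: for $r=t-1$ we have $H(A_t)=0$, and the identity reads $-(1-q)\log(1-q)\cdot$ consistently. Moreover, for the application in Claim~\ref{lem_calc} only $z\in(0,1]$ matters, because the integral is obtained as a limit from $\eps>0$. I expect the reindexing bookkeeping to be the main place where errors could creep in; conceptually, the Bregman observation is the key step.
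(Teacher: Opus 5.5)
Your proof is correct and takes the same route as the paper. You use the product rule with $\frac{\partial}{\partial\mu}P_l(\mu)=P_{l-1}(\mu)-P_l(\mu)$ and $\frac{\partial}{\partial\mu}P_{\sss\geq l}(\mu)=P_{l-1}(\mu)$, shift the index in the $P_{r-1}$ term (the boundary term vanishes since $H(P_{\sss\geq 0})=H(1)=0$), and recognise the binary KL divergence; your Bregman identity $\KL{p}{q}=H(q)-H(p)+(p-q)H'(q)$ just packages more cleanly the final algebra the paper expands by hand. The only blemish is the garbled sentence on the case $r=t-1$: it should say that with $p=1$ both sides equal $-\log q$, which holds for $z\in(0,1]$, where $q\in(0,1)$.
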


\begin{proof} We use the product rule, as well as $\frac{\partial}{\partial \mu}P_l(\mu)=P_{l-1}(\mu)-P_l(\mu)$, and thus, $\frac{\partial}{\partial \mu}P_{\sss \geq l}(\mu)=P_{l-1}(\mu)$, to compute the derivative of $\frac{1}{d}\sum_{r=0}^{t-1} P_r(d(1-z)) H\bc{P_{\sss \geq t-r}(dz)}$ as
\begin{align*}
     &\frac{1}{d} \sum_{r=0}^{t-1} d\bc{ - P_{r-1}(d(1-z)) + P_r(d(1-z))} H\bc{P_{\sss \geq t-r}(dz)} \\
     &\qquad + P_r(d(1-z))\log\bc{\frac{P_{\sss \geq t-r}(dz)}{P_{\sss < t-r}(dz)}}d\bc{P_{\sss < t-r-1}(dz) -  P_{\sss < t-r}(dz)}  \\
     &= \sum_{r=0}^{t-1} \bc{ - P_{r-1}(d(1-z)) + P_r(d(1-z))} H\bc{P_{\sss \geq t-r}(dz)} \\
     &\qquad - P_r(d(1-z))\log\bc{\frac{P_{\sss \geq t-r}(dz)}{P_{\sss < t-r}(dz)}} P_{t-r-1}(dz).
     \end{align*}
We simplify the r.h.s.\ as
    \begin{align*}
     &\sum_{r=0}^{t-1} P_r(d(1-z)) \bigg( -P_{\sss \geq t-r}(dz) \log\bc{P_{\sss \geq t-r}(dz)} - P_{\sss < t-r}(dz) \log\bc{P_{\sss < t-r}(dz)} \\
     &- P_{t-r-1}(dz) \log\bc{P_{\sss \geq t-r}(dz)} + P_{t-r-1}(dz) \log\bc{P_{\sss < t-r}(dz)} \bigg)- P_{r-1}(d(1-z))H\bc{P_{\sss \geq t-r}(dz)}.
     \end{align*}
We shift the sum over $r$ in the last term by 1, use that $H\bc{P_{\sss \geq 0}(dz)}=0$ so that no boundary term appears, and recombine, to rewrite this expression as
     \begin{align*}
     &\sum_{r=0}^{t-1} P_r(d(1-z))\\
     &\qquad
     \bigg( -P_{\sss \geq t-r-1}(dz) \log\bc{P_{\sss \geq t-r}(dz)} - P_{\sss < t-r-1}(dz) \log\bc{P_{\sss < t-r}(dz)}- P_r(d(1-z))H\bc{P_{\sss \geq t-r-1}(dz)}\bigg).
     \end{align*}
Using the definition of $H\bc{P_{\sss \geq t-r-1}(dz)}$, we can rewrite
    \begin{align*}
    &-P_{\sss \geq t-r-1}(dz) \log\bc{P_{\sss \geq t-r}(dz)} - P_{\sss < t-r-1}(dz) \log\bc{P_{\sss < t-r}(dz)}- P_r(d(1-z))H\bc{P_{\sss \geq t-r-1}(dz)}\\
     &=-P_{\sss \geq t-r-1}(dz) \log\bc{P_{\sss \geq t-r}(dz)} - P_{\sss < t-r-1}(dz) \log\bc{P_{\sss < t-r}(dz)} \\
     &\qquad +  P_{\sss \geq t-r-1}(dz) \log\bc{P_{\sss \geq t-r-1}(dz)} +P_{\sss \geq t-r-1}(dz) \log\bc{P_{\sss \geq t-r-1}(dz)}\\
     &=\KL{P_{\sss \geq t-r-1}(dz)}{P_{\sss \geq t-r}(dz)}.
\end{align*}
This completes the proof.
\end{proof}

\begin{claim}[Convergence of misspecification rate of non-defective items (cont.)]\label{lem_calc_cont}
Recall $\cM_{s}$ from \eqref{optpos}. Then
    \begin{align*}
       \liminf_{s\rightarrow \infty} \cM_{s} \geq 
       \frac{H(\Pr\bc{\Po(d^*) \geq t }  ) }{d^*}.
    \end{align*}
\end{claim}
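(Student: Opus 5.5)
The plan is to combine \Cref{lem_calc} with \Cref{claim:antideriv} and apply the fundamental theorem of calculus. By \Cref{lem_calc},
\[
\liminf_{s\to\infty}\cM_s \geq \int_0^1 g(z)\,\mathrm{d}z,\qquad g(z)=\sum_{r=0}^{t-1} P_r(d^*(1-z))\, \KL{P_{\sss\geq t-r-1}(d^*z)}{P_{\sss \geq t-r}(d^*z)}.
\]
By \Cref{claim:antideriv} with $d=d^*$, the integrand satisfies $g=G'$, where
\[
G(z)=\frac{1}{d^*}\sum_{r=0}^{t-1}P_r(d^*(1-z))\,H\bc{P_{\sss\geq t-r}(d^*z)} .
\]
We therefore want $\int_0^1 g = G(1)-G(0)$ and then evaluate the two boundary values.

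\textbf{Boundary values.} At $z=1$ we have $P_r(0)=\vecone\{r=0\}$, so only the $r=0$ term survives. This gives
\[
G(1)=\frac{1}{d^*}H\bc{P_{\sss\geq t}(d^*)}=\frac{H(\Pr(\Po(d^*)\geq t))}{d^*}.
\]
At $z=0$, $\Po(0)=0$ almost surely, so $P_{\sss\geq t-r}(0)=0$ for every $r\leq t-1$ (since $t-r\geq 1$). Hence each entropy term is $H(0)=0$ and $G(0)=0$. Together these give exactly the claimed bound.

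\textbf{Main obstacle: justifying $\int_0^1 G' = G(1)-G(0)$.} $G$ is smooth on $(0,1]$. However, $H(p)$ has unbounded derivative at $p=0$, and $P_{\sss\geq t-r}(d^*z)\to0$ as $z\to0$, so $G$ need not be differentiable at $0$. I would handle this in two steps.

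First, integrate over $[\eps,1]$ to get $\int_\eps^1 g = G(1)-G(\eps)$. This is legitimate because $G$ is $C^1$ on $[\eps,1]$.

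Second, let $\eps\searrow0$. Since $g\geq0$, monotone convergence gives $\int_\eps^1 g \to \int_0^1 g$. On the other side, $G$ is continuous at $0$, because $H$ is continuous on $[0,1]$ and $P_{\sss\geq t-r}(d^*\eps)\to0$; hence $G(\eps)\to G(0)=0$. (In fact $P_{\sss\geq t-r}(d^*\eps)=O(\eps^{t-r})$, so $H$ of it is $O(\eps\log(1/\eps))$.)

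This also shows that the improper integral is finite and equal to $G(1)$, which completes the argument.
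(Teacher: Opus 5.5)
Your proposal is correct and follows essentially the same route as the paper: combine \Cref{lem_calc} with the antiderivative from \Cref{claim:antideriv}, integrate over $[\eps,1]$, and let $\eps\searrow 0$, using that the boundary term $\frac{1}{d^*}\sum_{r=0}^{t-1}P_r(d^*(1-\eps))H(P_{\geq t-r}(d^*\eps))$ vanishes. Your explicit justification of this limit (monotone convergence on the integral side, continuity of $G$ at $0$ on the other) is a slightly more careful version of the paper's one-line argument.
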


\begin{proof}
By \Cref{claim:antideriv}, for any $\eps >0$,
    \begin{align*}
  &  \int_\eps^1\sum_{r=0}^{t-1} P_r(d^*(1-z)) \KL{P_{\sss\geq t-r-1}(d^*z) }{P_{\sss \geq t-r}(d^*z)} \, \mathrm{dz} 
        = \brk{ \frac{1}{d^*}\sum_{r=0}^{t-1} P_r(d^*(1-z)) H\bc{P_{\sss \geq t-r} (d^*z)}}_{\eps}^1\\
        &= \frac{1}{d^*} H(P_{\sss \geq t}(d^*)) - \frac{1}{d^*}\sum_{r=0}^{t-1} P_r(d^*(1-\eps)) H\bc{P_{\sss \geq t-r}(d^*\eps)}.
    \end{align*}
 Taking $\lim_{\eps \searrow 0}$, and using that the sum vanishes as $\eps \searrow 0$ together with \Cref{lem_calc}, we obtain
   \begin{align*}
      &  \liminf_{s\rightarrow \infty} \cM_{s} \geq
        \frac{1}{d^*} H(P_{\sss \geq t}(d^*))=\frac{1}{d^*} H(\Pr\bc{\Po(d^*) \geq t}). 
    \end{align*} 
 \invisible{Thus for all $\eps >0$,
  \begin{align*}
        \liminf_{s\rightarrow \infty} \cM_{s} \geq \frac{1}{d^*} H(\Pr\bc{\Po(d^*) \geq t}) - \frac{1}{d^*}\sum_{r=0}^{t-1} \Pr\bc{\Po(d^*(1-\eps)) = r} H\bc{\Pr\bc{\Po(d^*\eps) \geq t-r}}.
    \end{align*}
}
\end{proof}

We next bound the event that a non-defective item satisfies all classification conditions for a defective item, so that it will initially be misclassified:
    \begin{claim}[Bound on misclassification probability of non-defective items]\label{claim_pnondef}
For $s<i\leq \ell$, $j \in [s], 0 \leq r \leq t-1$ and $x \in V[i]$, let
        \begin{align*}
            \fE_{x,j,r} = \cbc{\vW_{x,j}^{r,+}(\SIGMA) \geq (1-2\zeta) W_{1,j}^{r,+} \text{ and } \vW_{x,j}^{r,-}(\SIGMA) \leq (1+2\zeta) W_{1,j}^{r,-} }.
        \end{align*}
        Then 
		\begin{align*}
			&\pr\bc{\bigcap_{j\in [s]} \bigcap_{r\in \{0,\dots,t-1\}}  \fE_{x,j,r} ~\Big\vert~ x\in V_0[i]} \leq\exp(-(1+o(1)) \Delta\cM).
		\end{align*}
	\end{claim}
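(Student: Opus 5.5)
We bound the probability of the intersection with a large-deviation estimate for multinomial vectors, proving it for one compartment offset $j$ at a time and then multiplying.

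\emph{Step 1: independence across $j$.} Condition on $\vec k$, and work on the event $\cE$, which has probability $1-o(n^{-2})$ and whose failure can be absorbed separately. Item $x$ picks its $\Delta/s$ tests in each compartment $F[i+j-1]$ independently of its picks in the other compartments. By \Cref{claim_expextation}, given $\vk$ and $\SIGMA_x=0$, each vector $\vW_{x,j}(\SIGMA)$ has law $\Mult(\Delta/s,\vp_{0,j,i})$. We will treat the $s$ vectors as independent given $\vk$. Strictly, different compartments involve disjoint test sets, so the outcomes in distinct compartments depend only on disjoint edge choices of the other items, and the conditioning on $\vk$ removes the remaining correlation. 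It therefore suffices to show, for each $j$,
\begin{align*}
\pr\Big(\bigcap_{r=0}^{t-1}\fE_{x,j,r}\,\Big|\,\vk,\SIGMA_x=0\Big)\le \exp\Big(-(1+o(1))\tfrac{\Delta}{s}\KL{q_{1,j}}{q_{0,j}}+o(\Delta/s)\Big),
\end{align*}
with error terms uniform in $j$. Multiplying over $j$ then gives $\exp(-(1+o(1))\Delta\cM_s)$. We read $\cM$ as $\cM_s$, or as its limit via \Cref{lem_calc_cont}.

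\emph{Step 2: the single-$j$ bound.} For one $j$ we use the exponential Chernoff/Sanov bound for multinomials. The event $\bigcap_r\fE_{x,j,r}$ says that the empirical frequency vector $\vW_{x,j}/(\Delta/s)$ lies in a closed convex set $C_j$. This set is cut out by lower bounds $(1-2\zeta)q^{r,+}_{1,j}$ on the $+$ coordinates and upper bounds $(1+2\zeta)q^{r,-}_{1,j}$ on the $-$ coordinates. The $\ge t$ coordinate is left free.

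By \Cref{claim_distpq}, on $\cE$ we have $\vp_{0,j,i}=q_{0,j}(1+O(\log^3n/\sqrt k))$, so we may replace $\vp$ by $q$ at a cost of $o(\Delta/s)$ in the exponent. We then apply the exponential tilting bound $\pr(\text{empirical}\in C)\le\exp(-\tfrac{\Delta}{s}\inf_{\nu\in C}\KL{\nu}{q_{0,j}})$. Concretely, we take a vector $\lambda$ with $\lambda^{r,+}\ge0$ and $\lambda^{r,-}\le0$, use Markov's inequality on $\exp(\langle\lambda,\vW\rangle)$, and choose $\lambda=\log(q_{1,j}/q_{0,j})$ coordinatewise.

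This choice has the correct signs. Comparing \eqref{eq_def_q0p}--\eqref{eq_def_q1m}, we get $q^{r,+}_{1,j}\ge q^{r,+}_{0,j}$ and $q^{r,-}_{1,j}\le q^{r,-}_{0,j}$, since $\Po(\cdot)\ge t-r-1$ is more likely than $\Po(\cdot)\ge t-r$. On the $\ge t$ coordinate the two vectors agree, so $\lambda=0$ there and leaving that coordinate free costs nothing. The resulting exponent is
\[
\tfrac{\Delta}{s}\Big(\sum_{r,\pm}(1\mp 2\zeta)q^{r,\pm}_{1,j}\log\tfrac{q^{r,\pm}_{1,j}}{q^{r,\pm}_{0,j}}\Big)-\tfrac{\Delta}{s}\log\sum q_{0,j}\tfrac{q_{1,j}}{q_{0,j}},
\]
and the last term vanishes because $\sum q_{1,j}=1$. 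This equals $\tfrac{\Delta}{s}\big(\KL{q_{1,j}}{q_{0,j}}+O(\zeta\,\textstyle\sum_{r,\pm}q_{1,j}^{r,\pm}|\log(q^{r,\pm}_{1,j}/q^{r,\pm}_{0,j})|)\big)$.

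\emph{Step 3 (main obstacle): uniformity of errors.} The error terms must be $o(1)$ relative to the average of $\KL{q_{1,j}}{q_{0,j}}$ over $j$, uniformly in $j$.

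The difficulty sits at small $j/s$. There the log-ratios $\log(q_1/q_0)$ blow up like $\log(s/j)$, because $q^{r,+}_{0,j}\asymp (j/s)^{t-r}$ while $q^{r,+}_{1,j}\asymp(j/s)^{t-r-1}$. The $\zeta$-error for offset $j$ is therefore $O(\zeta\log s)$, which is still $o(1)$ since $\zeta=1/\log\log\log n$ and $\log s\approx\log\log\log n$. If this turns out to be too tight, we would first try discarding the first $\eps s$ offsets, as in \Cref{lem_calc}, by bounding their probability by $1$, and then let $\eps\to0$ with $s$.

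Similarly, the relative error from \Cref{claim_distpq} must stay negligible when $q^{r,\pm}_{0,j}$ is as small as $s^{-t}$. This holds because $\log^3n/\sqrt k\ll s^{-t}$. Summing over $j$ yields the claim.
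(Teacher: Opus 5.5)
Your argument is correct in substance and takes a genuinely different route from the paper. The paper writes each factor as a sum of multinomial point probabilities over the lattice box $C_j^{2\zeta}$. It pays $(\Delta/s)^{s(2t+1)}=\exp(o(\Delta))$ for the number of lattice points and applies Stirling to get $\exp(-\frac{\Delta}{s}\KL{w_j/(\Delta/s)}{\vp_{0,j,i}})$. It then works on the primal side: it removes the $2\zeta$ slack by an equicontinuity argument and locates the constrained minimiser of $\KL{\cdot}{q_{0,j}}$ at $q_{1,j}$ by convexity.

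Your tilt $\lambda=\log(q_{1,j}/q_{0,j})$ is exactly the KKT multiplier vector certifying that minimiser. It is zero on the free $\geq t$ coordinate, nonnegative on the $+$ coordinates and nonpositive on the $-$ coordinates. So a single Markov bound replaces the counting, Stirling and optimisation steps, and needs only $\sum q_{1,j}=1$. It also makes the $\zeta$-dependence explicit. That is where care is needed: the paper's equicontinuity is not uniform in $s$, because coordinates of $q_{0,j}$ go down to order $s^{-t}$.

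One step is wrong as written. With $s=\lceil\log\log n\rceil$ and $\zeta=1/\log\log\log n$ you have $\zeta\log s\to 1$, not $o(1)$. A per-offset error of $O(\zeta\log s)$ therefore only gives a total error of $O(\Delta)$ in the exponent.

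Your fallback of discarding $j<\eps s$ does repair this. Since $\KL{q_{1,j}}{q_{0,j}}=O(1+\log(s/j))$, the discarded part of $\cM_s$ is $O(\eps\log(1/\eps))$.

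A cleaner fix is to notice that the error is relative. By $\log y\leq y-1$,
\[
\sum_{r} q_{1,j}^{r,-}\log\frac{q_{0,j}^{r,-}}{q_{1,j}^{r,-}}\;\leq\;\sum_r\bc{q_{0,j}^{r,-}-q_{1,j}^{r,-}}\;\leq\;1 .
\]
Hence $\sum_{r}q_{1,j}^{r,+}\log(q_{1,j}^{r,+}/q_{0,j}^{r,+})\leq\KL{q_{1,j}}{q_{0,j}}+1$, and your exponent is at least $\frac{\Delta}{s}\bc{(1-2\zeta)\KL{q_{1,j}}{q_{0,j}}-4\zeta}$. Summing over $j$ gives $(1-2\zeta)\Delta\cM_s-4\zeta\Delta=(1-o(1))\Delta\cM_s$, since $\cM_s$ stays bounded away from $0$ by Claim~\ref{lem_calc_cont}. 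No uniformity in $j$ is needed.
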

\begin{proof}

Recall that $\cE$ is the event that properties 1.-3. from \Cref{Lemma_GammaMinMax} hold.
Given $\vk$, the events $\bigcap_{r \in \cbc{0, \dots, t-1}}\fE_{x,j,r}$ are {\em independent} for different $j$, as they depend on disjoint sets of  tests. Thus,
    \begin{align}
		\label{misclassification-healthy}\Pr&\bc{\bigcap_{j\in [s]} \bigcap_{r\in \{0,\dots,t-1\}}
         \fE_{x,j,r} ~\Big\vert~ \cE, \,\vk,\,x\in V_0[i]}
            = \prod_{j=1}^s \pr\bc{ \bigcap_{r\in \{0,\dots,t-1\}}  \fE_{x,j,r} ~\Big\vert~ \cE,\vk, \,x\in V_0[i] }.
		\end{align}
    We next write the last equation in vector form. For this, define the $d$-dimensional rectangles
\begin{align*}
    C_j^{2\zeta, -} &:= \cbc{0, \ldots, \lceil\bc{1+2\zeta}W_{1,j}^{0,-}\rceil} \times \ldots \times \cbc{0, \ldots, \lceil\bc{1+2\zeta}W_{1,j}^{t-1,-}\rceil}, \\
    C_j^{2\zeta, +} &:= \cbc{\lfloor\bc{1-2\zeta}W_{1,j}^{0,+}\rfloor, \ldots, \frac{\Delta}{s}} \times \ldots \times \cbc{\lfloor \bc{1-2\zeta}W_{1,j}^{t-1,+}\rfloor, \ldots, \frac{\Delta}{s}}.
\end{align*}
 and set $C_j^{2\zeta}:= \cbc{w \in C_j^{2\zeta, -} \times C_j^{2\zeta, +} \times [0,\infty)\colon  \sum_{u=1}^{2t+1}w_u = \Delta/s}$. Observe that for $j=s$ and $r \in [t-1]$, $W_{1,j}^{r,-} = W_{0,j}^{r,+} = 0$, so that the corresponding coordinates are restricted to $0$.
 Additionally, recall the definition \eqref{def_wvec} of $\vW_{x,j}(\SIGMA)$.
 Using this notation, we can rewrite the probability on the r.h.s.\ of \eqref{misclassification-healthy} as

\begin{align*}
    \prod_{j=1}^s   \pr\bc{ \bigcap_{r\in \{0,\dots,t-1\}}  \fE_{x,j,r} \mid \cE, \,\vk,\,x\in V_0[i] }
         = & \prod_{j=1}^s \pr\bc{ \vW_{x,j}(\SIGMA) \in C_j^{2\zeta}   ~\Big\vert~\cE, \,\vk,\,x\in V_0[i] } 
         \\=&  \prod_{j=1}^s \sum_{w \in C_j^{2\zeta} } \pr\bc{ \vW_{x,j}(\SIGMA) = w  \mid \cE, \,\vk,\,x\in V_0[i] }.
\end{align*}
Note that $|C_j^{2\zeta}| = O\bc{(\Delta/s)^{2t+1}}$ as each component of $w \in C_j^{2\zeta}$ can take at most $O(\Delta/s)$ distinct values. We can thus estimate the sum by the maximum over all elements in $C_j^{2\zeta}$ as
\begin{align}
   & \prod_{j=1}^s \sum_{w \in C_j^{2\zeta} } \pr\bc{ \vW_{x,j}(\SIGMA) = w  \mid \cE, \,\vk,\,x\in V_0[i]} \nonumber\\
             \leq & O\bc{\bc{\frac{\Delta}{s}}^{s(2t+1)}} \prod_{j=1}^s \max_{w_j=(w_j^-,w_j^+,w_j') \in C_j^{2\zeta}} \pr\bc{ \vW_{x,j}^+(\SIGMA) = w_j^+, \vW_{x,j}^-(\SIGMA) = w_j^-  \mid \cE, \,\vk,\,x\in V_0[i] } \nonumber \\
             = &\exp\bc{o\bc{\Delta}} \prod_{j=1}^s \max_{w_j=(w_j^-,w_j^+,w_j') \in C_j^{2\zeta}}   \pr\bc{ \vW_{x,j}^+(\SIGMA) = w_j^+, \vW_{x,j}^-(\SIGMA) = w_j^-  \mid \cE, \,\vk,\,x\in V_0[i]},
\end{align}
 where $w_j=(w_j^-,w_j^+,w_j')\in C_j^{2\zeta}$ indicates that $w_j^{\pm} \in C_j^{2\zeta, \pm}$,
 and we have used that $(\Delta/s)^{s(2t+1)}=\exp{(o\bc{\Delta})}$. 
 By Claim \ref{claim_expextation}, and Stirling's formula for the asymptotics of the multinomial probabilities,
\begin{align}
    \exp\bc{o\bc{\Delta}} &\prod_{j=1}^s \max_{w_j=(w_j^-,w_j^+,w_j') \in C_j^{2\zeta}}   \pr\bc{ \vW_{x,j}^+(\SIGMA) = w_j^+, \vW_{x,j}^-(\SIGMA) = w_j^-  \mid \cE, \,\vk,\,x\in V_0[i] } \nonumber \\
            & \leq \exp\bc{o\bc{\Delta}} \prod_{j=1}^s \max_{w_j \in C_j^{2\zeta}}  \exp\bc{-\frac{\Delta}{s} \KL{\frac{w_j}{\Delta/s}}{\vec p_{0,j,i}} + O\bc{\log \frac{\Delta}{s}}} \nonumber \\
            & \leq \exp\bc{o\bc{\Delta}}  \max_{w=(w_1,\ldots,w_s) \in \prod_{j=1}^{s} C_j^{2\zeta}}   \exp\bc{-\frac{\Delta }{s} \sum_{j=1}^s \KL{\frac{w_j}{\Delta/s}}{\vec p_{0,j,i}} }.
\end{align}
Recall that $\vec p_{0,s,i}^{r,\pm} = q_{0,s}^{r,\pm} = 0$ for $r \in [t-1]$, in which case also by our definition of the rectangles also $w_s^{r,\pm} = 0$, or in all other cases, $\vec p_{0,s,i}^{r,\pm}>0$ conditionally on $\cE$ and $q_{0,j}^{r, \pm} = \Omega(s^{-t})$, as demonstrated in \eqref{eq_lower_W}. As $\KL{a_n}{(1+\eps_n)b_n} = \KL{a_n}{b_n} - \sum_{i \in [2t+1]}a_n^{(i)}\log\bc{1+\eps_n}$ and the coordinates of each $w_j$ sum to $\Delta/s$, the asymptotics of $\vec p_{0,j,i}$ in Claim \ref{claim_distpq} entail that, conditionally on $\cE$,
\begin{align}
 \exp&\bc{o\bc{\Delta}}  
    \max_{(w_1,\ldots,w_s) \in \prod_{j=1}^{s} C_j^{2\zeta}}   \exp\bc{-\frac{\Delta }{s} \sum_{j=1}^s \KL{\frac{w_j}{\Delta/s}}{\vec p_{0,j,i}} } \nonumber \\
            & \leq \exp\bc{o\bc{\Delta}}  \max_{(w_1,\ldots,w_s) \in  \prod_{j=1}^{s} C_j^{2\zeta}}   \exp\bc{-\frac{\Delta }{s} \sum_{j=1}^s \KL{\frac{w_j}{\Delta/s}}{\bc{1+O\bc{\frac{\log^3 n}{\sqrt{k}}}}q_{0,j}} } \nonumber  \\
            &= \exp\bc{o\bc{\Delta}}  \max_{(w_1,\ldots,w_s) \in  \prod_{j=1}^{s} C_j^{2\zeta}}   \exp\bc{\Delta \log\bc{1+O\bc{\frac{\log^3 n}{\sqrt{k}}}} -\frac{\Delta }{s} \sum_{j=1}^s \KL{\frac{w_j}{\Delta/s}}{q_{0,j}} } \nonumber \\
            &=\exp\bc{o\bc{\Delta}}   \exp\bc{ -\Delta  \min_{(w_1,\ldots,w_s) \in  \prod_{j=1}^{s} C_j^{2\zeta}} \frac{1 }{s} \sum_{j=1}^s \KL{\frac{w_j}{\Delta/s}}{q_{0,j}} }.
\end{align}
This leads us to consider the optimisation problem
\begin{align}
    \cM_{s, \xi} &= \min_{w=(w_1,\ldots,w_s)} \frac{1}{s} \sum_{j=1}^s \KL{\frac{w_j}{\Delta/s}}{q_{0,j}}\nonumber\\
    &  \text{s.t.} \quad \forall j\in [s] \forall r=0,\ldots, t-1: w_{j}^{r,-} \leq \bc{1+\xi}W_{1,j}^{r,-}, \quad w_{j}^{r,+} \geq \bc{1-\xi}W_{1,j}^{r,+}.
\end{align}
We next compare $\cM_{s, 2\zeta}$ and $\cM_{s, 0}=\cM_{s}$.

Let $v$ be a feasible minimiser of $\frac{1}{s} \sum_{j=1}^s \KL{\frac{w_j}{\Delta/s}}{q_{0,j}}$ satisfying the constraints of $\cM_{s, 2\zeta}$, such that there exist $j \in [s], r \in \{0, \ldots, t-1\}$ such that $v_{j}^{r,-} > W_{1,j}^{r,-}$  or $v_{j}^{r,+} < W_{1,j}^{r,+}$. 
Then, by changing those coordinates to $W_{1,j}^{r,-}$ or $W_{1,j}^{r,+}$, respectively, and shifting the remaining discrepancy to the last coordinate $v_{s}^{\geq t}$, we obtain another vector $v'$. 
Since this changes each coordinate of the rescaled vector $v/(\Delta/s)$ by at most $O(\zeta)$ and the functions $w \mapsto  
\KL{\frac{w_j}{\Delta/s}}{q_{0,j}}$, $j \in [s]$, are equicontinuous on the compact interval $[0,1]$, we obtain
\begin{align}
    \frac{1}{s} \sum_{j=1}^s \KL{\frac{v_j}{\Delta/s}}{q_{0,j}} \geq \frac{1}{s} \sum_{j=1}^s \KL{\frac{v_j'}{\Delta/s}}{q_{0,j}} + o(1)
\end{align}
uniformly for all $v_1, \ldots, v_s$ and $v_1', \ldots, v_s'$ Thus, 
\begin{align}
    &\exp\bc{o\bc{\Delta}}   \exp\bc{ -\Delta  \min_{w=(w_1,\ldots,w_s) \in  \prod_{j=1}^{s} C_j^{2\zeta}} \frac{1 }{s} \sum_{j=1}^s \KL{\frac{w_j}{\Delta/s}}{q_{0,j}} }\\
     &=\exp\bc{o\bc{\Delta}}    \exp\bc{ -\Delta \min_{w=(w_1,\ldots,w_s) \in \prod_{j=1}^{s} C_j^{0}} \frac{1 }{s} \sum_{j=1}^s \KL{\frac{w_j}{\Delta/s}}{q_{0,j}} }.
\end{align}

Consider the optimisation problem $\cM_{s,0}=\cM_{s}$.
The KL-divergences are convex and reach their minimum at $w_j/\frac{\Delta}{s}=q_{0,j}$. 
Due to the constraints we have $q^{+,r}_{0,j}\leq q^{+,r}_{1,j} \leq w^{+,r}_j/\frac{\Delta}{s}$ and $q^{-,r'}_{0,j}\geq q^{-,r'}_{1,j} \geq w^{-,r'}_j/\frac{\Delta}{s}$. 
By convexity, the minimum is achieved at $w_j/\frac{\Delta}{s} = q_{1,j}$. This leads us to
\begin{align*}
    &\exp\bc{o\bc{\Delta}}    \exp\bc{ -\Delta \min_{w=(w_1,\ldots,w_s) \in \prod_{j=1}^{s} C_j^{0}} \frac{1 }{s} \sum_{j=1}^s \KL{\frac{w_j}{\Delta/s}}{q_{0,j}} } \\
            &   \leq \exp\bc{o\bc{\Delta}}   \exp\bc{-\frac{\Delta}{s} \sum_{j\in [s]} \KL{q_{1,j} }{q_{0,j}} } = \exp(-(1+o(1)) \Delta\cM). 
\end{align*} 
Together with \Cref{lem_calc_cont}, this establishes the claim as $\pr\bc{\cE^c} = o(n^{-2})$.
\end{proof}

\begin{proof}[Proof of \Cref{lem_nondef}]
Let $\vZ$ be the counting random variable on the l.h.s. of \eqref{eq_lem_nondef}.
Using \Cref{claim_pnondef} and \Cref{lem_calc}, as well as $\Delta = c d^* \log(n/k)$, 
    \begin{align} \label{eq_expnondef}
       \Erw\brk{\vZ} &= \sum_{i=s+1}^{\ell} \sum_{x \in V[i]} \pr\bc{x \in V_0[i]} \pr\bc{\bigcap_{j\in [s]} \bigcap_{r\in \{0,\dots,t-1\}}  \fE_{x,j,r} ~\Big\vert~ x\in V_0[i]} \nonumber \leq  n \exp\bc{-(1+o(1))\Delta\cM} \\
       & = n \exp\bc{-(1+o(1))c H(\Pr\bc{\Po(d^*) \geq t }) \log \frac{n}{k}}   \leq n \exp\bc{-(1+\eps)\log\frac{n}{k}} .
    \end{align}
 In the last step, we have used that $c \geq (1+\eps)\max\{c_1(d^*), c_2(d^*, \theta)\}$. By Markov's inequality,
    \begin{align*}
         \Pr\bc{\vZ \geq k^{1-\delta}}&\leq \frac{n}{k} \exp\bc{-(1+\eps)\log\frac{n}{k}} k^{\delta} 
        \leq \exp\bc{-\eps\log\frac{n}{k} } k^{\delta} \\
        & \leq \exp\bc{-\eps\log\frac{n}{k}+\delta \log{k}} = o(1) \ ,
    \end{align*}
    when $\delta \theta <\eps (1-\theta),$ since $k=n^{\theta}$ and $n/k=n^{1-\theta}$ with $\theta<1.$
\end{proof}

\subsection{Misclassified Defective Items (Proof of \Cref{lem_def})}\label{sec:lem_def}
\invisible{\lemdef*}
In this section, we prove \Cref{lem_def}. For this, we again rely on Claims \ref{claim_expextation} and \ref{claim_distpq} in the previous section:

\begin{proof}[Proof of \Cref{lem_def}]
Recall the event $\cE$ from \Cref{Lemma_GammaMinMax}. First, observe that $\vW_{x,s}^{r,\pm}(\SIGMA) =  W_{1,s}^{r,\pm} = 0$ for $r \in [t-1]$. Thus, the indicators in \eqref{eq_def_count} corresponding to these index choices are deterministically zero. In all other cases, as explicitly checked in \eqref{eq_lower_W} for $j \in [s-1]$, $\zeta W_{1,s}^{r,\pm} = \Omega(\zeta \frac{\Delta}{s^{t+1}})$.
By Claim \ref{claim_distpq}, on $\cE$,
    \begin{align*}
    \Big|\Erw\brk{\vW_{x,j}^{r,\pm}(\SIGMA)  \vert \SIGMA_x=1, \vk} - W_{1,j}^{r,\pm}\Big|\leq W_{1,j}^{r,\pm} O(\log^3 n/\sqrt{k})\leq W_{1,j}^{r,\pm} (\zeta/4).
    \end{align*}

Therefore, thanks to Claim \ref{claim_expextation}, and using the Chernoff bound for the Binomial distribution, we can estimate 
\begin{align} \label{eq_estimate_prob}
  &\pr\bc{\abs{\vW_{x,j}^{r,\pm}(\SIGMA) - W_{1,j}^{r,\pm}} > W_{1,j}^{r,\pm}(\zeta/2) ~\Big\vert~ \SIGMA_x=1, \cE, \vk } \nonumber \\ 
  &\leq \pr\bc{\abs{\vW_{x,j}^{r,\pm}(\SIGMA) - \Erw\brk{\vW_{x,j}^{r,\pm}(\SIGMA) \vert \SIGMA_x=1, \vk }} > W_{1,j}^{r,\pm}(\zeta/4) ~\Big\vert~ \SIGMA_x=1, \cE, \vk }  \nonumber \\
  &\leq \exp\bc{-\Omega\bc{\zeta^2 \frac{\Delta^2 s}{s^{2t+2} \Delta}} } = \exp\bc{-\Omega\bc{\zeta^2 \frac{\Delta}{s^{2t+1}}}}.
\end{align}
Again, let $\vZ$ denote the random variable on the l.h.s. of \eqref{eq_def_count}. A union bound together with \eqref{eq_estimate_prob} gives
\begin{align*}
  \Erw\brk{Z} \leq  k s t \cdot \exp\bc{-\Omega\bc{\zeta^2 \frac{\Delta}{s^{2t+1}}}} \leq k \cdot \exp\bc{-\Omega\bc{\frac{\log n}{\bc{\log \log n}^{2t+3}}}}.
\end{align*}
By Markov's inequality, w.h.p., the number of misclassified defective items is therefore at most $$k\cdot \exp\bc{-\Omega\bc{\frac{\log n}{\bc{\log \log n}^{2t+3}}}}. \qedhere$$   
\end{proof}

\section{Proof of \Cref{prop_endgame} (Cleaning Phase)} \label{Sec:Clean}
For each defective item $x$, we first define the subset of tests that will be {\em pivotal} in determining its true label by
\begin{align}
    \vS_x[i] = \abs{\cbc{a \in \partial x \cap F[i]\colon \abs{V_1 \cap \bc{\partial a \setminus \cbc{x}}} = t-1}} \quad \text{and} \quad \vS_x = \sum_{i=1}^\ell \vS_x[i].
\end{align}
Here, the notation $\partial a \setminus \cbc{x}$ indicates that all occurrences of $x$ in $a$ are disregarded.
The following lemma ensures that the number of pivotal tests is $\Omega(\sqrt \Delta)$:
\begin{lemma}[\Whp\ defective items have many pivotal tests]
\label{Lem_distphixstar}
Assume that $m\geq(1+\eps)\minf$.
Then $\min_{x\in\one}\vS_x\geq\sqrt \Delta$ \whp
\end{lemma}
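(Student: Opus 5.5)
We want to show that, \whp, every defective $x$ has at least $\sqrt\Delta$ pivotal tests. The plan is to lower-bound the expected number of pivotal tests of a fixed defective, show it is well above $\sqrt\Delta$, and then use a large-deviation bound strong enough for a union bound over all $k$ defectives. The condition $m\geq(1+\eps)\minf$, through $c\geq(1+\eps)c_2(d^*,\theta)$, is what makes the per-item failure probability $o(1/k)$. This mirrors the corresponding step for \SPIV\ in \cite{coja_spiv}.

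\emph{Step 1: the per-edge probability.} Fix $x\in\one$, say $x\in V[i]$, and work conditionally on $\vk$ and on the event $\cE$ of \Cref{Lemma_GammaMinMax}. Each of the $\Delta/s$ edges of $x$ into $F[i+j-1]$ goes to a uniformly chosen test $a$. The number of defective edges of $a$ other than $x$ is a sum of binomials with total mean $d^*(1+o(1))$, as in \Cref{claim_expextation}. By the Poisson approximation of \Cref{claim_distpq}, the probability that $a$ is pivotal for $x$ is $P_{t-1}(d^*)(1+o(1))$, uniformly in $j$. Seed items additionally have tests in $F[0]$, but these only help.

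\emph{Step 2: concentration.} Condition on the rest of the graph, i.e.\ on the edges of all items other than $x$. Then the $\Delta$ edge choices of $x$ are independent uniform draws, so $\vS_x$ dominates a binomial with $\Delta$ trials and success probability $p=P_{t-1}(d^*)(1+o(1))$. The one exception is multi-edges of $x$ landing in the same test, which perturb the count by only $O(1)$ \whp. Hence
\[
\pr\bc{\vS_x<\sqrt\Delta}\leq \pr\bc{\Bin(\Delta,p)\leq\sqrt\Delta}\leq \exp\bc{\Delta\log(1-p)(1+o(1))},
\]
because the exponent of the lower tail at $o(\Delta)$ equals $-\Delta\KL{0}{p}=\Delta\log(1-p)$ up to lower-order terms. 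A caveat: the conditioning changes the law of the other tests' defective counts only negligibly, since removing $x$'s edges alters $\vk$-based degrees by $O(\Delta)$ out of $\Theta(k\Delta)$.

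\emph{Step 3: union bound, which I expect to be the crux.} Substituting $\Delta=cd^*\log(n/k)$ gives the failure probability $\exp\bc{(1+o(1))\,c\,d^*\log(1-P_{t-1}(d^*))\log(n/k)}$. Now $c\geq(1+\eps)c_2(d^*,\theta)$, and by \eqref{eq:cs_inf} we have $c_2(d^*,\theta)\,d^*\bc{-\log(1-P_{t-1}(d^*))}=\theta/(1-\theta)$. So the exponent is at most $-(1+\eps)(1+o(1))\frac{\theta}{1-\theta}\log(n/k)=-(1+\eps)(1+o(1))\log k$. A union bound over the $k(1+o(1))$ defectives then gives total failure probability $k^{-\eps+o(1)}=o(1)$, plus $\pr(\cE^c)=o(1)$.

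The delicate point is ensuring that the $o(1)$ errors in the exponent are truly $o(1)$ relative to $\eps$. The $(1+o(1))$ in $p$ comes from \Cref{claim_distpq} plus the window boundaries. Replacing the threshold $\sqrt\Delta$ by $0$ costs only $O(\sqrt\Delta\log\Delta)=o(\log n)$ in the exponent, so the required margin is met.
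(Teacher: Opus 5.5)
Your architecture matches the paper's: condition on everything except $x$'s own edges (the paper equivalently re-samples $\partial x$), compare $\vS_x$ with a binomial, use the lower tail $\Delta\KL{o(1)}{p}=\Delta\log\frac{1}{1-p}+o(\log n)$, and close the union bound with $c\,d^*\log\frac{1}{1-P_{t-1}(d^*)}\log(n/k)\geq(1+\eps)\log k$. That last inequality is exactly what $c\geq(1+\eps)c_2(d^*,\theta)$ gives, so Step~3 is correct.

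The gap is the passage from Step~1 to Step~2. After you condition on the rest of the graph, each of $x$'s draws into $F[i+j-1]$ succeeds with probability equal to the \emph{realised} fraction $\hat p_j=\abs{F_{t-1,x}[i+j-1]}\ell/m$. This is the fraction of tests in that compartment carrying exactly $t-1$ defective edges from other items, and it is a random quantity. Step~1, via \Cref{claim_distpq}, only computes the annealed probability, essentially $\Erw[\hat p_j]$. Averaging cannot replace concentration here. For instance, $(1-\hat p_j)^{\Delta/s}$, the conditional probability of no pivotal test in that compartment, is convex in $\hat p_j$. So by Jensen, fluctuations of $\hat p_j$ can only increase the failure probability compared with plugging in the mean. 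Your ``caveat'' handles a different and easy point: removing $x$'s edges shifts the count by $O(\Delta/s)=O(\log n)$, which the paper also uses. It does not show that $\hat p_j$ is close to $P_{t-1}(d^*)$, so the claimed binomial domination is unjustified as written.

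The missing ingredient is concentration, uniformly over $h\in[\ell]$, of the number $\vm_{t-1}^{(h)}$ of tests in compartment $h$ with exactly $t-1$ defective edges. The paper proves this as \Cref{Lemma_m0_spatial}: with probability $1-o(n^{-1})$, $\vm_{t-1}^{(h)}\in(1\pm n^{-\Omega(1)})\frac{m}{\ell}P_{t-1}(d^*)$ for all $h$. The proof replaces the defective-edge counts by independent binomials conditioned on their sum (\Cref{Lemma_Elemma}) and then uses a local limit theorem and a Chernoff bound. A shorter route would be McDiarmid's inequality over the $\Theta(k\Delta/\ell)$ independent endpoint choices of defective half-edges into $F[h]$. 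Each such choice changes the count by at most $2$, and you can combine this with your Poisson computation for the mean.

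Since this is a single global event, you need it only once, not with probability $1-o(1/k)$ per item. On this event, after the $O(\log n)$ adjustment for $x$'s own edges, your binomial domination holds simultaneously for all $x\in V_1$, and Step~3 completes the proof.
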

\begin{proof}
Recall that $P_l(\mu)$ is the probability that a Poisson random variable with parameter $\mu$ takes the value $l$. Consider the experiment of first constructing the test design $\G_{\mathrm{sc}}$ and then re-sampling the neighbourhood $\partial x$; that is, independently of $\G_{\mathrm{sc}}$, the item $x$ joins $\Delta/s$ tests chosen uniformly at random from each compartment $F[i+j-1]$. The resulting design $\G'$ has the same distribution as $\G_{\mathrm{sc}}$. Consider the set
\begin{align*}
F_{t-1,x}[i+j-1]
=
\cbc{a \in F[i+j-1]:
\abs{V_1 \cap \bc{\partial a \setminus \cbc{x}}}=t-1}
\end{align*}
of tests in compartment $F[i+j-1]$ that contain exactly $t-1$ defective items other than $x$.
 Let $\vS_x'[i+j-1]$ denote the number of tests in $F[i+j-1]\cap\partial x$ that belong to $F_{t-1,x}[i+j-1]$. Then, conditioned on $\G$,
\begin{align}\label{eqeqLem_distphixstar3_new}
\vS_x'[i+j-1]
\sim
\Bin\bc{ \frac\Delta s,  \abs{F_{t-1,x}[i+j-1]}\frac{\ell}{m}\,
}.
\end{align}
Moreover, without conditioning on $\G_{\mathrm{sc}}$, $\vS_x[i+j-1] \sim \vS_x'[i+j-1]$. 

We next argue that conditioning on $\G_{\mathrm{sc}}$ provides only negligible information about $\vS_x'[i+j-1]$. 
Let $\vm_{t-1}^{(i)}$ denote the number of tests in compartment $i$ that have exactly $t-1$ edges connecting them to defective items.
The following lemma shows that $\vm_{t-1}^{(i)}$ is \whp\ close to $\tfrac{m}{\ell}P_{t-1}(d)$:

\begin{restatable}[The number of test with $t-1$ edges to defective items]{lemma}{lemma_m0_spatial}
\label{Lemma_m0_spatial}
Assume $m = ck \log(n/k)$ and $\Delta = cd \log(n/k)$ for some constants $c,d>0$. For any fixed $r \in \mathbb{N}$, with probability at least $1-o(n^{-1})$, for all $i \in [\ell]$,
        $$\vm_{r}^{(i)}\in \brk{\bc{1\pm n^{-\Omega(1)}}\frac{m}{\ell} P_r(d)}.$$
\end{restatable}

\begin{proof}
The proof of \Cref{Lemma_m0_spatial} is deferred to Appendix \ref{appendix-C}.
\end{proof}

Next, consider the \whp\ event (see \Cref{Lemma_m0_spatial})
\begin{align*}
    \cN = \cbc{\forall i \in [\ell] : \vm_{t-1}^{(i)}\in \brk{\bc{1\pm n^{-\Omega(1)}}\frac{m}{\ell}P_{t-1}(d)}}.
\end{align*}
Since $x$ participates in  $\Delta/s=O(\log n)$ tests per compartment, 
given $\cN$,
\begin{align}\label{eqLem_distphixstar2_new}
\abs{F_{t-1,x}[i+j-1]}
&=
\vm_{t-1}^{(i+j-1)}+O(\log n) =
\bc{1\pm n^{-\Omega(1)}}
\frac{m}{\ell} P_{t-1}(d).
\end{align}
Combining \eqref{eqLem_distphixstar2_new} and \eqref{eqeqLem_distphixstar3_new} yields
that given $\cN$, for each $x \in V_{1}[i]$, $i\in[\ell]$, the random variable $\vS_x$ has distribution
\begin{align}\label{eqLem_distphixstar1}
\vS_x[i+j-1]\sim \Bin\bc{\frac \Delta s,\bc{1\pm n^{-\Omega(1)}}P_{t-1}(d)}.
\end{align}

To complete the proof, we combine \eqref{eqLem_distphixstar1} with \Cref{lem_chernoff_2}, which implies that, \whp,
\begin{align}
\pr\bc{\vS_x[i+j-1]\leq \sqrt\Delta\mid x\in V_1}&\leq\exp\bc{-\frac \Delta s\KL{(1+o(1))s/\sqrt{\Delta}}{P_{t-1}(d)+o(1)}} \\
&=\exp\bc{-(1+o(1))\frac{\Delta\log \bcfr{1}{1-P_{t-1}(d)}}{s}}.\label{eqeqLem_distphixstar4}
\end{align}
Given $\G_{\mathrm{sc}}$,  the random variables $(\vS_x'[i+j-1])_{j\in[s]}$ are mutually independent. A similar argument as above in combination with 
\eqref{eqeqLem_distphixstar4} yields
\begin{align}\label{eqeqLem_distphixstar5}
\pr\bc{\vS_x\leq \sqrt\Delta\mid x\in V_1}&\leq\bcfr{1}{1-p_{t-1}}^{-(1+o(1))\Delta}.
\end{align}
Finally, the assumption $m\geq(1+\eps)\minf$ for a fixed $\eps>0$, combined with the choice \eqref{Eq:Param} of $\Delta$,
ensure that $(1/(1-p))^{-(1+o(1))\Delta}=o(1/k)$.
Thus, the assertion follows from \eqref{eqeqLem_distphixstar5} by taking a union bound on $x\in V_1$.
\end{proof}

We can now combine the fact that each defective item is included in an abundance of pivotal tests, with the expansion argument from \Cref{prop_endgame}, to show that the set of misclassified items rapidly decreases, thereby establishing the result for exact recovery:

\begin{proof}[Proof of \Cref{prop_endgame}]
For $j = 1, \ldots, \ceil{\log n}$, let
\begin{align*}
{\sf Mis}_j = \cbc{ x \in V: \tau^{(j)}_x \neq \SIGMA_x}
\end{align*}
denote the set of items that remain misclassified at the $j$th iteration of the clean-up procedure. An item can remain misclassified at iteration $j+1$ only if many of the tests it participates in are disrupted by other misclassified items from iteration $j$. We will therefore show that this forces ${\sf Mis}_{j+1}$ to be much smaller than ${\sf Mis}_j$. Using \Cref{lem_nondef} and \Cref{lem_def}, we find that, \whp,
\begin{align}\label{eqprop_endgame1}
\abs{{\sf Mis}_1} \leq k\exp\bc{-\frac{\log n}{(\log\log n)^{2t+3}}}.
\end{align}
Furthermore, in light of \Cref{Lem_distphixstar} we may condition on the event $\cA=\{\min_{x\in\one}\vS_x\geq\sqrt \Delta\}$.

We now show that given $\cA$ for every $j\geq1$, deterministically
\begin{align}\label{eqprop_endgame2}
{\sf Mis}_{j+1}\subset{\cbc{x\in V\colon \sum_{a\in\partial x\setminus F[0]}\vecone\{\partial a \cap {\sf Mis}_j\setminus\cbc x \not= \varnothing\} \geq \ceil{\log^{1/4} n}}}.
\end{align}
In simpler terms, an item remains misclassified at iteration $j+1$ only if it is involved in at least $\log^{1/4} n$ tests that contain other misclassified items.

To prove \eqref{eqprop_endgame2}, suppose that $x\in{\sf Mis}_{j+1}$. Recall from \eqref{Eq:Param} that $\Delta=\Omega(\log n)$ (since $c \geq (1+\eps)\cinf$).
Moreover, Step~15 of \SPOT\ classifies $x$ based upon the number $\vS_x\bc{\tau^{(j)}}$ of positive tests containing $x$ that contain exactly $t-1$ other items (with multiplicities) classified as defective, where
\begin{align*}
\vS_x\bc{\tau^{(j)}} = \sum_{a\in\partial x\colon \hat\SIGMA_a=1} \vecone\cbc{\abs{\cbc{y\in\partial a\setminus\cbc{x}\colon \tau^{(j)}_y=1}} = t-1}.
\end{align*}

There are two cases to consider:
\begin{description}
\item[Case 1 ($x\in V_0$) :]
Suppose that $\SIGMA_x=0$, but $\tau^{(j)}_x=1$. 
Since Step~15 of \SPOT\ applies the threshold $\vS_x\bc{\tau^{(j)}} > \log^{1/4}n$, there are at least $\log^{1/4}n$ tests $a\in\partial x$ such that
\begin{align*}
\abs{\cbc{y\in\partial a\setminus\cbc{x}:\tau^{(j)}_y=1}} = t-1.
\end{align*}
However, every positive test $a\in\partial x$ must contain at least $t$ actually defective items other than $x$. So every such test $a$ contains at least one truly defective items in $\partial a\setminus\cbc{x}$ that is labelled incorrectly as non-defective under $\tau^{(j)}$ and thus contained in ${\sf Mis}_j$, i.e.,
\begin{align*}
 \sum_{a\in\partial x\setminus F[0]}\vecone\{\partial a \cap {\sf Mis}_j\setminus\cbc x \not= \varnothing\} \geq \ceil{\log^{1/4} n}.
\end{align*}

\item[Case 2 ($x\in V_1$) :]
Suppose that $\SIGMA_x=1$, but $\tau^{(j)}_x=0$. Since Step~15 of \SPOT\ applies the threshold $\vS_x\bc{\tau^{(j)}} > \log^{1/4}n$, there are at most $\log^{1/4}n$ tests $a\in\partial x$ such that
\begin{align*}
\abs{\cbc{y\in\partial a\setminus\cbc{x}:\tau^{(j)}_y=1}} = t-1.
\end{align*}
On the other hand, on the event $\cA$, $x$ in fact participates in 
$\vS_x \geq \sqrt\Delta = \Omega(\log^{1/2}n)$ 
tests that contain exactly $t-1$ other defective items.
Therefore, there are at least
\begin{align*}
\sqrt\Delta - \log^{1/4}n = \Omega(\log^{1/2}n)
\end{align*}
tests $a\in\partial x$ that must contain a misclassified item other than $x$ at step $j$. Hence,
\begin{align*}
 \sum_{a\in\partial x\setminus F[0]}\vecone\{\partial a \cap {\sf Mis}_j\setminus\cbc x \not= \varnothing\} \geq \ceil{\log^{1/4} n}.
\end{align*}
\end{description}
In both cases, we obtain \eqref{eqprop_endgame2}.
Finally, combining \eqref{eqprop_endgame1}, \eqref{eqprop_endgame2} and the expansion property from \Cref{lemma_endgame_misclassified} we conclude that \whp\ $|{\sf Mis}_{j+1}|\leq|{\sf Mis}_{j}|/3$ for all $j\geq1$.
Iterating this bound for $\lceil \log n \rceil$ steps yields ${\sf Mis}_{\lceil\log n\rceil}=\varnothing$ \whp\ 
\end{proof}

\begin{funding} \phantom{a}
\vspace{-0.8cm}
\InsertBoxR{0}{\includegraphics[scale=0.1]{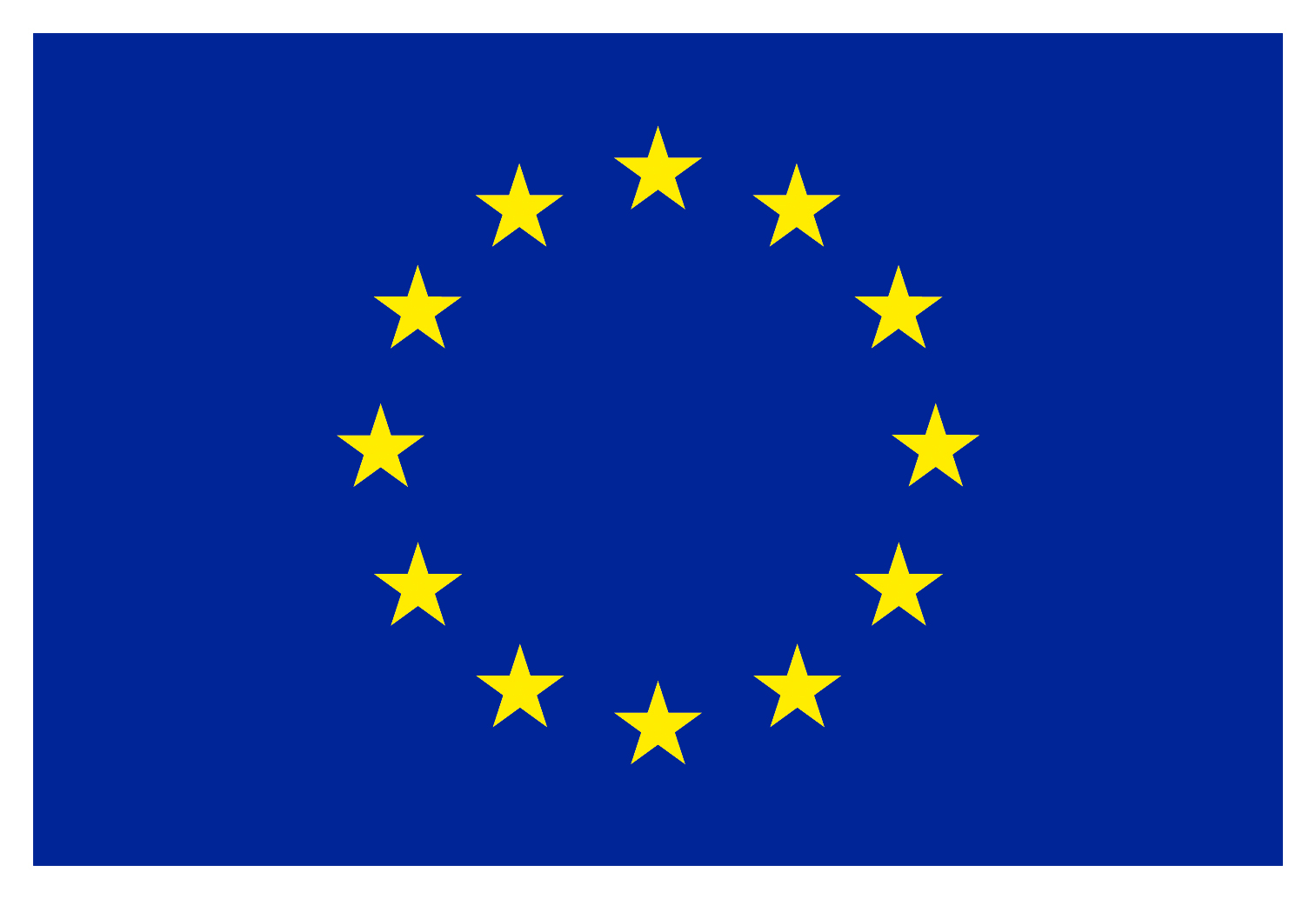}}
\noindent 
Amin Coja-Oghlan is supported by DFG CO 646/3, DFG CO 646/5 and DFG CO 646/6. 
Remco van der Hofstad and Noela Müller are supported in part by the NWO Gravitation project NETWORKS under grant no.\ 024.002.003.
Lena Krieg is supported by DFG CO 646/3.
\end{funding}



\bibliographystyle{imsart-number} 
\bibliography{bibliography_j}       


\begin{appendix}
\section{Probabilistic Toolbox} \label{Appendix_inequalities}
In this section, we state some results that we use in the main proofs.

\begin{theorem}[Chernoff bound for the binomial distribution \cite{janson2011random}] 
\label{lem_chernoff}
Let $\vX \sim \Bin \bc{n, p}$. Then, for any $\eps>0$, 
\begin{align*}
    & \Pr\bc{ \vX \geq (1 + \eps) \Erw \brk{\vX}} \leq \exp\bc{-\frac{\eps^2}{2 + 2\eps/3} \Erw\brk{\vX}} \qquad \text{and}  \\
    & \Pr\bc{\vX \leq (1 - \eps) \Erw\brk{\vX}} \leq \exp\bc{-\frac{\eps^2}{2} \Erw\brk{\vX}}.
\end{align*}
\end{theorem}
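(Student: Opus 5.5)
We prove the two tail bounds for $\vX\sim\Bin(n,p)$ by the exponential moment method, writing $\mu=\Erw[\vX]=np$. For any $\lambda>0$, Markov's inequality applied to $\eul^{\lambda\vX}$ gives
\begin{align*}
\Pr\bc{\vX\geq(1+\eps)\mu}\leq \eul^{-\lambda(1+\eps)\mu}\,\Erw\brk{\eul^{\lambda \vX}} .
\end{align*}
By independence of the $n$ Bernoulli summands, $\Erw[\eul^{\lambda\vX}]=(1+p(\eul^\lambda-1))^n$. Using $1+u\leq\eul^u$, this is at most $\exp(\mu(\eul^\lambda-1))$.

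Optimising at $\lambda=\log(1+\eps)$ yields the standard form
\begin{align*}
\Pr\bc{\vX\geq(1+\eps)\mu}\leq\exp\bc{-\mu\,\varphi(\eps)},\qquad \varphi(x)=(1+x)\log(1+x)-x .
\end{align*}
For the lower tail, apply the same argument to $\eul^{-\lambda \vX}$ and take $\lambda=-\log(1-\eps)$, for $0<\eps<1$. This gives $\exp(-\mu\varphi(-\eps))$. For $\eps\geq1$ the lower-tail event is $\{\vX\le 0\}$ or empty, and is handled directly: $\Pr(\vX=0)=(1-p)^n\leq \eul^{-\mu}\le \eul^{-\mu/2}$, and for $\eps>1$ the probability is $0$.

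It remains to prove two elementary inequalities:
\begin{align*}
\varphi(x)\geq \frac{x^2}{2+2x/3}\quad (x\geq0), \qquad\qquad \varphi(-x)\geq \frac{x^2}{2}\quad (0\le x<1).
\end{align*}

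For the second inequality, set $g(x)=\varphi(-x)-x^2/2$. Then $g(0)=0$ and $g'(x)=-\log(1-x)-x\geq0$, since $-\log(1-x)=\sum_{k\ge1} x^k/k\ge x$.

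The first inequality is the only mildly delicate step. Set $h(x)=\varphi(x)-\frac{x^2}{2+2x/3}=\varphi(x)-\frac{3x^2}{6+2x}$. We have $h(0)=h'(0)=0$, because $\varphi'(x)=\log(1+x)$ and the derivative of the rational term also vanishes at $0$. It therefore suffices to show $h''\geq0$ on $[0,\infty)$. We have $\varphi''(x)=1/(1+x)$, and a direct computation gives $\frac{d^2}{dx^2}\frac{3x^2}{6+2x}=\frac{108}{(6+2x)^3}=\frac{27}{2(3+x)^3}$. The claim thus reduces to $2(3+x)^3\geq 27(1+x)$. Expanding,
\begin{align*}
2(3+x)^3-27(1+x)=27+27x+18x^2+2x^3\ge0 ,
\end{align*}
which holds for all $x\ge0$. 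This completes both bounds, which are exactly those stated in \Cref{lem_chernoff}.
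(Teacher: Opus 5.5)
Your route is the standard one and coincides with the proof in the cited source. The paper gives no proof of its own, only the reference to Janson--{\L}uczak--Ruci\'nski, which uses exactly these steps: exponential moments, $\varphi(x)=(1+x)\log(1+x)-x$, and the two elementary inequalities $\varphi(x)\ge x^2/(2+2x/3)$ and $\varphi(-x)\ge x^2/2$. Your separate handling of $\eps\ge 1$ in the lower tail is fine.

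There is, however, an arithmetic slip in the one delicate step. In fact
\[
\frac{d^2}{dx^2}\,\frac{3x^2}{6+2x}=\frac{216}{(6+2x)^3}=\frac{27}{(3+x)^3},
\]
not $\frac{27}{2(3+x)^3}$. You can see this by writing $\frac{3x^2}{6+2x}=\frac32\bigl(x-3+\frac{9}{x+3}\bigr)$.

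A quick sanity check exposes the error. We have $\varphi(x)=\frac{x^2}{2}-\frac{x^3}{6}+\frac{x^4}{12}-\cdots$ and $\frac{x^2}{2+2x/3}=\frac{x^2}{2}-\frac{x^3}{6}+\frac{x^4}{18}-\cdots$. These agree to third order, so $h(x)=\frac{x^4}{36}+O(x^5)$ and $h''(0)=0$, whereas your formula gives $h''(0)=\frac12$.

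Consequently, the inequality you verified, $2(3+x)^3\ge 27(1+x)$, is weaker than what is actually needed. The correct requirement is $(3+x)^3\ge 27(1+x)$, which expands to $x^2(x+9)\ge 0$ and still holds for $x\ge 0$. This is exactly the comparison $\varphi''(x)=\frac{1}{1+x}\ge\frac{1}{(1+x/3)^3}$ used in the reference. With that correction your argument is complete.
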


We will  occasionally apply the following Chernoff bound.

\begin{theorem}[Chernoff bound for the binomial distribution alternative \cite{janson2011random}] \label{lem_chernoff_2}
Let $\vX \sim \Bin \bc{N, p}$. Then
\begin{align}\label{eqChernoff1}
    \Pr\bc{\vX \geq {qN}} &\leq \exp \bc{-N\KL{q}{p}} \quad \text{for $p<q<1$,} \\
    \Pr\bc{\vX \leq {qN}} &\leq \exp \bc{-N\KL{q}{p}} \quad \text{for $0<q<p$.}\label{eqChernoff2}
\end{align}
\end{theorem}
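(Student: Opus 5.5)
The plan is the standard exponential-moment (Cramér--Chernoff) argument, with the free parameter optimised explicitly. Throughout, $\KL{q}{p}$ is read as the divergence between the two-point distributions $(q,1-q)$ and $(p,1-p)$, that is, $q\log(q/p)+(1-q)\log\bc{(1-q)/(1-p)}$, in line with the vector definition used earlier.

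For the upper tail \eqref{eqChernoff1}, fix $\lambda>0$ and apply Markov's inequality to $\eul^{\lambda\vX}$. Writing $\vX$ as a sum of $N$ independent Bernoulli$(p)$ variables, the moment generating function factorises, so
\begin{align*}
\Pr\bc{\vX\geq qN}\leq \eul^{-\lambda qN}\,\Erw\brk{\eul^{\lambda\vX}}=\exp\bc{-N\bc{\lambda q-\log(1-p+p\eul^{\lambda})}}.
\end{align*}
Next I would minimise over $\lambda$. The exponent $g(\lambda)=\lambda q-\log(1-p+p\eul^\lambda)$ is concave, and setting $g'(\lambda)=0$ gives $p\eul^\lambda/(1-p+p\eul^\lambda)=q$. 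This solves to $\eul^{\lambda^*}=\frac{q(1-p)}{p(1-q)}$, and $\lambda^*>0$ precisely because $q>p$; this is where the hypothesis $p<q<1$ enters. Substituting back, $1-p+p\eul^{\lambda^*}=(1-p)/(1-q)$, and hence
\begin{align*}
g(\lambda^*)=q\log\frac{q(1-p)}{p(1-q)}-\log\frac{1-p}{1-q}=q\log\frac qp+(1-q)\log\frac{1-q}{1-p}=\KL{q}{p}.
\end{align*}
This yields \eqref{eqChernoff1}.

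For the lower tail \eqref{eqChernoff2}, I would avoid a second computation by symmetry. Apply \eqref{eqChernoff1} to $N-\vX\sim\Bin(N,1-p)$ with threshold $1-q$: the event $\vX\leq qN$ coincides with $N-\vX\geq(1-q)N$, and $0<q<p$ gives $1-p<1-q<1$. Since $\KL{1-q}{1-p}=\KL{q}{p}$ by the symmetry of the binary divergence, the bound follows. Alternatively, one can repeat the computation with $\lambda<0$.

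I expect no genuine obstacle here. The only step requiring care is checking that the optimiser $\lambda^*$ has the right sign under the stated hypotheses, so that Markov's inequality is applied to an increasing (respectively decreasing) function of $\vX$. The algebraic simplification to the KL form is routine.
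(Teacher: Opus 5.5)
Your proof is correct. The optimiser $\eul^{\lambda^*}=q(1-p)/(p(1-q))$ is positive exactly when $q>p$, the simplification to $\KL{q}{p}$ checks out, and so does the reduction of the lower tail to the upper tail via $N-\vX\sim\Bin(N,1-p)$. The paper gives no proof of its own and simply cites Janson, {\L}uczak and Ruci\'nski, whose argument is this same exponential-moment (Cram\'er--Chernoff) optimisation, so your route coincides with the intended one.
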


\begin{theorem}[Chernoff bound for the hypergeometric distribution \citep{Janson99onconcentration}]\label{Hyp}
Let $\vX$ be a Hyp$(N,M,K)$-distributed random variable and let $t \geq 0$. Then
\begin{align*}
    & \Pr\bc{\vX - \Erw[\vX] \geq t} \leq \exp\bc{-\frac{t^2}{2(KM/N+t/3)}} \qquad \text{and} \\ & \Pr\bc{\vX - \Erw[\vX] \leq -t} \leq \exp\bc{-\frac{t^2}{2KM/N}}.
\end{align*}
\end{theorem}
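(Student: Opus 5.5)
The plan is to reduce to the binomial case through a moment generating function comparison, and then rerun the usual Chernoff optimisation; the bounds do not follow from \Cref{lem_chernoff} alone, because that result is stated for binomial variables. Write $\vX$ as the number of marked elements among $K$ draws without replacement from a population of size $N$ containing $M$ marked elements. Set $p=M/N$ and $\mu=\Erw[\vX]=KM/N$. Let $\vY\sim\Bin(K,p)$, which is the with-replacement analogue and has the same mean $\mu$.

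\textbf{Step 1 (the comparison, expected main obstacle).} I would show that for every real $\lambda$,
\[
\Erw\brk{e^{\lambda \vX}}\leq \Erw\brk{e^{\lambda \vY}}=(1-p+pe^\lambda)^K.
\]
This is Hoeffding's classical result that $\Erw f(\vX)\leq\Erw f(\vY)$ for every continuous convex $f$, applied to $f(x)=e^{\lambda x}$.

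I would first try the negative-association route, which fits the paper's earlier use of negatively associated indicators (\Cref{ChernNeg}). Write $\vX=\sum_{i\leq K}\vecone\{\text{draw } i \text{ is marked}\}$. These indicators are negatively associated, by the Joag-Dev--Proschan result for sampling without replacement (equivalently, permutation distributions). Negative association gives
\[
\Erw\Big[\prod_{i\leq K} e^{\lambda \vecone_i}\Big]\leq\prod_{i\leq K}\Erw\brk{e^{\lambda\vecone_i}},
\]
because $e^{\lambda\vecone_i}$ is monotone in $\vecone_i$ and all factors are monotone in the same direction for a fixed sign of $\lambda$. Each marginal is Bernoulli$(p)$, so the right-hand side is exactly $(1-p+pe^\lambda)^K$.

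If that dependency is to be avoided, the fallback is Hoeffding's direct argument: express $\Erw f(\vY)$ as an average of $\Erw f$ over sums of distinct draws, and conclude by convexity and Jensen's inequality.

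\textbf{Step 2 (upper tail).} For $\lambda>0$, Markov's inequality and Step 1 give
\[
\Pr(\vX\geq\mu+t)\leq e^{-\lambda(\mu+t)}(1+p(e^\lambda-1))^K\leq \exp\bc{\mu(e^\lambda-1)-\lambda(\mu+t)},
\]
using $1+u\leq e^u$. Choosing $\lambda=\log(1+t/\mu)$ yields the bound $\exp(-\mu\varphi(t/\mu))$, where $\varphi(x)=(1+x)\log(1+x)-x$.

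It then remains to verify the elementary inequality $\varphi(x)\geq x^2/(2(1+x/3))$ for $x\geq0$. I would prove it by comparing the two sides and their first two derivatives at $x=0$, where both sides vanish together with their first derivatives, and then checking that the second derivative of the difference is nonnegative. With $x=t/\mu$ this gives exactly $\exp\bc{-t^2/(2(KM/N+t/3))}$.

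\textbf{Step 3 (lower tail).} Take $\lambda<0$ and proceed identically. This gives
\[
\Pr(\vX\leq\mu-t)\leq\exp(-\mu\varphi(-t/\mu))
\]
for $0\leq t\leq\mu$. For $t>\mu$ the probability is $0$, so there is nothing to prove.

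The needed inequality here is $\varphi(-x)\geq x^2/2$ on $[0,1]$. It follows because $\varphi(-x)-x^2/2$ vanishes at $0$ and has derivative $-\log(1-x)-x\geq0$. This yields $\exp\bc{-t^2/(2KM/N)}$ and completes the proof. The only non-routine ingredient is the domination in Step 1; everything else is the standard binomial Chernoff computation.
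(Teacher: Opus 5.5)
Your proof is correct. The paper does not prove this statement; it quotes it from the cited reference. Your argument is the standard derivation behind that citation: Hoeffding's domination $\Erw[e^{\lambda\vX}]\le(1-p+pe^{\lambda})^K$, then the usual exponential-moment optimisation, then the elementary bounds $\varphi(x)\ge x^2/(2(1+x/3))$ and $\varphi(-x)\ge x^2/2$. What your version adds is that it is self-contained.

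Once you take the negative-association route in Step~1, the paper's own toolbox already finishes the job. \Cref{ChernNeg} lets you apply \Cref{lem_chernoff} directly to $\vX=\sum_{i\le K}\vecone_i$. With $\eps=t/\mu$ one has $\frac{\eps^2}{2+2\eps/3}\mu=\frac{t^2}{2(\mu+t/3)}$ and $\frac{\eps^2}{2}\mu=\frac{t^2}{2\mu}$, which are exactly the two claimed exponents. Your Steps~2--3 therefore re-derive \Cref{lem_chernoff} rather than being logically necessary.

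Two minor details to tidy up:
\begin{itemize}
\item In Step~3 the choice $\lambda=\log(1-t/\mu)$ is not available at $t=\mu$. There you can let $\lambda\to-\infty$, which gives $\Pr(\vX=0)\le(1-p)^K\le e^{-\mu}\le e^{-\mu/2}$.
\item Iterating the negative-association inequality to the $K$-fold product uses that the factors $e^{\lambda\vecone_i}$ are nonnegative, so the partial products stay monotone in the same direction.
\end{itemize}

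Your second-derivative check in Step~2 does go through. The second derivatives are $1/(1+x)$ and $27/(3+x)^3$, and $(3+x)^3-27(1+x)=x^2(x+9)\ge0$.
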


\begin{theorem}[Poisson Approximation with error terms \cite{barbour1992poisson}]\label{PoiApprox}
Let $\vX \sim \Bin(n,p)$, then for each $k \in \mathbb{Z}^+$ 
\[ \Pr(\vX=k) = \binom{n}{k}p^k (1-p)^{n-k} = \frac{(np)^k}{k!} e^{-np} \bc{1+O\bc{np^2,k^2 n^{-1}}}. \]
\end{theorem}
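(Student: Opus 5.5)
The plan is to expand the binomial probability exactly and compare it factor by factor with the Poisson mass. We read the statement in the regime where the error parameters $np^2$ and $k^2/n$ are at most a small constant, and we take $p\le 1/2$; this is the only regime in which it is applied (there $np=O(1)$ and $k\le t$ is fixed). The claimed factor is understood as $1+O(np^2+k^2n^{-1})$.

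\textbf{Step 1: the binomial coefficient.} Write
\[
\binom{n}{k}p^k=\frac{(np)^k}{k!}\prod_{i=0}^{k-1}\bc{1-\frac in}.
\]
Since $\log(1-x)=-x+O(x^2)$ for $x\le 1/2$, the logarithm of the product is $-\sum_{i<k} i/n+O(k^3/n^2)=O(k^2/n)$. This uses $k\le n/2$, which holds in our regime because $k^2/n$ is small. Hence the product equals $1+O(k^2/n)$.

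\textbf{Step 2: the factor $(1-p)^{n-k}$.} Using $\log(1-p)=-p+O(p^2)$ for $p\le 1/2$,
\[
(1-p)^{n-k}=\exp\bc{-(n-k)p+O(np^2)}=e^{-np}\exp\bc{kp+O(np^2)}.
\]
The cross term $kp$ does not appear in the target error, so it has to be absorbed. By AM--GM,
\[
kp=\sqrt{\frac{k^2}{n}\cdot np^2}\le\frac12\bc{\frac{k^2}{n}+np^2}.
\]
So the exponent is $O(np^2+k^2/n)$, which is bounded. Therefore $\exp(\cdot)=1+O(np^2+k^2/n)$.

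\textbf{Step 3: combine.} Multiplying the two factors, each of the form $1+O(\cdot)$ with a small error, gives the statement.

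The only point needing care, and the one I expect to be the main obstacle, is the stray $kp$ term in Step 2. Handling it via the AM--GM bound above is what allows the error to be expressed purely in terms of $np^2$ and $k^2/n$. Everything else is routine Taylor expansion of the logarithm.
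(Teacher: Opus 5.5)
The paper gives no proof of this statement: it is quoted in the probabilistic toolbox from the Poisson-approximation monograph of Barbour, Holst and Janson. That source works within a general Poisson-approximation theory, which also yields, e.g., total-variation bounds for sums of non-identical or dependent indicators. Your argument is a correct, self-contained replacement for the citation.

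\begin{itemize}
\item The falling-factorial factor is indeed $1+O(k^2/n)$.
\item The expansion $(n-k)\log(1-p)=-np+kp+O(np^2)$ is right.
\item The AM--GM bound $kp\le\frac12\bc{k^2/n+np^2}$ is exactly what absorbs the stray cross term into the stated error.
\end{itemize}

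Reading the $O(\cdot)$ in the regime where $np^2+k^2/n$ is below a small constant is the natural interpretation. That regime already forces $p<1/2$ and $k<n/4$, so your side conditions are automatic. It also covers every Poisson approximation the paper performs: there the value at which the mass is evaluated is at most $t$, and the success probability tends to $0$. Note that outside this regime your argument does not apply as written, since you use $\eul^{O(x)}=1+O(x)$ only for bounded $x$. A literal all-parameter reading would need an extra crude bound on the ratio, which the paper never needs.

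Your factor-by-factor comparison is essentially the computation the paper redoes by hand in Appendix~\ref{Appx:disqp} for the specific binomials in \Cref{claim_distpq}. There it compares the falling factorial against $N^r$, and $(1-\ell/m)^{N-r}$ against $\eul^{-N\ell/m}$. So your route makes the toolbox statement elementary and could replace those ad hoc estimates by one uniform lemma. The cited theory buys generality that the paper does not use.
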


\begin{definition}[Negative association \cite{dubhashi1998balls}] \label{def_NA} Let $\vX = (\vX_1, \ldots, \vX_d)$ be a random vector. Then the family of random variables $\vX_1, \ldots, \vX_d$ is said to be {\em negatively associated} if, for every two disjoint index sets $I,J \subseteq [d]$,
\begin{align*}
    \Erw\brk{f(\vX_i: i \in I) g(\vX_j: j \in J)} \leq \Erw\brk{f(\vX_i: i \in I)} \Erw\brk{ g(\vX_j: j \in J)}
\end{align*}
for all functions $f:\RR^{|I|} \to \RR$ and $g:\RR^{|J|} \to \mathbb{R}$ that are either both non-decreasing or both non-increasing.
\end{definition}

\begin{theorem}[Chernoff Bound under negative association \cite{dubhashi1998balls}] \label{ChernNeg}
    The Chernoff-Hoeffding bound Theorem~\ref{lem_chernoff} also applies to sums of random variables that satisfy the negative association condition from Definition \ref{def_NA}.
\end{theorem}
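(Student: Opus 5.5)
The plan is the standard exponential-moment argument of Dubhashi and Ranjan, reducing everything to the inequality
\[
\Erw\Big[\prod_i e^{\lambda \vX_i}\Big]\leq \prod_i \Erw\big[e^{\lambda \vX_i}\big].
\]
Throughout, $\vX=\vX_1+\dots+\vX_d$ is a sum of negatively associated random variables taking values in $[0,1]$. In our application these are indicator variables.

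\textbf{Step 1 (the product inequality).} Fix $\lambda>0$ and argue by induction on $d$. Take $I=\{1,\dots,d-1\}$ and $J=\{d\}$, with
\[
f(\vX_i:i\in I)=\prod_{i<d}e^{\lambda\vX_i}, \qquad g(\vX_d)=e^{\lambda\vX_d}.
\]
Both functions are non-negative and non-decreasing, so Definition~\ref{def_NA} gives
\[
\Erw[fg]\leq\Erw[f]\,\Erw[g].
\]
A subfamily of a negatively associated family is again negatively associated, which is immediate from the definition by restricting the index sets. So the induction hypothesis applies to $\Erw[f]$ and closes the step. For $\lambda<0$, used in the lower tail, both functions are non-increasing and the same argument works.

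\textbf{Step 2 (Markov's inequality).} For the upper tail, apply Markov's inequality to $e^{\lambda\vX}$:
\[
\Pr(\vX\geq(1+\eps)\Erw\vX)\leq e^{-\lambda(1+\eps)\Erw\vX}\prod_i\Erw\big[e^{\lambda\vX_i}\big].
\]
Let $\vY_i$ be independent copies having the same marginals as the $\vX_i$. The right-hand side is exactly the Markov bound one would write for $\sum_i\vY_i$.

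The proof of Theorem~\ref{lem_chernoff} in \cite{janson2011random} only uses two facts. The first is this moment generating function bound, and for Bernoulli$(p_i)$ variables the factor is $1+p_i(e^\lambda-1)\leq\exp(p_i(e^\lambda-1))$. The second is optimisation over $\lambda$. Repeating that computation verbatim therefore gives the stated upper-tail bound $\exp(-\eps^2\Erw\vX/(2+2\eps/3))$. The lower tail follows the same way with $\lambda<0$ and gives $\exp(-\eps^2\Erw\vX/2)$. For $[0,1]$-valued rather than Bernoulli variables, convexity gives $e^{\lambda x}\leq 1+x(e^\lambda-1)$, so each factor is bounded by the Bernoulli one with the same mean.

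\textbf{Main obstacle.} Steps 1 and 2 are routine. The real difficulty, in the use made in Section~\ref{SSec:NaiveThersholding}, is establishing negative association of the test outcomes $\hat\SIGMA'_a$, $a\in\partial x$. I would treat this as an input rather than part of the statement.

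If required, I would attempt it via the balls-into-bins negative association of occupancy vectors from \cite{dubhashi1998balls}. Test outcomes are non-decreasing functions of disjoint occupancy counts, and such functions preserve negative association. One caveat is that this route must handle the conditioning on $\partial x$ and the fixed total number of defective edges.
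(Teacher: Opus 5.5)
Your proof is correct and follows the standard Dubhashi--Ranjan argument; the paper cites that argument from \cite{dubhashi1998balls} and does not reproduce it. Applying negative association inductively with $f=\prod_{i<d}e^{\lambda\vX_i}$ and $g=e^{\lambda\vX_d}$, which are monotone in the same direction for either sign of $\lambda$, gives $\Erw[\prod_i e^{\lambda\vX_i}]\le\prod_i\Erw[e^{\lambda\vX_i}]\le\exp(\Erw[\vX](e^\lambda-1))$, and the usual optimisation over $\lambda$ then yields both tails of Theorem~\ref{lem_chernoff} with $\Erw[\vX]$ in place of the binomial mean. You are also right that negative association of the $\hat{\SIGMA}'_a$, $a\in\partial x$, is a separate input not covered by this statement; the fixed total number of defective edges is not an obstacle there, because a fixed number of balls is exactly the balls-into-bins setting in which \cite{dubhashi1998balls} establish negative association.
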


\section{Proof of \Cref{claim_distpq}} \label{Appx:disqp}
\begin{proof}[Proof of \Cref{claim_distpq}] 
Recall that $\mathcal{E}$ denotes the event that the statements 1.-3. from \Cref{Lemma_GammaMinMax} hold. 
Fix $i \in \{s+1,\ldots, \ell\}$ and $j \in [s]$. For abbreviation let $\hat{\vk}=\hat{\vk}_{j,i} = \vec k[i-(s-j);i-1]$. Given $\cE$, for all $i,j$, $\hat{\vk}\ell \in \brk{(s-j)k \pm \lambda}$, where $\lambda = \sqrt{kl} (s-j) \log n$. We will first calculate the difference between the point probabilities of the Binomial and the Poisson distributions that occur in the definitions of $\vec p$ and $q$. 
We prove upper and lower bounds. For $n$ large enough, and starting with the upper bound,
\allowdisplaybreaks
\begin{align*}
&\frac{\Pr\bc{\Bin(\vec k[i-(s-j);i-1] \Delta/s,\frac{\ell}{m}) = r \mid \vk, \mathcal{E}}}{P_r(d^*(s-j)/s) } 
= \frac{\frac{(\hat{\vk} \Delta/s)!}{r! (\hat{\vk} \Delta/s - r)!} \bc{\frac{\ell}{m}}^r\bc{1- \frac{\ell}{m}}^{\hat{\vk}\Delta/s - r} }{\exp(-d^*(s-j)/s) \frac{(d^*(s-j)/s)^r}{r!}} \\
&\leq \bc{\frac{\hat{\vk}\ell \Delta}{m d^*(s-j)}}^r\bc{1- \frac{\ell}{m}}^{\hat{\vk}\Delta/s - r} \exp(d^*(s-j)/s) \\
&\stackrel{\eqref{Eq:Param}}{\leq} \bc{\frac{\hat{\vk}\ell}{k(s-j)}}^r\bc{1- \frac{\ell}{m}}^{\hat{\vk}\Delta/s - r} \exp(d^*(s-j)/s) \\
&\leq \bc{1+ \frac{\sqrt{\ell}}{\sqrt{k}} \log n}^r\exp\bc{-\frac{\ell(\hat{\vk}\Delta/s - r)}{m}} \exp(d^*(s-j)/s) \\
&\leq \exp\bc{-\frac{\ell(\hat{\vk}\Delta/s - r)}{m} + d^*(s-j)/s + r \frac{\sqrt{\ell}}{\sqrt{k}} \log n} \\
&\leq \exp\bc{r \bc{\frac{\ell}{m} + \frac{\sqrt{\ell}}{\sqrt{k}} \log n} + \frac{\lambda}{sk}} \leq \exp\bc{\frac{\log^2 n}{\sqrt{k}}} \leq 1 + O\bc{\frac{\log^2 n}{\sqrt{k}}}.
\end{align*}
Similarly, for the lower bound,
\allowdisplaybreaks
\begin{align*}
&\frac{\Pr\bc{\Bin(\vec k[i-(s-j);i-1] \Delta/s,\frac{\ell}{m}) = r\mid \vk, \cE}}{P_r(d^*(s-j)/s) } \\ 
&= \frac{(\hat{\vk} \Delta/s)!}{ (\hat{\vk} \Delta/s - r)!} \bc{\frac{\ell}{m (d^*(s-j)/s)}}^r\bc{1- \frac{\ell}{m}}^{\hat{\vk}\Delta/s - r} \exp(d^*(s-j)/s) \\
&\geq \bc{\hat{\vk}\frac{\Delta}{s}-r}^r \bc{\frac{\ell}{m (d^*(s-j)/s)}}^r\bc{1- \frac{\ell}{m}}^{\hat{\vk}\Delta/s - r} \exp(d^*(s-j)/s) \\
&\geq \bc{\frac{\hat{\vk}\ell}{k ((s-j))}-\frac{r\ell s}{m d^* (s-j)}}^r\bc{1- \frac{\ell}{m}}^{\hat{\vk}\Delta/s - r} \exp(d^*(s-j)/s) \\
&\geq \bc{1+ \frac{\sqrt{\ell}}{\sqrt{k}} \log n -\frac{r\ell s}{m d^* (s-j)}}^r \exp\bc{-\frac{\ell}{m}\bc{\hat{\vk}\Delta/s -r} + O\bc{\hat{\vk} \frac{d^* \ell^2}{k m s}}}\exp(d^*(s-j)/s) \\
&\geq \bc{1+ \frac{\sqrt{\ell}}{\sqrt{k}} \log n -O\bc{\frac{\ell s}{k \log n}}}^r \exp\bc{-\bc{\lambda\frac{d^*}{sk} } + O\bc{ \frac{(k(s-j)\pm \lambda)d^* \ell}{k m s}}} \\
&\geq \exp\bc{\frac{r \sqrt{\ell}}{\sqrt{k}} \log n + O\bc{\frac{\ell}{k} \log^2 n}} \exp\bc{O\bc{\frac{\log^2 n}{\sqrt{k}}}} \\
&\geq \exp\bc{O\bc{\frac{\log^2 n}{\sqrt{k}}}}^2 =\bc{1 + O\bc{\frac{\log^2 n}{\sqrt{k}}}}^2 = 1 + O\bc{\frac{\log^2 n}{\sqrt{k}}}. 
\end{align*}
We conclude that
\begin{align*}
    \frac{\Pr\bc{\Bin(\vec k[i-(s-j);i-1] \Delta/s,\frac{\ell}{m}) = r\mid \vk, \cE}}{P_r(d^*(s-j)/s)} = 1 + O\bc{\frac{\log^2 n}{\sqrt{k}}}. 
\end{align*}
An analogous calculation can be conducted to show that
\begin{equation*}
    \frac{\Pr\bc{\Bin(\vec k[i;i+j-1] \Delta/s,\frac{\ell}{m}) = r\mid \vk, \cE}}{P_r(d^*j/s) } = 1+O\bc{\frac{\log^2 n}{\sqrt{k}}} \, .
\end{equation*}
Additionally, with an eye on the quantities $\vec p_{1,j,i}^{r,\pm}$, for any fixed $t'$,
\begin{align*}
    \frac{\Pr\bc{\Bin(\vec k[i;i+j-1] \Delta/s,\frac{\ell}{m}) = t'\mid \vk, \cE}}{\Pr\bc{\Bin(\vec k[i;i+j-1] \Delta/s -1,\frac{\ell}{m}) = t'\mid \vk, \cE} } = 1 + O(t'/k),
\end{align*}
which gives
\begin{align*}
    \frac{\Pr\bc{\Bin(\vec k[i-(s-j);i-1] \Delta/s - 1,\frac{\ell}{m}) = r\mid \vk, \cE}}{P_r(d^*(s-j)) } = 1 + O\bc{\frac{\log^2 n}{\sqrt{k}}},
\end{align*}
and
\begin{equation*}
    \frac{\Pr\bc{\Bin(\vec k[i;i+j-1] \Delta/s - 1,\frac{\ell}{m}) = r\mid \vk,\cE}}{P_r(d^*j/s) } = 1+O\bc{\frac{\log^2 n}{\sqrt{k}}} \, .
\end{equation*}

Now that we know the difference between the point probabilities, we next determine the difference between the cumulative distribution function and the tail distribution function of the two distributions. For some constant $t'$, we compute
\begin{align*}
&  \frac{\Pr\bc{\Bin(\vec k[i;i+j-1] \Delta/s,\frac{\ell}{m}) \geq t' \mid \vk, \cE}}{P_{\geq t'}(d^*j/s) } = \frac{1-\Pr\bc{\Bin(\vec k[i;i+j-1] \Delta/s,\frac{\ell}{m}) < t' \mid \vk, \cE}}{ 1 - P_{<t'}(d^*j/s) } \\
 =& \frac{1-\bc{1+O\bc{\frac{\log^2 n}{\sqrt{k}}}}P_{<t'}(d^*j/s) }{ 1 - P_{<t'}(d^*j/s) } 
 = 1 + \frac{O\bc{\frac{\log^2 n}{\sqrt{k}}}P_{<t'}(d^*j/s)}{ 1 - P_{<t'}(d^*j/s) } \\
 =& 1 + O\bc{\frac{\log^2 n}{\sqrt{k}}}\frac{1}{ 1 - P_{<t'}(d^*j/s) } 
 = 1 + O\bc{\frac{\log^2 n}{\sqrt{k}}}\frac{1}{ \sum_{i=t'}^\infty P_i(d^*j/s) } \\
 =& 1 + O\bc{\frac{\log^2 n}{\sqrt{k}}}\frac{1}{ \Theta(1) \bc{\frac{d^*}{s}}^{t'} } = 1 + O\bc{\frac{\log^3 n}{\sqrt{k}}}.
\end{align*}
In the penultimate step, we have used that $P_{\geq t'}(d^*j/s) \geq P_{t'}(d^*j/s) \geq \eul^{-d^*} \bc{\frac{d^*}{s}}^{t'}/(t')!$ for all $j \in [s]$ and that $d^*/s$ is of order $O(1/\log\log n)$.
On the other hand,
\begin{align*}
\frac{\Pr\bc{\Bin(\vec k[i;i+j-1] \Delta/s,\frac{\ell}{m}) \leq t' \vert \vk, \cE}}{\Pr\bc{\Po(d^*j/s) \leq t'} } &= \frac{\sum_{i=0}^{t'} \Pr\bc{\Bin(\vec k[i;i+j-1] \Delta/s,\frac{\ell}{m}) = i \vert \vk, \cE}}{\sum_{i=0}^{t'} \Pr\bc{\Po(d^*j/s) = i}} \\
&= \frac{\bc{1+O\bc{\frac{\log^2 n}{\sqrt{k}}}}\sum_{i=0}^{t'} \Pr\bc{\Po(d^*j/s) = i}}{\sum_{i=0}^{t'} \Pr\bc{\Po(d^*j/s) = i}} \\
&= 1+O\bc{\frac{\log^2 n}{\sqrt{k}}}.
\end{align*}
Combining the calculations gives all cases in Claim~\ref{claim_distpq}.
\end{proof}

\section{The number of pivotal tests (Proof of \Cref{Lemma_m0_spatial})} 
\label{appendix-C}
In this section, we investigate the number of pivotal tests, and prove \Cref{Lemma_m0_spatial}.
We recall that, in the cleaning phase, we require that the number of tests that have exactly $t-1$ edges connecting to defective items is well concentrated in {\em every} compartment.
Let $\vm_{r}^{(i)}$ denote the number of tests in compartment $i$ with exactly $r$ defectives. Recall that \Cref{Lemma_m0_spatial} investigates concentration properties of $\vm_{r}^{(i)}$ for different $i$ and $r$.

We start by adapting the setting slightly, so as to increase the level of independence. Indeed, recall that, in the spatially coupled test design, the numbers of of defective items in distinct tests are correlated. Following the approach of \cite[Appendix B]{aco_2019}, we work with a family of {\em independent} random variables in the proof, as we explain first.
 
Let  $\vGamma = \bc{\vGamma_j}_{j \in [m]}$ be the test degree sequence, which are weakly-dependent hypergeometric random variables. We define $(\vX_j)_{j \in [m]}$ to be a sequence of {\em independent binomial} random variables, where $\vX_j\sim\Bin(\vGamma_j, k/n)$. In addition, introduce the event 
\begin{align*}
	\cF&=\cbc{\sum_{j\in[m]}\vX_j=k\Delta}.
\end{align*}
By mutual independence of the $\vX_j$'s, and the local limit theorem for the binomial distribution, we obtain that
\begin{align}\label{eqEprob}
	\pr\bc{\cF \mid \vGamma} = \pr\bc{\Bin\bc{n\Delta,k/n}=k\Delta} =\Omega(1/\sqrt{\Delta k}).
\end{align}
Further, let $\vY_j$ denote the number of edges in test $a_j$ that are incident to defective items. 

\begin{lemma}[{\cite[Lemma B.2]{aco_2019}}]\label{Lemma_Elemma}
    Let $(\vY_1, \dots, \vY_m)$ and $(\vX_1, \dots, \vX_m)$ be as defined above. Then, for any integer sequence $(y_j)_{j\in[m]}$ with $0\leq y_j \leq \vGamma_j$ and $\sum_{j\in[m]}y_j=k\Delta$, 
$$
\Pr\Big(\forall j \in [m]\colon \vY_j = y_j \;\big|\; \vGamma\Big)
=
\Pr\Big(\forall j \in [m]\colon  \vX_j = y_j \;\big|\; \vGamma, \mathcal{F}\Big).
$$
\end{lemma}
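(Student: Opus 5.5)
The plan is to prove the identity by comparing both sides to the same combinatorial count. The statement is the standard Poissonisation/conditioning identity of \cite[Lemma B.2]{aco_2019}. We work throughout conditionally on the degree sequence $\vGamma$.

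\emph{Left-hand side.} First we make the law of $(\vY_j)_{j\in[m]}$ explicit. Given $\vGamma$, the multigraph is obtained by matching the $n\Delta$ item half-edges to the $\sum_j\vGamma_j=n\Delta$ test half-edges. The defective set is uniform of size $k$, independent of the design. By symmetry, the $k\Delta$ defective half-edges are therefore a uniformly random subset of size $k\Delta$ among all $n\Delta$ test half-edges. This follows since relabelling items by a uniform permutation maps the configuration to itself in law. Hence
\[
\Pr\bc{\forall j:\vY_j=y_j\mid\vGamma}=\prod_{j\in[m]}\binom{\vGamma_j}{y_j}\Big/\binom{n\Delta}{k\Delta}.
\]
This is a multivariate hypergeometric law.

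\emph{Right-hand side.} By independence of the $\vX_j\sim\Bin(\vGamma_j,k/n)$,
\[
\Pr\bc{\forall j:\vX_j=y_j\mid\vGamma}=\prod_j\binom{\vGamma_j}{y_j}(k/n)^{y_j}(1-k/n)^{\vGamma_j-y_j}=(k/n)^{k\Delta}(1-k/n)^{n\Delta-k\Delta}\prod_j\binom{\vGamma_j}{y_j},
\]
using $\sum_j y_j=k\Delta$ and $\sum_j\vGamma_j=n\Delta$. Dividing by $\Pr(\cF\mid\vGamma)=\binom{n\Delta}{k\Delta}(k/n)^{k\Delta}(1-k/n)^{n\Delta-k\Delta}$, as in \eqref{eqEprob}, the powers cancel. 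What remains is exactly the hypergeometric expression above, which gives the claim. Note that the event $\{\vX=y\}$ is contained in $\cF$, so the conditional probability is just this ratio.

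\emph{Main obstacle.} The only delicate step is justifying the uniformity of the defective half-edge set in the spatially coupled design with multi-edges. Items choose tests within fixed compartments, so half-edges are not exchangeable across compartments. The paper asserts the lemma globally, but uniformity of the defective set only holds within each item compartment. The fix is to apply the argument compartment-wise: condition additionally on $\vec k$ and treat each item compartment's half-edges separately. Alternatively, we can interpret $\vGamma_j$ and $\vX_j$ per compartment contribution $\vGamma_j^{(i)}$, in which case the product structure and cancellation go through identically, with $\cF$ replaced by the per-compartment constraints. We would first verify the global version under a uniform relabelling of all items. If that fails because of the compartment structure, we fall back to the per-compartment version, which suffices for the concentration statement of \Cref{Lemma_m0_spatial}.
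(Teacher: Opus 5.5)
Your treatment of the right-hand side is exactly the standard argument behind \cite[Lemma B.2]{aco_2019}, which the paper cites without proof. That argument uses independence, $\{\forall j\colon \vX_j=y_j\}\subseteq\cF$, and $\sum_j\vX_j\sim\Bin(n\Delta,k/n)$ given $\vGamma$, and cancels down to $\prod_j\binom{\vGamma_j}{y_j}/\binom{n\Delta}{k\Delta}$.

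The justification you give for the left-hand side does not work, however. Invariance of the design under relabelling items only makes $V_1$ a uniform $k$-set of items. Its $k\Delta$ half-edges come in blocks of $\Delta$, and item-exchangeability alone does not make their images a uniform $k\Delta$-subset. For example, in the item-exchangeable design where every item picks one uniform test and joins it $\Delta$ times, every $\vY_j$ is a multiple of $\Delta$. What you need is exchangeability of item \emph{half-edges}, which your first sentence mentions but you never use. Each half-edge picks its test independently and uniformly, with replacement. Hence, conditionally on $\vGamma$, the assignment is uniform among all assignments with test degrees $\vGamma$, and the image of \emph{any fixed} set of $k\Delta$ half-edges is a uniform subset. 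So the identity holds for every fixed $\SIGMA$. The randomness of the ground truth plays no role, and sampling with replacement is what makes the identity exact.

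Second, you should not hedge on the global version: for $\vec G_{\mathrm{sc}}$ it is false. The half-edges entering $F[j]$ come only from $V[j-s+1],\dots,V[j]$, and each such item contributes exactly $\Delta/s$ of them. So the left-hand side vanishes unless every block sum $\sum_{a\in F[j]}y_a$ equals $\frac{\Delta}{s}\vk[j-s+1;j]$, and in particular is a multiple of $\Delta/s$. The right-hand side, by contrast, is positive for every admissible $y$.

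The true statement, and the one actually invoked in the proof of \Cref{Lemma_m0_spatial}, is per test compartment. All $\sum_{a\in F[j]}\vGamma_a$ half-edges entering $F[j]$ choose their test i.i.d.\ uniformly in $F[j]$. So, conditionally on $(\vGamma_a)_{a\in F[j]}$ and $\tilde\vk=\vk[j-s+1;j]$, the $\tilde\vk\Delta/s$ defective half-edges occupy a uniform subset. Your cancellation then goes through verbatim with $\cF_j$ in place of $\cF$ and any success probability $p$, in particular $p=\tilde\vk\ell/(sn)$.

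The natural unit is therefore a test compartment together with its window of $s$ item compartments. It is not a single item compartment, whose half-edges are split deterministically ($\Delta/s$ per item) over $s$ test compartments. Your alternative via the pair degrees $\vGamma_a^{(i)}$ is also true, but it conditions on more than \Cref{Lemma_m0_spatial} uses.
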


\begin{proof}[Proof of Lemma~\ref{Lemma_m0_spatial}]
Fix a compartment $j\in[\ell]$. Observe that to estimate $\vm_{t-1}^{(j)}$, it is sufficient to consider the sub-graph $(V[j-s+1]\cup \ldots \cup V[j], F[j])$. This sub-graph forms a constant-column design with item-degrees $\tilde \Delta = \Delta/s$ and $\tilde m=m/\ell$ tests. The only small complication is that the total number of defectives in this sub-graph is {\em random}.

Let $\vk$ be the vector of the numbers of defective items per compartment (see \eqref{def_vk}). We condition on $\tilde \vk:= \vk[j-s+1:j]$ throughout. As above, for $a \in F[j]$, let $\vY_a$ be the number of edges of $a$ that connect to a defective item. Let $(\vX_a)_{a \in F[j]}$ be a sequence of independent random variables, where $\vX_a\sim\Bin(\vGamma_a, \tilde \vk/(sn/\ell))$. In addition, introduce the event
\begin{align*}
	\cF_{j}&=\cbc{\sum_{a\in F[j]}\vX_a=\tilde \vk\Delta/s}.
\end{align*}
As in \eqref{eqEprob}, by mutual conditional independence of the $\vX_j$'s, and the local limit theorem for the binomial distribution, we obtain that for $\tilde\vk = O\bc{sk/\ell}$,
\begin{align}\label{eqEprob_j}
	\pr\bc{\cF_j \mid \vGamma, \tilde\vk} = \pr\bc{\Bin\bc{\frac{n\Delta}{\ell}, \tilde \vk/(sn/\ell) }=\tilde\vk\Delta/s} =\Omega\bc{\sqrt{\frac{\ell}{\Delta k}}}.
\end{align}
By Lemma~\ref{Lemma_Elemma}, conditioned on $(\vGamma_a)_{a \in F[j]}$ and $\tilde \vk$, the vector $(\vY_a)_{a\in F[j]}$ has the same distribution as $(\vX_a)_{a\in F[j]}$ conditioned on  $\cF_{j}$.
Moreover,
\begin{align*}
\vm_{r}^{(j)} &= \sum_{a\in F[j]} \vecone\cbc{\vY_a=r},\qquad \text{and set} \quad \vm_{r}^{(j)\prime}= \sum_{a\in F[j]} \vecone\cbc{\vX_a=r}.
\end{align*}
By the distributional equivalence discussed above,
$$ \vm_{r}^{(j)}\mid (\tilde\vk, (\vGamma_a)_{a \in F[j]}) \stackrel{d}{=} \vm_{r}^{(j)\prime}\mid (\tilde \vk,(\vGamma_a)_{a \in F[j]},\cF_{j}). $$
Let $\vGamma$ satisfy \eqref{eq_comp_minmax_G},\eqref{eq_minmax_G} and $\tilde\vk$ satisfy 
$$ \tilde\vk = \bc{1+n^{-\Omega(1)}}\frac{ks}{\ell}. $$
By \Cref{Lemma_GammaMinMax} these properties, hold with probability $1-o(n^{-2})$.
For such $\tilde\vk, \vGamma$,
\begin{align} 
\pr\bc{\vX_a=r \mid \tilde\vk,\vGamma} &= \binom{\vGamma_a}{r} \bcfr{k}{n}^{r} \bc{1-k/n}^{\vGamma_a-r}
\bc{1+n^{-\Omega(1)}}.
\label{Eq:LocalBinApprox}
\end{align} 
Recall the shorthand $P_r(d) = \frac{d^{r}}{r!}\exp(-d)$. Again for $\vGamma$, $\tilde \vk$ as above, 
	\begin{align}
		\Erw[\vm_{r}^{(j) \prime} \mid \vGamma, \tilde \vk]&=\sum_{a \in F[j]} \binom{\vGamma_a}{r} \bcfr{\tilde \vk \ell}{ns}^{r}\bc{1-\frac{\tilde\vk \ell}{ns}}^{\vGamma_a-r} \leq \frac{m}{\ell} \binom{\vGamma_{\max}^{j}}{r}\bcfr{\tilde \vk \ell}{ns}^{r}\bc{1-\frac{\tilde\vk \ell}{ns}}^{\vGamma_{\min}^{j}-r} \nonumber\\
        &= \bc{1+n^{-\Omega(1)} } \frac{m}{\ell}  P_r(d) . \label{Eq:ExpW1U}	\end{align}
	Analogously,
	\begin{align}
		\Erw[\vm_{r}^{(j) \prime} \mid \vGamma, \tilde \vk ]&\geq \frac{m}{\ell}\binom{\vGamma_{\min}^{j}}{r}\bcfr{\tilde \vk \ell}{ns}^{r}\bc{1-\frac{\tilde\vk \ell}{ns}}^{\vGamma_{\max}^{j}-r} 
		=\bc{1+n^{-\Omega(1)}}  \frac{m}{\ell}  P_r(d) . \label{Eq:ExpW1L}
	\end{align}
    We conclude from \eqref{Eq:ExpW1U} and \eqref{Eq:ExpW1L} that, on the event $\cE$,
$$
\Erw[\vm_{r}^{(j) \prime} \mid \vGamma, \tilde \vk ]
= \bc{1+n^{-\Omega(1)}} \bc{1+n^{-\Omega(1)}}  \frac{m}{\ell}  P_r(d) .
$$
        The Chernoff bound on the Poisson-binomial random variable $\vm_{r}^{(j) \prime}$ then gives, again on the event $\cE$, 
	\begin{align}\label{Eq:PrW1}
	    \pr\bc{\left.\vm_{r}^{(j) \prime} \in \brk{\bc{1\pm n^{-\Omega(1)}}\frac{m}{\ell}P_r(d) }~\right|~\vGamma, \tilde\vk}=1-o(n^{-9}).
	\end{align}
    We now use the fact that $\vm_{r}^{(j) \prime}$, given $ \vGamma, \tilde\vk$ and $\cF_{j}$, is equal in distribution to $\vm_{r}^{(j)}$ given $\vGamma$ and $\vk$, to lower bound the probability that $\vm_{r}^{(j)} $ is close to its expectation as
    \begin{align}
        \pr&\bc{\vm_{r}^{(j)}\notin \left. \brk{\bc{1\pm n^{-\Omega(1)}}\frac{m}{\ell}P_r(d)  }~\right|~\vGamma, \tilde\vk} 
        = \pr\bc{ \vm_{r}^{(j) \prime}\notin \left. \brk{\bc{1\pm n^{-\Omega(1)}} \frac{m}{\ell}P_r(d) }~\right|~\vGamma, \tilde\vk, \cF_{j}} \nonumber \\ 
        &= \frac{\pr\bc{\vm_{r}^{(j) \prime} \not\in \left.\brk{\bc{1 \pm n^{-\Omega(1)}}\frac{m}{\ell}P_r(d)  }, \cF_{j}~\right|~ \vGamma, \tilde\vk}}{\Pr\bc{\cF_{j} \mid \vGamma, \tilde\vk}} \notag \\
        & \leq  \frac{\pr\bc{\vm_{r}^{(j) \prime} \not\in \left. \brk{\bc{1 \pm n^{-\Omega(1)}}\frac{m}{\ell}P_r(d)  }~\right|~\vGamma, \tilde\vk}}{\Pr\bc{\cF_{j} \mid \vGamma, \tilde\vk}} = o\bc{n^{-7}}, \label{Eq:CloseExp}
    \end{align}
   with probability $1-o(n^{-7})$, which follows from \eqref{Eq:PrW1} and \eqref{eqEprob_j}. Taking a union bound over all compartments now gives the claim.
\end{proof}

\end{appendix}

\end{document}